\newif\ifprint
\newif\ifjamemo
\newif\ifreview
\crefname{lemma}{Lemma}{Lemmas}
\crefname{theorem}{Theorem}{Theorems}
\crefname{proposition}{Proposition}{Propositions}
\crefname{definition}{Definition}{Definitions}
\crefname{example}{Example}{Examples}
\crefname{corollary}{Corollary}{Corollaries}
\crefname{equation}{Equation}{Equations}
\crefname{subsubsection}{Section}{Sections}
\crefname{subsection}{Section}{Sections}
\crefname{section}{Section}{Sections}
\crefname{algorithm}{Algorithm}{Algorithms}
\crefname{ALC@unique}{Line}{Lines}
\crefname{figure}{Figure}{Figures}
\crefname{table}{Table}{Tables}
\newenvironment{calgorithmic}[1][1]{\begin{algorithmic}[#1]\setcounter{ALC@unique}{0}}{\end{algorithmic}}
\newcommand{\pcase}[2]{\left\{\begin{array}{#1}#2\end{array}\right.} %
\newtheorem{definition}{Definition}
\newtheorem{theorem}{Theorem}
\newtheorem{lemma}{Lemma}
\newtheorem{example}{Example}
\newtheorem{corollary}{Corollary}
\def\squarebox#1{\hbox to #1{\hfill\vbox to #1{\vfill}}}
\def\qed{\hspace*{\fill}%
        \vbox{\hrule\hbox{\vrule\squarebox{.667em}\vrule}\hrule}\smallskip}
\newenvironment{proof}{\begin{trivlist}%
\item[\hspace{\labelsep}{\em\noindent Proof.~}]}{\qed\end{trivlist}}
\crefname{appendix}{Appendix}{Appendices}
\title{Refined Computational Complexities of Hospitals/Residents Problem with Regional Caps \footnote{Supported by JSPS KAKENHI Grant Number JP20K11677}}
\date{July 7, 2021}
\author{
Koki Hamada
\footnote{NTT Corporation, 3-9-11, Midori-cho, Musashino-shi, Tokyo 180-8585, Japan,
E-mail: koki.hamada.rb@hco.ntt.co.jp}
\footnote{Graduate School of Informatics, Kyoto University, Yoshida-Honmachi, Sakyo-ku Kyoto 606-8501, Japan}
\and Shuichi Miyazaki
\footnote{Academic Center for Computing and Media Studies, Kyoto University, Yoshida-Honmachi, Sakyo-ku, Kyoto 606-8501, Japan,
E-mail: shuichi@media.kyoto-u.ac.jp}
}
\begin{document}
\maketitle

\begin{abstract}
The Hospitals/Residents problem (HR) is a many-to-one matching problem whose solution concept is stability. 
It is widely used in assignment systems such as assigning medical students (residents) to hospitals.
To resolve imbalance in the number of residents assigned to hospitals, an extension called HR with regional caps (HRRC) was introduced.
In this problem, a positive integer (called a regional cap) is associated with a subset of hospitals (called a region),
and the total number of residents assigned to hospitals in a region must be at most its regional cap. 
Kamada and Kojima \cite{KK10,KK15} defined strong stability for HRRC and demonstrated that a strongly stable matching does not necessarily exist.
Recently, Aziz et al.~\cite{ABB20} proved that the problem of determining if a strongly stable matching exists is NP-complete in general.
In this paper, we refine Aziz et al.'s result by investigating the computational complexity of the problem
in terms of the length of preference lists, the size of regions, and whether or not regions can overlap, and completely classify tractable and intractable cases.
\end{abstract}

 \section{Introduction }
An instance of the {\em Hospitals/Residents problem{}} (\textsc{HR}{} for short) is a many-to-one generalization
of the stable marriage problem.
Its instance consists of a set of {\em residents} and a set of {\em hospitals},
each of whom has a {\em preference list} that strictly orders a subset of the members of the other side.
Furthermore, each hospital has a {\em capacity}, which specifies the maximum number of residents it can accommodate.
A {\em matching} is an assignment of residents to hospitals in which the number of residents assigned to each hospital does not exceed its capacity.
A {\em blocking pair} of a matching $M$ is a pair of a resident $r$ and a hospital $h$, where assigning $r$ to $h$ improves the situation of both  $r$ and $h$.
A matching that admits no blocking pair is a {\em stable matching}.
Gale and Shapley \cite{GS62} showed that a stable matching exists in any \textsc{HR}{} instance, and proposed a polynomial-time algorithm to find one, which is known as the {\em Gale-Shapley algorithm}.

\textsc{HR}{} is quite popular in practical assignment systems, such as assigning students to high schools \cite{APR05,APRS05} or assigning residents to hospitals \cite{Rot84}.
One of the major issues in the residents-hospitals assignment is imbalance in the number of residents assigned to hospitals.
For example, hospitals in urban areas are generally more popular and hence likely to be assigned more residents than those in rural areas.
In fact, it is reported that rural hospitals in Japan suffer from shortage of doctors \cite{KK10,KK12}.
One of the solutions to resolve this problem might be to find a stable matching that assigns more residents to rural hospitals than one found by a frequently used algorithm, such as the Gale-Shapley algorithm.
However, this approach fails since the number of residents assigned to each hospital is invariant over stable matchings, due to the famous {\em Rural Hospital theorem} \cite{GS85,Rot84,Rot86}.

There have been some attempts to resolve this problem by reformulating \textsc{HR}{}.
One is to set a {\em lower quota} to each hospital, which denotes a minimum number of residents required by that hospital.
It is easy to see that, by the Rural Hospitals theorem, there may exist no stable matching that satisfies all the lower quotas.
In some model, a hospital is allowed to be closed if the number of residents cannot reach its minimum quota \cite{BFIM10}, or in another model, it is mandatory to satisfy all the lower quotas, while stability is considered as a soft constraint \cite{HIM16}.
Another formulation is to set a {\em regional cap} to a region consisting of some hospitals, and require the total number of residents assigned to hospitals in the region to be at most its regional cap \cite{BFIM10, KK10}.
We denote this problem the {\em HR with Regional Caps{}} (\textsc{HRRC}{} for short).
We can expect to reduce imbalance in the number of residents by carefully setting the regional caps to urban areas.
However, again due to the Rural Hospital theorem, a stable matching satisfying all the regional caps may not exist.

Kamada and Kojima \cite{KK10,KK15} defined several stability notions for \textsc{HRRC}{}; among them, the most natural one is the {\em strong stability} \cite{KK15}.
They presented an \textsc{HRRC}{}-instance that admits no strongly stable matching, but the computational complexity of the problem of determining the existence of a strongly stable matching had been open up to recently.
In 2020, Aziz et al.~\cite{ABB20} resolved this problem by showing that it is NP-complete.
This hardness holds even if any two regions are disjoint (in other words, each hospital is contained in at most one region) and the size of each region is at most two (i.e., each region contains at most two hospitals).
In their reduction, the length of preference lists are unbounded on both sides, hence one of possible next steps might be to investigate if the problem becomes tractable when it is bounded.

\subsection{Our Contributions}
In this paper, we refine Aziz et al.'s work \cite{ABB20} by investigating the computational complexity of the problem in various restrictions.
Let \textsc{Strong-$(\alpha,\beta,\gamma)$-HRRC} denote the problem of determining if a strongly stable matching exists in an \textsc{HRRC}{} instance where the length of a preference list of each resident is at most $\alpha$, the length of a preference list of each hospital is at most $\beta$, and the size of each region is at most $\gamma$.
We use $\infty$ to mean that the corresponding parameter is unbounded.
Next, let \textsc{HRRCDR}{} be the restriction of \textsc{HRRC}{} where any two regions are disjoint, and \textsc{Strong-$(\alpha,\beta,\gamma)$-HRRCDR}{} be defined analogously as for \textsc{HRRC}{}.
Note then that the above mentioned hardness result by Aziz et al.~\cite{ABB20} can be rephrased as NP-completeness of \textsc{Strong-$(\infty,\infty,2)$-HRRCDR}{}.

In this paper, we completely determine the computational complexity of the problem in terms of the restrictions mentioned above.
Our results are summarized as follows.
First we give results for \textsc{HRRC}{}.
\begin{itemize}
 \item  \textsc{Strong-$(\infty,\infty,1)$-HRRC} is in P. (\cref{L0})
 \item  \textsc{Strong-$(1,\infty,\infty)$-HRRC} is in P. (\cref{L1})
 \item  \textsc{Strong-$(\infty,1,\infty)$-HRRC} is in P. (\cref{L2})
 \item  \textsc{Strong-$(2,2,2)$-HRRC} is NP-complete. (\cref{L3})
\end{itemize}
Hence the problem is tractable if at least one parameter is 1,
while it is intractable even if all the parameters are 2.
We remark that, for all the positive results
(\cref{L0,L1,L2}),
we have also shown that a strongly stable matching always exists.
Next, we give results for \textsc{HRRCDR}{}.
\begin{itemize}
 \item \textsc{Strong-$(\infty,\infty,1)$-HRRCDR}{} is in P. (\cref{L4})
 \item \textsc{Strong-$(1,\infty,\infty)$-HRRCDR}{} is in P. (\cref{L5})
 \item \textsc{Strong-$(\infty,1,\infty)$-HRRCDR}{} is in P. (\cref{L6})
 \item \textsc{Strong-$(2,2,2)$-HRRCDR}{} is in P. (\cref{L7})
 \item \textsc{Strong-$(2,2,3)$-HRRCDR}{} is NP-complete. (\cref{L8})
 \item \textsc{Strong-$(2,3,2)$-HRRCDR}{} is NP-complete. (\cref{L9})
 \item \textsc{Strong-$(3,2,2)$-HRRCDR}{} is NP-complete. (\cref{L10})
\end{itemize}
\cref{L4,L5,L6}
are immediate from
\cref{L0,L1,L2},
respectively, since \textsc{HRRCDR}{} is a special case of \textsc{HRRC}{}.
In contrast to \textsc{HRRC}{}, the problem is tractable if all the parameters are 2, but it becomes intractable if one parameter is raised to 3.
Note that
\cref{L10,L9}
are strengthening of Aziz et al.'s NP-completeness \cite{ABB20}.

  \subsection{Related Work}
Besides strong stability, Kamada and Kojima \cite{KK10,KK15} defined {\em weak stability} and {\em stability} in \textsc{HRRC}{}.
Weak stability is weaker than strong stability and stability lies between them.
They showed that there always exists a weakly stable matching \cite{KK17}, but the proof is not constructive.
For stability, they showed that a stable matching always exists and can be found by a strategyproof algorithm when the set of regions forms a laminar structure \cite{KK18}, i.e, for any two regions, either they are disjoint or one is contained in the other.

Bir{\'{o}} et al.~\cite{BFIM10} proposed a restriction of \textsc{HRRC}{},
where each region has a common preference list
and a preference list of any hospital in the region must be a result of deleting some residents from the common preference list.
They proved that the problem of determining existence of a stable matching is NP-complete even in a very restrictive case.
They also showed that the problem can be solved in polynomial time if the set of regions is laminar.

In Goto et al.~\cite{GIK+16}'s model,
preference lists are complete and each region has a lower quota
as well as an upper quota.
They presented two strategyproof algorithms
to find a fair matching when the set of regions is laminar.
Here, fairness is a weakened notion of the stability in \textsc{HR}{},
where certain types of blocking pairs are allowed to exist.

 \section{Preliminaries}

  \subsection{Definitions for Stable Matching Problems }

{

\newcommand{\hq}[1]{q\hifempty{#1}{}{(#1)}} %

\newcommand{\hc}[1]{c\hifempty{#1}{}{(#1)}} %

\newcommand{\hpreflist}[2]{
\begin{tabular}{rccccccccccccccccccccccc}
 #1
\end{tabular}
\hspace{1cm}
\begin{tabular}{rccccccccccccccccccccccc}
 #2
\end{tabular}
}
   An instance $I = (R, H, \succ_{}, \hq{})$ of {\em Hospitals/Residents problem{}} (or {\em \textsc{HR}{}} for short) consists of
   a set $R$ of {\em residents},
   a set $H$ of {\em hospitals},
   a collection $\succ_{}$ of {\em preference lists},
   and
   a mapping $\hq{}: H\to\mathbb{N}_0$ that represents {\em capacities} of hospitals,
   where $\mathbb{N}_0$ is the set of all nonnegative integers.
   The {\em agents} in $I$ are
   the residents and hospitals in $R \cup H$. %
   Each agent $a \in R \cup H$ has a strictly ordered {\em preference list},
   denoted by $\succ_{a}$,
   over a subset of the members of the opposite set.
   If an agent $a$ prefers $x$ to $y$, we write $x \succ_{a} y$.
   If an agent $x$ is included in an agent $a$'s preference list,
   we say that $x$ is {\em acceptable} to $a$. %
   If $r\in R$ and $h\in H$ are acceptable to each other,
   $(r,h)$ is said to be an {\em acceptable pair}. %
   We assume that acceptability is mutual, that is,
   $r$ is acceptable to $h$ if and only if $h$ is acceptable to $r$. %
   Each hospital $h\in H$ has a nonnegative integral {\em capacity},
   denoted by $\hq{h}$.

   An {\em assignment} $M$ is a subset of $R\times H$. %
   If $(r,h) \in M$, $r$ is said to be {\em assigned to} $h$,
   and $h$ is {\em assigned} $r$. %
   For each $a \in R\cup H$,
   the set of assignees of $a$ in $M$ is denoted by $M(a)$. %
   For $S \subseteq H$, we write
   $M(S) := \bigcup_{h\in S}M(h)$.
   For $r\in R$, if $M(r) = \emptyset$,
   $r$ is said to be {\em unassigned}, otherwise $r$ is {\em assigned}. %
   A hospital $h\in H$ is
   {\em undersubscribed}, or {\em full} according as
   $|M(h)|$ is less than, or equal to $\hq{h}$, respectively. %
   A hospital $h\in H$ is said to be {\em empty} if $M(h)=\emptyset$.
   For notational convenience, given an assignment $M$
   and a resident $r\in R$ such that
   $|M(r)| = 1$,
   we also use $M(r)$ to refer to the unique hospital
   when there is no ambiguity. %
   A {\em matching} $M$ is an assignment
   that satisfies all of the following conditions: %
   (i) for any $(r,h)\in M$, $(r,h)$ is an acceptable pair;
   (ii) for any $r\in R$, $|M(r)|\le 1$;
   (iii) for any $h\in H$, $|M(h)|\le \hq{h}$.

}

\begin{definition}
 \label{L11}

\newcommand{\hq}[1]{q\hifempty{#1}{}{(#1)}} %

\newcommand{\hc}[1]{c\hifempty{#1}{}{(#1)}} %

\newcommand{\hpreflist}[2]{
\begin{tabular}{rccccccccccccccccccccccc}
 #1
\end{tabular}
\hspace{1cm}
\begin{tabular}{rccccccccccccccccccccccc}
 #2
\end{tabular}
}
 Let $I = (R, H,\succ_{},\hq{})$ be an instance of \textsc{HR}{} and
 $M$ be a matching of $I$.
 A pair $(r,h) \in R\times H$ is a {\em blocking pair{}} (or {\em BP{}{}} for short) for $M$
 (or $(r,h)$ {\em blocks} $M$)
 if all of the following conditions are satisfied:
 (i) $(r,h)$ is an acceptable pair;
 (ii) either $r$ is unassigned,
 or $r$ prefers $h$ to $M(r)$;
 (iii) either $h$ is undersubscribed,
 or $h$ prefers $r$ to at least one member of $M(h)$.
 If a matching $M$
 admits no BP{}{}, we say that $M$ is {\em stable}.
\end{definition}

{

\newcommand{\hq}[1]{q\hifempty{#1}{}{(#1)}} %

\newcommand{\hc}[1]{c\hifempty{#1}{}{(#1)}} %

\newcommand{\hpreflist}[2]{
\begin{tabular}{rccccccccccccccccccccccc}
 #1
\end{tabular}
\hspace{1cm}
\begin{tabular}{rccccccccccccccccccccccc}
 #2
\end{tabular}
}
 Gale and Shapley showed that every instance of \textsc{HR}{} admits
 a stable matching, and
 proposed
 an $O( |H| |R| )$-time
 algorithm for finding one \cite{GS62}.
 The algorithm is known as the {\em resident-oriented Gale-Shapley{} algorithm} (or {\em RGS{} algorithm} for short).
}

{

\newcommand{\hq}[1]{q\hifempty{#1}{}{(#1)}} %

\newcommand{\hc}[1]{c\hifempty{#1}{}{(#1)}} %

\newcommand{\hpreflist}[2]{
\begin{tabular}{rccccccccccccccccccccccc}
 #1
\end{tabular}
\hspace{1cm}
\begin{tabular}{rccccccccccccccccccccccc}
 #2
\end{tabular}
}
   An instance of
   {\em Hospitals/Residents problem{} with Regional Caps{}}
 (or {\em \textsc{HRRC}{}} for short)
is $I = (R, H, \succ_{}, \hq{}, \mathcal{E}, \hc{})$ where
   $R$, $H$, $\succ_{}$, $\hq{}$ are as before.
   A {\em region} is a non-empty subset of $H$,
   and
   $\mathcal{E} \subseteq 2^H\setminus\emptyset$ is a set of regions.
   Each region $E\in \mathcal{E}$ has a nonnegative integral {\em regional cap},
   denoted by $\hc{E}$.
For an assignment $M$, a region $E\in \mathcal{E}$ is
   {\em deficient{}}, or {\em full{}} according as
   $|M(E)|$ is less than, or equal to $\hc{E}$, respectively.
}

\begin{definition}
 \label{L12}

\newcommand{\hq}[1]{q\hifempty{#1}{}{(#1)}} %

\newcommand{\hc}[1]{c\hifempty{#1}{}{(#1)}} %

\newcommand{\hpreflist}[2]{
\begin{tabular}{rccccccccccccccccccccccc}
 #1
\end{tabular}
\hspace{1cm}
\begin{tabular}{rccccccccccccccccccccccc}
 #2
\end{tabular}
}
 Let $I = (R,H,\succ_{},\hq{},\mathcal{E},\hc{})$ be an instance of \textsc{HRRC}{}
 and $M$ be an assignment of $I$.
 If $|M(E)| \le c(E)$ for all $E \in \mathcal{E}$,
 $M$ is said to be {\em feasible{}{}}, otherwise $M$ is {\em infeasible{}{}}.
\end{definition}

\begin{definition}
 \label{L13}

\newcommand{\hq}[1]{q\hifempty{#1}{}{(#1)}} %

\newcommand{\hc}[1]{c\hifempty{#1}{}{(#1)}} %

\newcommand{\hpreflist}[2]{
\begin{tabular}{rccccccccccccccccccccccc}
 #1
\end{tabular}
\hspace{1cm}
\begin{tabular}{rccccccccccccccccccccccc}
 #2
\end{tabular}
}
 Let $I = (R,H,\succ_{},\hq{},\mathcal{E},\hc{})$ be an instance of \textsc{HRRC}{}.
 A BP{}{} $(r,h)$ for a matching $M$
 is a {\em strong blocking pair{}} (or {\em SBP{}{}} for short)
 for $M$
 if $(r,h)$ satisfies at least one of the following conditions:
 (i) $M \setminus \set{ (r, M(r)) } \cup \set{ (r,h) }$ is feasible{}{};
 (ii) there exists $r'\in M(h)$ such that $r \succ_{h} r'$.
 If a feasible{}{} matching $M$
 admits no SBP{}{}, we say that $M$ is {\em strongly stable}.
\end{definition}
Note that
if $r \in R$ is unassigned in $M$,
$M \setminus\set{(r,M(r))} = M$
since $(r,M(r)) \not\in M$.

{

\newcommand{\hq}[1]{q\hifempty{#1}{}{(#1)}} %

\newcommand{\hc}[1]{c\hifempty{#1}{}{(#1)}} %

\newcommand{\hpreflist}[2]{
\begin{tabular}{rccccccccccccccccccccccc}
 #1
\end{tabular}
\hspace{1cm}
\begin{tabular}{rccccccccccccccccccccccc}
 #2
\end{tabular}
}
As the definition states,
a BP{}{} $(r,h)$ for a matching $M$ is tolerated to exist if
(i) moving $r$ to $h$ violates a regional cap, and
(ii) $h$ has no incentive to accept $r$ by rejecting a resident in $M(h)$.
}

Kamada and Kojima \cite{KK10,KK15} showed that,
in contrast to \textsc{HR}{},
there exists an instance that
does not admit a strongly stable matching{},
as shown below.
We refer to this instance, which plays an important role in our hardness proofs, as $G_2$.
\begin{example}%

\newcommand{\hq}[1]{q\hifempty{#1}{}{(#1)}} %

\newcommand{\hc}[1]{c\hifempty{#1}{}{(#1)}} %

\newcommand{\hpreflist}[2]{
\begin{tabular}{rccccccccccccccccccccccc}
 #1
\end{tabular}
\hspace{1cm}
\begin{tabular}{rccccccccccccccccccccccc}
 #2
\end{tabular}
}
 \label{L14}
 Let
 $R := \set{r_1, r_2}$ be a set of residents,
 $H := \set{h_1, h_2}$ be a set of hospitals, and
 $\mathcal{E} := \set{ \set{h_1, h_2} }$ be a set of regions.
 Here and hereafter,
 we illustrate agent $a$'s preference list by
 putting ``$:$'' just after the agent's name $a$ and
 listing acceptable agents of $a$ in decreasing order of $a$'s preference.
 We also denote $h[x]$ to represent that a hospital $h$'s capacity is $x$.
 $\hc{ \set{h_1,h_2} } := 1$ represents that the regional cap for a
 region $\set{h_1,h_2}$ is $1$.
 The preference lists, capacities of hospitals, and regional caps $\hc{}$ are as follows.
 \begin{center}
\textup{
 \begin{tabular}{rccccccccccccccccccccccc}
  $r_1$:	& $h_1$	& $h_2$	\\
  $r_2$:	& $h_2$	& $h_1$	\\
 \end{tabular}
 \hspace{1cm}
 \begin{tabular}{rccccccccccccccccccccccc}
  $h_1[1]$:	& $r_2$	& $r_1$	\\
  $h_2[1]$:	& $r_1$	& $r_2$	\\
 \end{tabular} \\
 \begin{tabular}{c}
  $\hc{ \set{h_1, h_2} } := 1$
 \end{tabular}
}
\end{center}
\end{example}

Let {\em Hospitals/Residents problem{} with Regional Caps{} and Disjoint Regions{}} (or \textsc{HRRCDR}{} for short) denote the restriction of
\textsc{HRRC}{}
in which any two
regions are disjoint.
Here, we say that two regions $E_i$ and $E_j$ are disjoint if $E_i \cap E_j = \emptyset$.
Let $\mathbb{N}$ be the set of all positive integers.
Given three integers $\alpha\in \mathbb{N}_0$, $\beta\in \mathbb{N}_0$, and $\gamma\in \mathbb{N}$,
let \textsc{$(\alpha,\beta,\gamma)$-HRRC} (respectively $(\alpha,\beta,\gamma)$-\textsc{HRRCDR}{})
denote the restriction of
\textsc{HRRC}{} (respectively \textsc{HRRCDR}{})
in which
each resident's preference list is of length at most $\alpha$,
each hospital's preference list is of length at most $\beta$,
and
each region's size (i.e., the number of hospitals in the region) is at most $\gamma$.
$\alpha = \infty$ or $\beta = \infty$ means that the lengths of preference lists
are unbounded, and
$\gamma = \infty$ means that the sizes of regions are unbounded.

Let
\textsc{Strong-$(\alpha,\beta,\gamma)$-HRRC}
(\textsc{Strong-$(\alpha,\beta,\gamma)$-HRRCDR}, respectively)
denote the problem of determining if a strongly stable matching{} exists in
an \textsc{HRRC}{} instance
(an \textsc{HRRCDR}{} instance, respectively).

  \subsection{SAT and Its Restricted Variants }
We introduce an NP-complete problem \textsc{SAT}{} \cite{Coo71} %
and its restricted variants.
We will use them later in our proofs for NP-completeness.

An instance of \textsc{SAT}{} consists of
a set of variables and a set of clauses.
Each variable takes either true ($1$) or false ($0$).
If $x$ is a variable, then $x$ and its negation $\bar{x}$ are {\em literals}.
A {\em clause} is a disjunction of literals.
A clause is {\em satisfied} if and only if at least one of its literals is $1$.
We say that a \textsc{SAT}{}-instance is {\em satisfiable} if and only if there exists
a $0$/$1$ assignment to variables
that simultaneously satisfies all the clauses.
Such an assignment is called a {\em satisfying assignment}.
\textsc{SAT}{} asks if there exists a satisfying assignment.
\textsc{3-SAT}{} is a restriction of \textsc{SAT}{}, %
where each clause has exactly three literals.
\textsc{3-SAT}{} is also NP-complete \cite{Coo71}.

Another restriction, \textsc{One-In-Three 3-SAT}{}, is also %
an NP-complete problem \cite{Sch78}.
Its instance consists of
a set $U$ of variables and
a set $C$ of clauses over $U$
such that each clause in $C$
has exactly three literals.
\textsc{One-In-Three 3-SAT}{} asks if
there exists a satisfying assignment such that
each clause in $C$ has exactly one true literal.
\textsc{One-In-Three 3-SAT}{} remains NP-complete even if
no clause in $C$ contains a negated literal \cite{Sch78}.
We refer to \textsc{One-In-Three 3-SAT}{} with such restricted instances as \textsc{One-In-Three Positive 3-SAT}{}. %

 \section{\textsc{HRRC}{} with Intersecting Regions }
 \label{L15}
 In this section,
 we consider the case when
 regions may intersect.
 In this case,
 as we will show in \cref{L0,L1,L2},
 there always exists a strongly stable matching{} in a \textsc{$(\alpha,\beta,\gamma)$-HRRC}-instance
 if at least one of
 $\alpha$, $\beta$, and $\gamma$
 is
 one.
 On the other hand,
 as we will show in \cref{L3},
 determining the existence of a strongly stable matching{} is NP-complete
 even if all of $\alpha$, $\beta$, and $\gamma$ are two.

  \subsection{Polynomial-time Algorithm for \textsc{Strong-$(\infty,\infty,1)$-HRRC} }
\begin{theorem}%

\newcommand{\hq}[1]{q\hifempty{#1}{}{(#1)}} %

\newcommand{\hc}[1]{c\hifempty{#1}{}{(#1)}} %

\newcommand{\hpreflist}[2]{
\begin{tabular}{rccccccccccccccccccccccc}
 #1
\end{tabular}
\hspace{1cm}
\begin{tabular}{rccccccccccccccccccccccc}
 #2
\end{tabular}
}
 \label{L0}
 There exists
 an $O( |H| |R| )$-time
 algorithm to find a
 strongly stable matching, %
 given an \textsc{$(\infty,\infty,1)$-HRRC}-instance $I = (R, H, \succ_{}, \hq{}, \mathcal{E}, \hc{})$.
\end{theorem}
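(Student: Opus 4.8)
The plan is to reduce the problem to an ordinary \textsc{HR}{} instance. When every region $E \in \mathcal{E}$ has size $1$, say $E = \{h\}$, the regional cap $c(E)$ is simply an additional upper bound on the number of residents assigned to $h$. So I would first define, for each hospital $h$, a new capacity $q'(h) := \min\bigl(q(h),\, \min\{ c(E) : E \in \mathcal{E},\ E = \{h\} \}\bigr)$ (with the convention that the inner minimum over an empty collection is $+\infty$), and let $I'$ be the \textsc{HR}{} instance $(R, H, \succ, q')$ obtained from $I$ by replacing the capacities and forgetting the regions. Computing $q'$ takes $O(|H| + |\mathcal{E}|)$ time, which is within the stated bound since $|\mathcal{E}| \le |H|$ when regions are singletons (we may assume without loss of generality that each singleton region appears once, keeping only the tightest cap).

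Next I would run the RGS{} algorithm on $I'$, obtaining a stable matching $M$ of $I'$ in $O(|H||R|)$ time, and claim that $M$ is a strongly stable matching of $I$. Feasibility is immediate: for every region $E = \{h\}$ we have $|M(E)| = |M(h)| \le q'(h) \le c(E)$. For strong stability, suppose toward a contradiction that $(r,h)$ is an SBP{} for $M$ in $I$. By \cref{L13}, either (i) $M \setminus \{(r,M(r))\} \cup \{(r,h)\}$ is feasible in $I$, or (ii) $h$ prefers $r$ to some $r' \in M(h)$. Since $(r,h)$ is in particular a BP{} for $M$ in the \textsc{HR}{} sense, conditions (i) and (ii) of \cref{L11} hold: $(r,h)$ is acceptable and $r$ is unassigned or prefers $h$ to $M(r)$. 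In case (ii) of \cref{L13}, $h$ is already full to its original capacity or prefers $r$ to a current assignee; either way $(r,h)$ blocks $M$ in $I'$ — wait, I need $h$ to be undersubscribed \emph{or} to prefer $r$ to an assignee in the sense of $q'$, so let me handle the two cases of \cref{L13} directly against $q'$ instead.

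The cleaner argument: I claim $(r,h)$ blocks $M$ in $I'$, contradicting stability of $M$ in $I'$. We already have conditions (i)–(ii) of \cref{L11} for $I'$. For condition (iii), I must show $h$ is undersubscribed with respect to $q'$ or prefers $r$ to some assignee. If case (ii) of \cref{L13} holds, $h$ prefers $r$ to some $r' \in M(h)$, done. If case (i) holds, the assignment $M' := M \setminus \{(r,M(r))\} \cup \{(r,h)\}$ is feasible in $I$; in particular, if $\{h\}$ is a region then $|M'(h)| \le c(\{h\})$, and also $|M'(h)| \le q(h)$ is \emph{not} automatic — but since $(r,h)$ is a BP{}, $h$'s participation requires $|M(h)| < q(h)$ or $h$ prefers $r$ to an assignee; if the latter we are again in the good case, and if $|M(h)| < q(h)$ then combined with $|M(h)| + 1 = |M'(h)| \le c(\{h\})$ (when the region exists; if no region contains $h$ then only $|M(h)| < q(h) = q'(h)$ matters) we get $|M(h)| < q'(h)$, so $h$ is undersubscribed in $I'$. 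In every case $(r,h)$ blocks $M$ in $I'$, a contradiction. Hence $M$ is strongly stable in $I$, and since the RGS{} run dominates the running time, the algorithm is $O(|H||R|)$.

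The main obstacle is simply being careful in the case analysis of \cref{L13}(i): feasibility of the moved assignment only directly bounds $|M'(h)|$ by the regional cap, not by $q(h)$, so one must invoke the BP{} condition on $h$ to recover undersubscription with respect to the combined capacity $q'$. Everything else is bookkeeping, and the claim that a strongly stable matching always exists follows because $I'$, being an \textsc{HR}{} instance, always has a stable matching by \cite{GS62}.
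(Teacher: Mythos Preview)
Your proposal is correct and takes essentially the same approach as the paper: reduce to an ordinary \textsc{HR} instance by setting $q'(h)=\min(q(h),c(\{h\}))$, run RGS, and argue the output is strongly stable. The only cosmetic difference is direction: the paper takes an arbitrary BP $(r_*,h_*)$ of $M$ in $I$, uses that it cannot block in $I'$ to deduce $|M(h_*)|=q'(h_*)$ and $r'\succ_{h_*}r_*$ for all $r'\in M(h_*)$, and then shows both SBP conditions of \cref{L13} fail; you instead start from a hypothetical SBP and push it to a BP in $I'$, which is the contrapositive of the same case analysis.
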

\begin{proof}
{

\newcommand{\hq}[1]{q\hifempty{#1}{}{(#1)}} %

\newcommand{\hc}[1]{c\hifempty{#1}{}{(#1)}} %

\newcommand{\hpreflist}[2]{
\begin{tabular}{rccccccccccccccccccccccc}
 #1
\end{tabular}
\hspace{1cm}
\begin{tabular}{rccccccccccccccccccccccc}
 #2
\end{tabular}
}

 The description of our algorithm is given in \cref{L16}.
\begin{algorithm}[tb]
 \caption{An algorithm for \textsc{Strong-$(\infty,\infty,1)$-HRRC}.}
 \label{L16}
 \begin{calgorithmic}[1]
  \item[\textbf{Input:}] An \textsc{$(\infty,\infty,1)$-HRRC}-instance $I = (R, H, \succ_{}, \hq{}, \mathcal{E}, \hc{})$.
  \item[\textbf{Output:}] A strongly stable matching{} $M$ of $I$.
  \STATE Define a mapping $q': H \to \mathbb{N}_0$ as follows: \\
  $q'(h) := \pcase{ll}{\min( \hq{h}, \hc{E} ) &\mbox{if }\text{there exists $E\in \mathcal{E}$ such that $h\in E$},\\\hq{h}&\mbox{otherwise.}}$
  \label{L17}
  \STATE Let $I' := (H, R, \succ_{}, q')$ be an \textsc{HR}{}-instance.
  \label{L18}
  \STATE Apply RGS{} algorithm to $I'$.
         Let $M$ be the resultant stable matching.
  \RETURN $M$.
 \end{calgorithmic}
\end{algorithm}
 We first show that the output $M$ is a strongly stable matching{} of $I$.
 By the definition of $q'$,
 $q'(h)\le q(h)$ for all $h \in H$.
 Since $M$ is a matching of $I'$,
 $M$ is also a matching of $I$
 because the capacities of hospitals and regional caps are the only difference between $I$ and $I'$.
 For each $E\in \mathcal{E}$ such that $E = \set{h}$,
 $|M(E)| = |M(h)| \le q'(h) = \min(\hq{h},\hc{E}) \le \hc{E}$.
 Therefore,
 (i)
 $M$ is a feasible{}{} matching of $I$.

 Suppose that there exists a BP{}{} for $M$ in $I$.
 Let $(r_*,h_*)$ be one of such BP{}{}s.
 Since $(r_*,h_*)$ is {\em not} a BP{}{} for $M$ in $I'$,
 we have
 $|M(h_*)| = q'(h_*)$ and
 (ii) for any $r'\in M(h_*)$, $r'\succ_{h_*}r_*$.
 Recalling that $(r_*,h_*)$ is a BP{}{} for $M$ in $I$,
 $|M(h_*)| < \hq{h_*}$.
 Thus, $q'(h_*) \ne \hq{h_*}$ holds,
 and hence
 there exists $E_*\in \mathcal{E}$ such that $h_*\in E_*$ and $q'(h_*) = \hc{E_*}$.
 By the assumption,
 $|E_*| = 1$, so
 $E_* = \set{h_*}$ holds.
 Let $M^+ := M \setminus\set{ (r_*,M(r_*)) } \cup \set{ (r_*,h_*) }$.
 Then, $M^+(E_*) = M^+(\set{h_*}) = M(h_*) \cup \set{r_*}$.
 Since $r_*\not\in M(h_*)$, $|M^+(E_*)| = |M(h_*)|+1 = \hc{E_*} + 1$ holds.
 Therefore,
 (iii) $M \setminus\set{ (r_*,M(r_*)) } \cup \set{ (r_*,h_*) }$ is infeasible{}{} in $I$.
 By conditions (i) to (iii),
 $M$ is a strongly stable matching{} of $I$.

 We then show the time complexity.
 Since the size of each region is at most one,
 for any hospital $h$,
 there exits at most one region that contains $h$.
 Therefore, \cref{L17} can be executed in $O( |H| )$ time.
 Since
 all components of $I'$
 have been already constructed,
 \cref{L18} can be executed in $O(1)$ time.
 Since RGS{} algorithm runs in $O( |H| |R| )$ time,
 the whole time complexity is $O( |H| |R| )$.%
}
\end{proof}

  \subsection{Polynomial-time Algorithm for \textsc{Strong-$(1,\infty,\infty)$-HRRC} }
\begin{theorem}
 \label{L1}

\newcommand{\hq}[1]{q\hifempty{#1}{}{(#1)}} %

\newcommand{\hc}[1]{c\hifempty{#1}{}{(#1)}} %

\newcommand{\hpreflist}[2]{
\begin{tabular}{rccccccccccccccccccccccc}
 #1
\end{tabular}
\hspace{1cm}
\begin{tabular}{rccccccccccccccccccccccc}
 #2
\end{tabular}
}

 There exists
 an $O( |H| + |\mathcal{E}| |R| )$-time algorithm
 to find a
 strongly stable matching{}, %
 given a \textsc{$(1,\infty,\infty)$-HRRC}-instance $I = (R, H, \succ_{}, \hq{}, \mathcal{E}, \hc{})$.
\end{theorem}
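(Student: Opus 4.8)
The plan is to exploit the fact that in a \textsc{$(1,\infty,\infty)$-HRRC}-instance every resident has at most one acceptable hospital, so the acceptability graph is a disjoint union of stars centred at the hospitals. Writing $R_h$ for the set of residents acceptable to $h$ (by mutual acceptability this is exactly the list of $h$), a matching $M$ is completely described by the numbers $k_h := |M(h)| \in \{0,\dots,\min(q(h),|R_h|)\}$ together with which $k_h$ members of $R_h$ are taken. The first step is to read off what strong stability means here. Since the only hospital a resident can be assigned to is its own, no assigned resident lies in a blocking pair, so every blocking pair has the shape $(r,h)$ with $r$ unassigned and $h$ its unique choice; unwinding \cref{L13} for such a pair shows that $M$ is strongly stable iff (a) $M$ is feasible, (b) for every hospital $h$ the set $M(h)$ is a prefix of $h$'s preference list, and (c) every hospital $h$ with $k_h < \min(q(h),|R_h|)$ (i.e.\ undersubscribed with an unassigned applicant) lies in some region $E$ with $|M(E)| = c(E)$. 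The only delicate point in this reduction is that condition~(i) of \cref{L13} tolerates a blocking pair $(r,h)$, caused purely by $h$ being undersubscribed, precisely when adding $r$ to $h$ would overflow some region already at its cap.

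Given this, the algorithm is a single greedy pass: keep a running load $\sigma_E$ for each region (initially $0$); process the hospitals in any order, and when $h$ is reached set $k_h := \min\!\bigl(q(h),\,|R_h|,\,\min_{E \ni h}(c(E)-\sigma_E)\bigr)$, assign to $h$ the top $k_h$ residents of its list, and add $k_h$ to $\sigma_E$ for every region $E$ containing $h$. Correctness is immediate from the characterisation: no $\sigma_E$ ever exceeds $c(E)$, giving (a); each $M(h)$ is by construction a prefix, giving (b); and if $k_h < \min(q(h),|R_h|)$ then the minimum defining $k_h$ was attained by some region $E\ni h$, which therefore becomes exactly full the moment $h$ is processed and stays full afterwards since loads only grow, giving (c). In particular, as anticipated in the introduction, a strongly stable matching always exists.

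For the running time, the key observation is that hospitals with $R_h = \emptyset$ always receive $k_h = 0$ and never influence any $\sigma_E$, so after an $O(|H|)$ scan we may restrict the greedy pass to the ``relevant'' hospitals, of which there are at most $|R|$ (the $R_h$ are pairwise disjoint); hence each region contains at most $|R|$ relevant hospitals, and the incidence structure consulted by the algorithm has size $O(|\mathcal{E}|\,|R|)$. Reading the lists $R_h$ in $h$'s preference order costs $O(|R|)$ overall, and all remaining work (computing the minima, updating the loads) is linear in that incidence structure, so the total is $O(|H| + |\mathcal{E}|\,|R|)$. The main obstacle I expect is not in the algorithm --- which is short --- but in the stability characterisation: one must check carefully that the ``prefix'' requirement~(b) is genuinely forced rather than merely natural, and that the two asymmetric conditions of \cref{L13} are combined correctly when distinguishing tolerated blocking pairs from strong blocking pairs.
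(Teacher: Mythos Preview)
Your proof is correct and takes essentially the same approach as the paper: the greedy hospital-by-hospital fill (your one-shot computation of $k_h$ is equivalent to the paper's resident-by-resident loop within each hospital), and your conditions (a)--(c) are exactly the feasibility, ``preferred-residents-first'', and ``full-region'' facts that the paper establishes inline in its correctness argument. The only difference is presentational: you state the characterisation of strong stability explicitly up front, whereas the paper verifies the same three properties directly for the algorithm's output.
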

\begin{proof}
{

\newcommand{\hq}[1]{q\hifempty{#1}{}{(#1)}} %

\newcommand{\hc}[1]{c\hifempty{#1}{}{(#1)}} %

\newcommand{\hpreflist}[2]{
\begin{tabular}{rccccccccccccccccccccccc}
 #1
\end{tabular}
\hspace{1cm}
\begin{tabular}{rccccccccccccccccccccccc}
 #2
\end{tabular}
}

 We show that
 \begin{algorithm}[tb]
  \caption{An algorithm for \textsc{Strong-$(1,\infty,\infty)$-HRRC}.}
  \label{L19}
  \begin{calgorithmic}[1]
   \item[\textbf{Input:}] A \textsc{$(1,\infty,\infty)$-HRRC}-instance $I = (R, H, \succ_{}, \hq{}, \mathcal{E}, \hc{})$.
   \item[\textbf{Output:}] A strongly stable matching{} $M$ of $I$.
   \STATE $M := \emptyset$. \label{L20}
   \FOR{\textbf{each} $h \in H$} \label{L21}
     \WHILE{$h$'s preference list is not empty} \label{L22}
       \STATE Let $r$ be the first resident in $h$'s preference list.
   Remove $r$ from the list. \label{L23}
       \IF{$|M(h)| < \hq{h}$ and
         for any $E\in \mathcal{E}$ such that $h\in E$, $|M(E)| < \hc{E}$}
         \label{L24}
         \STATE $M := M \cup \set{(r,h)}$. \label{L25}
       \ENDIF
     \ENDWHILE
   \ENDFOR
   \RETURN $M$.
  \end{calgorithmic}
 \end{algorithm}
 \cref{L19} always outputs
 a strongly stable matching{} in $O( |H| + |\mathcal{E}| |R| )$ time.
 For each $h\in H$,
 the algorithm greedily assigns the best possible residents
 to $h$ with respect to $h$'s preference,
 as long as capacities of the hospital $h$ and the regions including $h$ are not
 violated.

 We first show the correctness.
 Let $M_1$ be $M$ at the end of the algorithm.
 We will prove that $M_1$ is a strongly stable matching{} of $I$ by
 showing that
 $M_1$ is feasible{}{} in $I$
 and
 $M_1$ admits no SBP{}{}.
 The key observation is the monotonicity of the condition in \cref{L24},
 i.e., for each $h\in H$,
 once the condition is unsatisfied,
 it will never be satisfied again after that.

 We first show that $M_1$ is a feasible{}{} matching of $I$.
 In the algorithm,
 the only operations on $M$ are the initialization to $\emptyset$ in \cref{L20} and
 the append of a pair in \cref{L25}.
 Since the length of each resident's preference list is at most one,
 \cref{L23} implies that \cref{L25} is executed at most once for each $r \in R$.
 Therefore, each resident $r \in R$ appears in $M$ at most once.
 Further, in \cref{L24}, %
 the algorithm checks whether
 appending $(r,h)$ to $M$ violates neither
 $h$'s capacity nor regional cap for any region that includes $h$.
 Thus, $M$ is always feasible{}{},
 and hence $M_1$ is also feasible{}{}.

 Next, we show that
 the following two conditions hold
 for any BP{}{} $(r_*,h_*)$ for $M_1$:
 (i) $M_1 \setminus \set{(r_*,M_1(r_*))} \cup \set{(r_*,h_*)}$ is infeasible{}{}, and
 (ii) for any $r' \in M_1(h_*)$, $r' \succ_{h_*}r_*$,
 which implies that $(r_*,h_*)$ is not an SBP{}{}.
 Note that $M_1(r_*) = \emptyset$ holds,
 since the length of $r_*$'s preference list is exactly one and
 $h_*$ is the unique hospital in $r_*$'s list.

 For contradiction, suppose that (ii) does not hold.
 Then, there exists a resident $r' \in M_1(h_*)$ such that $r_*\succ_{h_*}r'$. %
 This means that at some point $r_*$ was not assigned to $h_*$,
 i.e., the condition in \cref{L24} was not satisfied,
 but later $r'$ was assigned to $h_*$,
 i.e., the condition in \cref{L24} was satisfied.
 This contradicts the monotonicity of the condition in \cref{L24}.

 For contradiction, suppose that (i) does not hold.
 Since $M_1(r_*) = \emptyset$,
 $M_1 \setminus \set{(r_*,M_1(r_*))} \cup \set{(r_*,h_*)} = M_1 \cup \set{(r_*,h_*)}$,
 so $M_1 \cup \set{(r_*,h_*)}$ is feasible{}{}.
 Also,
 $|M_1(h_*)| < \hq{h_*}$ because
 $(r_*,h_*)$ is a BP{}{} for $M_1$ and (ii) holds.
 Thus, the condition in \cref{L24} for $h_*$
 is still satisfied at the end of algorithm.
 However, $(r_*,h_*)$ was not added to $M$,
 i.e., the condition in \cref{L24} for $h_*$ was not satisfied.
 This contradicts the monotonicity of the condition in \cref{L24} again.

 We then analyze the time complexity of \cref{L19}.
 Since the length of each resident's preference list is at most one,
 \cref{L23} is executed at most $|R|$ times.
To check the conditions $|M(h)|<\hq{h}$ and $|M(E)| < \hc{E}$ at \cref{L24},
we prepare a counter for each hospital and each region.
Then, for processing each resident, checking the conditions at \cref{L24}
and updating the counters at \cref{L25} can be done in time $O(|\mathcal{E}|)$.
Clearly updating the matching at \cref{L25} can be done in constant time.
 Therefore, the algorithm runs in $O( |H| + |\mathcal{E}| |R| )$ time.
}
\end{proof}

  \subsection{Polynomial-time Algorithm for \textsc{Strong-$(\infty,1,\infty)$-HRRC} }
\begin{theorem}
 \label{L2}

\newcommand{\hq}[1]{q\hifempty{#1}{}{(#1)}} %

\newcommand{\hc}[1]{c\hifempty{#1}{}{(#1)}} %

\newcommand{\hpreflist}[2]{
\begin{tabular}{rccccccccccccccccccccccc}
 #1
\end{tabular}
\hspace{1cm}
\begin{tabular}{rccccccccccccccccccccccc}
 #2
\end{tabular}
}

 There exists
 an $O(|R|+|\mathcal{E}| |H|)$-time algorithm
 to find a
 strongly stable matching{}, %
 given an \textsc{$(\infty,1,\infty)$-HRRC}-instance $I = (R, H, \succ_{}, \hq{}, \mathcal{E}, \hc{})$.
\end{theorem}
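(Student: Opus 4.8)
The plan is to mirror the argument for \cref{L1}, exploiting the symmetry between bounding residents' and hospitals' lists. The first step is to extract the structural consequence of $\beta=1$: since each hospital's preference list has length at most one and acceptability is mutual, every hospital $h$ is acceptable to at most one resident, and hence $h$ occurs in the preference list of at most one resident. Consequently the hospitals appearing in different residents' lists are pairwise disjoint, in any matching each hospital is assigned at most one resident (so a capacity $q(h)\ge 2$ behaves the same as $q(h)=1$, and only whether $q(h)\ge 1$ matters), and for \emph{any} blocking pair $(r,h)$ of a matching $M$ we have $M(h)=\emptyset$. Therefore condition (ii) of \cref{L13} can never be met, and a blocking pair is a strong blocking pair precisely when relocating $r$ to $h$ keeps the matching feasible.

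Next I would present a greedy algorithm dual to \cref{L19}: initialise $M:=\emptyset$; process the residents in an arbitrary order; for each resident $r$, scan $r$'s preference list from most- to least-preferred and assign $r$ to the first hospital $h$ for which $|M(h)|<q(h)$ and $|M(E)|<c(E)$ holds for every region $E$ with $h\in E$ (leaving $r$ unassigned if no such $h$ is found). A counter is kept for each hospital and each region so that each such test, and the update performed when a pair is added, costs $O(|\mathcal{E}|)$.

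The correctness proof follows the pattern of the proof of \cref{L1}. Feasibility is clear: a pair is added to $M$ only when this violates neither the hospital's capacity nor any regional cap, and each resident is assigned at most once. For strong stability, the crucial monotonicity fact is that each counter $|M(E)|$ is non-decreasing throughout the run and never exceeds $c(E)$. Suppose, for contradiction, that $(r_*,h_*)$ is a strong blocking pair for the final matching $M_1$. Then $h_*$ is acceptable to $r_*$, $q(h_*)\ge 1$ (because $M_1(h_*)=\emptyset$ and $(r_*,h_*)$ is a blocking pair), and $r_*$ is unassigned in $M_1$ or prefers $h_*$ to $M_1(r_*)$; in either case, while $r_*$ was being processed the scan passed over $h_*$ without assigning it, which, since $q(h_*)\ge 1$, forces some region $E\ni h_*$ to have been full at that instant. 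If $r_*$ is unassigned in $M_1$, then $E$ is still full at the end and $M_1\cup\{(r_*,h_*)\}$ has $|M_1(E)|+1=c(E)+1$ residents in $E$, so it is infeasible. If instead $M_1(r_*)=h$ with $h_*\succ_{r_*}h$, then $h$ had not yet been assigned when $r_*$ was processed; since assigning $(r_*,h)$ later did not overfill $E$ and $|M(E)|$ never decreases, $h\notin E$, and therefore replacing $(r_*,h)$ by $(r_*,h_*)$ again raises the count in $E$ to $c(E)+1$, an infeasible matching. Either way $(r_*,h_*)$ is not a strong blocking pair, a contradiction; hence $M_1$ is strongly stable (and, in particular, a strongly stable matching always exists).

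For the running time: because the residents' lists range over pairwise disjoint hospital sets, the total number of hospital occurrences examined over all residents is at most $|H|$; each examination, and each counter update on an assignment, costs $O(|\mathcal{E}|)$ (more precisely $O(1+|\{E:h\in E\}|)$, which sums to $O(|\mathcal{E}||H|)$), and an extra $O(|R|)$ accounts for iterating over residents and for residents with empty lists. After an $O(|H|+|\mathcal{E}|)$-time (dominated) initialisation of counters and of the hospital-to-regions incidence lists, the total is $O(|R|+|\mathcal{E}||H|)$. I do not anticipate a real difficulty; the one place demanding care is the case split in the strong-blocking-pair argument, and especially the observation that the hospital $h$ that $r_*$ is actually matched to must lie outside the region $E$ that witnesses why $h_*$ was skipped.
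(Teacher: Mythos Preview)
Your proposal is correct and follows essentially the same approach as the paper: the same greedy algorithm (process residents in arbitrary order, assign each to its best feasible hospital), the same feasibility argument, and the same contradiction via a full region $E\ni h_*$ that witnesses why $h_*$ was skipped, together with the observation that $M_1(r_*)\notin E$. Your write-up is a bit more explicit than the paper's in front-loading the structural consequences of $\beta=1$ (each hospital lies in at most one resident's list, $M(h_*)=\emptyset$ for every blocking pair, so condition~(ii) of \cref{L13} is vacuous), but this is a presentational difference only.
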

\begin{proof}
{

\newcommand{\hq}[1]{q\hifempty{#1}{}{(#1)}} %

\newcommand{\hc}[1]{c\hifempty{#1}{}{(#1)}} %

\newcommand{\hpreflist}[2]{
\begin{tabular}{rccccccccccccccccccccccc}
 #1
\end{tabular}
\hspace{1cm}
\begin{tabular}{rccccccccccccccccccccccc}
 #2
\end{tabular}
}

 We show that
\begin{algorithm}[tb]
 \caption{An algorithm for \textsc{Strong-$(\infty,1,\infty)$-HRRC}.}
 \label{L26}
 \begin{calgorithmic}[1]
  \item[\textbf{Input:}] An \textsc{$(\infty,1,\infty)$-HRRC}-instance $I = (R, H, \succ_{}, \hq{}, \mathcal{E}, \hc{})$.
  \item[\textbf{Output:}] A strongly stable matching{} $M$ of $I$.
  \STATE $M := \emptyset$. \label{L27}
    \FOR{\textbf{each} $r \in R$} \label{L28}
        \STATE Let $h$ be the first hospital in $r$'s preference list (if any)
  that satisfies both of the following conditions:
  (i)  $|M(h)| < \hq{h}$;
  (ii) for any $E\in \mathcal{E}$ such that $h\in E$, $|M(E)|<\hc{E}$.
  \label{L29}
      \IF{such a hospital $h$ exists}
        \STATE $M := M \cup \set{(r,h)}$. \label{L30}
      \ENDIF
    \ENDFOR
  \RETURN $M$.
 \end{calgorithmic}
\end{algorithm}
 \cref{L26} always outputs
 a strongly stable matching{} in $O( |R| + |\mathcal{E}| |H| )$ time.
 For each $r\in R$,
 the algorithm assigns $r$ to the best possible $h\in H$ with respect to
 $r$'s preference list,
 as long as the capacity of $h$ and feasibility are not violated.

We first show the correctness.
Let $M_{1}$ be $M$ at the end of the algorithm.
It is easy to see that a matching $M$ is always
feasible{}{} during the course of the algorithm,
since $(r, h)$ is added to $M$ only when
this addition does not violate feasibility.
Hence $M_{1}$ is also feasible{}{}.

Next, let $(r_*, h_*)$ be any BP{}{} of $M_{1}$ (if any).
Then $r_*$ is either unassigned or assigned to a hospital worse than $h_*$ in $M_{1}$.
Therefore, when the \textbf{for}-loop was executed for $r_*$,
$h_*$ was considered but did not satisfy the conditions in \cref{L29}.
Since each hospital contains at most one resident in the preference list,
$r_*$ is the only resident in $h_*$'s list.
It must be the case that $\hq{h_*}>0$, as otherwise, $(r_*, h_*)$ cannot be a BP{}{}.
Thus condition (i) in \cref{L29} must be satisfied for $r_*$ and $h_*$,
so condition (ii) was unsatisfied.
This means that at this moment there is a region $E_*$
that includes $h_*$ such that $|M(E_*)| = \hc{E_*}$,
and this equation is retained until the end of the algorithm,
implying that $|M_{1}(E_*)| = \hc{E_*}$.
Note that if $M_{1}(r_*)=h'$, $h'$ is not in $E_*$.
Therefore $M_{1} \setminus \set{(r_*,M_{1}(r_*))} \cup \set{(r_*,h_*)}$
violates $\hc{E_*}$ and hence is not feasible{}{},
implying that $(r_*, h_*)$ is {\em not} an SBP{}{}.
This completes the proof that $M_{1}$ is strongly stable.

 We then analyze the time complexity of \cref{L26}.
 Since the length of each hospital's preference list is at most one,
 testing the conditions in \cref{L29}
 is executed at most $|H|$ times.
To check the conditions $|M(h)|<\hq{h}$ and $|M(E)| < \hc{E}$ at \cref{L29},
we prepare a counter for each hospital and each region.
Then, for processing each hospital, checking the conditions at \cref{L29}
and updating the counters at \cref{L30} can be done in time $O(|\mathcal{E}|)$.
Clearly updating the matching at \cref{L30} can be done in constant time.
Therefore, the algorithm runs in $O( |R| + |\mathcal{E}| |H| )$ time.
}
\end{proof}

  \subsection{NP-completeness of \textsc{Strong-$(2,2,2)$-HRRC} }
\begin{theorem}%
 \label{L3}

 \textsc{Strong-$(2,2,2)$-HRRC} is NP-complete.
\end{theorem}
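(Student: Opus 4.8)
Given a matching $M$, one checks in polynomial time that $M$ is feasible (for each region $E$ compare $|M(E)|$ with $c(E)$) and that no acceptable pair is a strong blocking pair: for each acceptable pair $(r,h)$ one tests whether $(r,h)$ is a blocking pair via conditions (ii)--(iii) of \cref{L11}, and if so whether condition (i) or (ii) of \cref{L13} holds; condition (ii) is a single preference comparison and condition (i) requires only re-examining the (boundedly many) regions containing $h$. All of this is polynomial, so the problem lies in NP.

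\textbf{NP-hardness: overall strategy.} I would reduce from a restricted satisfiability problem -- most likely \textsc{3-SAT}, or \textsc{One-In-Three Positive 3-SAT} if encoding a $3$-literal clause under the degree constraints turns out cleaner -- building the instance out of conditional copies of the no-strongly-stable instance $G_2$ of \cref{L14}. The guiding principle is that, since each resident list has length $\le 2$ and each hospital list has length $\le 2$, the acceptability graph is a disjoint union of paths and cycles, and the only extra expressive power available (the feature absent in \textsc{HRRCDR}, which is why $(2,2,2)$ is tractable there) is a \emph{chain of overlapping size-$2$ regions} $\{h_1,h_2\},\{h_2,h_3\},\dots$ with capacity-$1$ hospitals and cap-$1$ regions, which forces the occupied hospitals along the chain to form an independent set. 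For each variable $x$ occurring in $k$ clauses I would use a cyclic gadget of capacity-$1$ hospitals and residents with length-$2$ lists, threaded by such a region chain, engineered so that in \emph{any} strongly stable matching the gadget is in exactly one of two states (any ``empty'' or ``mixed'' state creates a strong blocking pair via the feasibility alternative (i) of \cref{L13}); these two states encode $x=\textsf{true}$ and $x=\textsf{false}$, and the chain propagates the chosen value to $k$ output positions, one feeding each clause containing $x$ or $\bar x$.

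\textbf{Clause gadgets and correctness.} For each clause $c=\ell_1\vee\ell_2\vee\ell_3$ I would attach a gadget that reads the three literal signals and contains a copy of $G_2$ that is ``defused'' (one of its two hospitals occupied from the outside, or its region effectively relaxed) as soon as one incoming literal is true, and that collapses to a faithful copy of $G_2$ when all three are false. For the forward direction, from a satisfying assignment one puts each variable gadget in the matching state and uses a true literal in each clause gadget to defuse its $G_2$ copy, completing greedily; one then verifies feasibility and, pair by pair over the handful of local configurations, the absence of strong blocking pairs. For the converse, from a strongly stable matching one reads off each variable gadget's forced state to obtain a $0/1$ assignment; if some clause were unsatisfied, all three of its literal signals would be false, which by construction isolates an honest copy of $G_2$ inside the matching, and the case analysis of $G_2$ (every feasible matching on it admits a strong blocking pair) contradicts strong stability.

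\textbf{Main obstacle.} The crux is realizing both the value-propagation in the variable gadgets and the ``OR of three'' behaviour of the clause gadgets while simultaneously respecting all three bounds: resident lists $\le 2$, hospital lists $\le 2$, and region size $\le 2$. Degree-$2$ acceptability confines everything to paths and cycles, so a literal value can only be carried and duplicated through chains of overlapping size-$2$ regions, and a $3$-literal clause must aggregate its inputs through a short chain of auxiliary agents; designing these so that exactly the ``all-false'' case exposes a clean copy of $G_2$, with no spurious strong blocking pair in any other configuration, is where the real work lies. A secondary difficulty is the global absence-of-SBP verification: since strong blocking couples a feasibility test (depending on the region chain) with a preference test, ruling SBPs out requires a careful, somewhat lengthy enumeration of the local states of each gadget and their interfaces.
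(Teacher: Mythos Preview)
Your high-level strategy is sound---exploit overlapping size-$2$ regions (the feature absent in \textsc{HRRCDR}) and use copies of $G_2$ as the obstruction---but the paper's construction is dramatically simpler than what you sketch, and the simplification comes from committing to \textsc{One-In-Three Positive 3-SAT} from the outset. There a variable gadget is merely a single mutually first-choice pair $(y'_i,x'_i)$---no cycle, no propagation chain---because overlapping regions let the one hospital $x'_i$ sit in as many regions as needed; and a clause gadget is nothing more than a bare copy of $G_2$. For a clause $C_j$ on variables $x_{i_1},x_{i_2},x_{i_3}$ the paper simply places a cap-$1$ region on every $2$-element subset of $\{g'_{j,2},g'_{j,4},x'_{i_1},x'_{i_2},x'_{i_3}\}$. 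The pairwise caps among the three $x'$-hospitals enforce ``at most one true'', while the caps linking each $x'_{i_\ell}$ to $g'_{j,2}$ and to $g'_{j,4}$ defuse the $G_2$ copy as soon as one $x'$-hospital is occupied, enforcing ``at least one true''. The entire ``duplicate a variable's value along a region chain to $k$ output positions'' apparatus you propose is unnecessary precisely because regions may intersect: the single hospital $x'_i$ already reaches every clause containing $x_i$.

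Your outline could plausibly be completed if you reduce from plain \textsc{3-SAT} and actually build the cyclic variable gadgets and the three-input aggregation, but as written it is a strategy rather than a proof: you explicitly defer the gadget constructions and the no-SBP case analysis, which are the substantive steps, and handling negation plus arbitrary occurrence counts under the degree-$2$ acceptability constraint is exactly the place where such a construction can go wrong. The paper sidesteps both difficulties entirely by its choice of source problem, so that the correctness argument reduces to a handful of lines.
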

\begin{proof}

\newcommand{\hq}[1]{q\hifempty{#1}{}{(#1)}} %

\newcommand{\hc}[1]{c\hifempty{#1}{}{(#1)}} %

\newcommand{\hpreflist}[2]{
\begin{tabular}{rccccccccccccccccccccccc}
 #1
\end{tabular}
\hspace{1cm}
\begin{tabular}{rccccccccccccccccccccccc}
 #2
\end{tabular}
}

Membership in NP is obvious.
We show show a reduction from \textsc{One-In-Three Positive 3-SAT}{}.
 Let $I$ be an instance of \textsc{One-In-Three Positive 3-SAT}{} having
 $n$ variables $x_{i}$ ($i \in [1, n]$) and
 $m$ clauses $C_{j}$ ($j \in [1, m]$) of size three.
 We construct an \textsc{HRRC}{}-instance $I' := (R, H, \succ_{}, \hq{}, \mathcal{E}, \hc{})$ %
 from $I$.
 For each variable $x_{i}$, we construct a {\em variable gadget{}},
 which consists of a resident $y'_{i}$ and a hospital $x'_{i}$.
 A variable gadget{} corresponding to $x_{i}$ is called an {\em $x_{i}$-gadget}.
 For each clause $C_{j}$, we construct a {\em clause gadget{}}.
 It contains
 two residents $g'_{j,1}$ and $g'_{j,3}$, and
 two hospitals $g'_{j,2}$ and $g'_{j,4}$.
 A clause gadget{} corresponding to $C_{j}$ is called a {\em $C_{j}$-gadget}.
 Thus, there are $n + 2m$ residents and $n + 2m$ hospitals in the
 created \textsc{HRRC}{}-instance, denoted $I'$.

 \begin{figure}[tbp]
   \centering
   \hpreflist{
   $y'_{i}$:  & $x'_{i}$ \\
   }{
   $x'_{i}[1]$:  & $y'_{i}$ \\
   }
   \caption{Preference lists and capacities of agents in $x_{i}$-gadget.}
   \label{L31}
 \end{figure}
 \begin{figure}[tbp]
  \centering
  \hpreflist{
  $g'_{j,1}$: & $g'_{j,2}$ & $g'_{j,4}$ \\
  $g'_{j,3}$: & $g'_{j,4}$ & $g'_{j,2}$ \\
  }{
  $g'_{j,2}[1]$: & $g'_{j,3}$ & $g'_{j,1}$ \\
  $g'_{j,4}[1]$: & $g'_{j,1}$ & $g'_{j,3}$ \\
  }
  \caption{Preference lists and capacities of agents in $C_{j}$-gadget.}
  \label{L32}
 \end{figure}
 The preference lists and capacities of agents
 in the $x_{i}$-gadget and the $C_{j}$-gadget are
 constructed as shown in \cref{L31,L32}, respectively.
 Suppose that $C_{j}$'s $k$th literal ($k\in \set{1,2,3}$) is $x_{ i_{j,k} }$.
 For each clause $C_{j} = \{ x_{ i_{j,1} } \vee x_{ i_{j,2} } \vee x_{ i_{j,3} } \}$,
 we create regions of size two as follows:
 for each $E$ such that
 $E \subset\set{g'_{j,2}, g'_{j,4}, x'_{ i_{j,1} }, x'_{ i_{j,2} }, x'_{ i_{j,3} }}$
 and $|E| = 2$,
 we add $E$ to the set of regions $\mathcal{E}$
 and set the regional cap of $E$ as $\hc{E}:=1$.
 Now the reduction is completed.
 It is not hard to see that the reduction
 can be performed in polynomial time and the
 conditions on the preference lists and regions
 stated in the theorem are satisfied.

 We then show the correctness.
 First, suppose that $I$ is satisfiable and
 let $A$ be a satisfying assignment.
 We construct a strongly stable matching{} $M$ of
 $I'$ from $A$.
 For an $x_{i}$-gadget,
 add $(y'_{i}, x'_{i})$ to $M$
 if and only if
 $x_{i} = 1$ under $A$.
 No agent in $C_{j}$-gadget is assigned.
 The construction of $M$ is now completed.

 We show that $M$ is a feasible{}{} matching of $I'$.
 By the construction, $M$ is a matching of $I'$.
 For a clause $C_{j} = \{ x_{ i_{j,1} } \vee x_{ i_{j,2} } \vee x_{ i_{j,3} } \}$,
 exactly one literal is $1$ under $A$.
 Therefore, in $M$, exactly one of $x'_{ i_{j,1} }$, $x'_{ i_{j,2} }$, and $x'_{ i_{j,3} }$
 is assigned its first-choice resident.
 Also, neither $g'_{j,2}$ nor $g'_{j,4}$ is assigned in $M$.
 Therefore, all the regional caps are satisfied.
 Thus, $M$ is feasible{}{} in $I'$.

 Next, we show that $M$ has no SBP{}{}. %
 All hospitals that are nonempty in $M$ are assigned their first-choice residents,
 so none of them is a part of an SBP{}{}.
 Let $x'_{i}$ be an empty hospital and
 $C_{j} = \{ x_{ i_{j,1} } \vee x_{ i_{j,2} } \vee x_{ i_{j,3} } \}$
 be a clause that contains $x_{i}$.
 By the satisfiability of $A$, there exists $\ell$
 such that $x_{ i_{j,\ell} } = 1$, but $i_{j,\ell}\ne i$ because $x_{i}=0$.
 Then, $x'_{ i_{j,\ell} }$ is assigned its first-choice resident in $M$.
 Since $\hc{ \set{x'_{i}, x'_{ i_{j,\ell} }} } = 1$,
 $x'_{i}$ is not a part of an SBP{}{}.
 For hospitals $g'_{j,2}$ and $g'_{j,4}$,
 there exists a hospital $x'_{ i_{j,\ell} }$ that is assigned its first-choice resident.
 Since $\hc{ \set{ g'_{j,2}, x'_{ i_{j,\ell} } } } = 1$ and $\hc{ \set{ g'_{j,4}, x'_{ i_{j,\ell} } } } = 1$,
 neither $g'_{j,2}$ nor $g'_{j,4}$ is a part of an SBP{}{}.
 Thus, $M$ is a strongly stable matching{}.

 Conversely, suppose that $I'$ admits a strongly stable matching{} $M$.
 We construct a satisfying assignment $A$ of $I$.
 For each $x'_{i}$, if $M(x'_{i}) \ne \emptyset$ then we set $x_{i} = 1$ in $A$;
 otherwise, we set $x_{i} = 0$ in $A$.

 We show that $A$ satisfies $I$.
 Suppose not.
 Then, there exists a clause $C_{j} = \{ x_{ i_{j,1} } \vee x_{ i_{j,2} } \vee x_{ i_{j,3} } \}$
 such that either (i) none of the literals are $1$, or (ii) at least two literals are $1$.
 In the case (i),
 all of $x'_{ i_{j,1} }$, $x'_{ i_{j,2} }$, and $x'_{ i_{j,3} }$ are empty.
 Therefore, regional caps 
 $\hc{ \set{ g'_{j,2}, x'_{ i_{j,\ell} } } } = 1$ and
 $\hc{ \set{ g'_{j,4}, x'_{ i_{j,\ell} } } } = 1$ ($\ell \in [1,3]$)
 do not affect the matching among
 $g'_{j,1}$, $g'_{j,2}$, $g'_{j,3}$, and $g'_{j,4}$.
 Then, these four agents form the same structure as $G_2$ in \cref{L14},
 so there exists an SBP{}{}
 no matter how they are assigned,
 a contradiction.
 In the case (ii),
 let $x_{ i_{j,\ell_1} }$ and $x_{ i_{j,\ell_2} }$ be
 two literals that take a value of $1$.
 Then, both $M( x'_{ i_{j,\ell_1} } ) \ne \emptyset$
 and $M( x'_{ i_{j,\ell_2} } ) \ne \emptyset$ hold.
 Since they have a regional cap of $\hc{ \set{ x'_{ i_{j,\ell_1} }, x'_{ i_{j,\ell_2} } } } = 1$,
 $M$ is infeasible{}{}, a contradiction.
 Thus, $A$ satisfies $I$.
\end{proof}

 \section{\textsc{HRRC}{} with Disjoint Regions }
 \label{L33}
In this section, we consider the case when regions are disjoint.
 Since \textsc{HRRCDR}{} is a special case of \textsc{HRRC}{},
 the following corollaries are immediate from \cref{L0,L1,L2}.
\begin{corollary}
  \label{L4}
  \textsc{Strong-$(\infty,\infty,1)$-HRRCDR} is in P.
 \end{corollary}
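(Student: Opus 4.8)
The plan is to invoke \cref{L0} directly, since \textsc{HRRCDR}{} is by definition nothing more than \textsc{HRRC}{} with the extra promise that the regions are pairwise disjoint. Concretely, I would argue as follows: every \textsc{$(\infty,\infty,1)$-HRRCDR}-instance $I = (R,H,\succ_{},q,\mathcal{E},c)$ is, verbatim, an \textsc{$(\infty,\infty,1)$-HRRC}-instance (it merely satisfies the additional property that $E_i\cap E_j=\emptyset$ for distinct $E_i,E_j\in\mathcal{E}$, a property which \cref{L0} never uses). The notion of strong stability is defined identically in both settings. Hence the $O(|H||R|)$-time algorithm of \cref{L0}, applied to $I$, outputs a strongly stable matching of $I$; in particular it always answers ``yes'' to the decision question, and it does so in polynomial time. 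Therefore \textsc{Strong-$(\infty,\infty,1)$-HRRCDR} is in P.

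There is essentially no obstacle: the whole content sits in \cref{L0}, and the disjointness hypothesis is simply dropped rather than exploited. The one remark worth making (though it is not needed for the proof) is that when $\gamma = 1$ the two problem classes in fact coincide, because every region must then be a singleton and, $\mathcal{E}$ being a set, distinct singletons are automatically disjoint; so the inclusion of instance classes is here an equality. Either way, the corollary follows immediately.
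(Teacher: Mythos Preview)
Your proposal is correct and matches the paper's own justification: the paper states that \cref{L4} is immediate from \cref{L0} because \textsc{HRRCDR}{} is a special case of \textsc{HRRC}{}, which is exactly the argument you give. Your additional observation that for $\gamma=1$ the two instance classes actually coincide is a nice bonus, though (as you note) not needed for the corollary.
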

 \begin{corollary}
  \label{L5}
  \textsc{Strong-$(1,\infty,\infty)$-HRRCDR} is in P.
 \end{corollary}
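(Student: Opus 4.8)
The plan is to observe that this is an immediate consequence of \cref{L1}, exactly as the surrounding text already announces. By definition, \textsc{HRRCDR} is the restriction of \textsc{HRRC} in which any two regions are disjoint; disjointness is a constraint on the input, not an added feature. Hence every \textsc{$(1,\infty,\infty)$-HRRCDR}-instance $I = (R, H, \succ_{}, \hq{}, \mathcal{E}, \hc{})$ is in particular a \textsc{$(1,\infty,\infty)$-HRRC}-instance, and the notions of feasibility (\cref{L12}) and strong stability (\cref{L13}) evaluated on $I$ are literally the same whether or not we regard $I$ as an \textsc{HRRCDR}-instance, since those definitions never refer to whether regions overlap.

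Concretely, I would run \cref{L19} — the algorithm behind \cref{L1} — on the given instance $I$. By \cref{L1} it terminates in $O(|H| + |\mathcal{E}||R|)$ time and returns a strongly stable matching of $I$; because $I$ is also an \textsc{HRRCDR}-instance, this shows that \textsc{Strong-$(1,\infty,\infty)$-HRRCDR} can be decided in polynomial time (in fact it is always a yes-instance, since a strongly stable matching always exists here, just as in \cref{L1}). Therefore the problem is in P.

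There is essentially no obstacle to overcome: the only point worth verifying is that restricting to disjoint regions does not alter the meaning of the predicates used in the correctness proof of \cref{L19}, which is clear from \cref{L12,L13}. Consequently no new argument is needed beyond citing \cref{L1}.
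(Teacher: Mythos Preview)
Your proposal is correct and matches the paper's own treatment exactly: the paper states that \cref{L5} is immediate from \cref{L1} because \textsc{HRRCDR} is a special case of \textsc{HRRC}, which is precisely what you argue. No further justification is needed.
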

 \begin{corollary}
  \label{L6}
  \textsc{Strong-$(\infty,1,\infty)$-HRRCDR} is in P.
 \end{corollary}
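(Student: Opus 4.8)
The plan is to observe that no new argument is required: \textsc{HRRCDR}{} is by definition the restriction of \textsc{HRRC}{} to instances whose regions are pairwise disjoint, so the class of \textsc{$(\infty,1,\infty)$-HRRCDR}-instances is a subclass of the \textsc{$(\infty,1,\infty)$-HRRC}-instances. Hence \cref{L2} applies verbatim to any such instance: its algorithm (the one displayed in the proof of \cref{L2}) runs in $O(|R| + |\mathcal{E}||H|)$ time and always returns a strongly stable matching{} of the given instance, since the correctness proof of \cref{L2} never used anything about how regions intersect.

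Consequently a strongly stable matching{} exists for every \textsc{$(\infty,1,\infty)$-HRRCDR}-instance, so the decision problem \textsc{Strong-$(\infty,1,\infty)$-HRRCDR}{} is answered affirmatively on all inputs and is therefore trivially in P; one may simply output ``yes'', or run the algorithm of \cref{L2} to produce an explicit witness. The same reasoning yields \cref{L4} and \cref{L5} from \cref{L0} and \cref{L1}, respectively. There is essentially no obstacle here: the only point to note is that restricting to disjoint regions is an extra hypothesis on the input rather than a modification of the notion of strong stability, so the more general polynomial-time results are inherited unchanged.
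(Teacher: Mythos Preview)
Your proposal is correct and mirrors the paper's own treatment: the paper simply states that \cref{L4,L5,L6} are immediate from \cref{L0,L1,L2} because \textsc{HRRCDR}{} is a special case of \textsc{HRRC}{}. Your additional remark that a strongly stable matching always exists (so the decision problem is trivially in P) is also in line with the paper's observations.
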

 Hence, in the rest of this section, we consider cases where each parameter is at least two.
We first show a positive result in \cref{L34}.
Then, in \cref{L35}, we introduce an NP-complete variant of \textsc{SAT}{},
which will be used in the negative results in \cref{L36,L37,L38}.

  \subsection{Polynomial-time Algorithm for \textsc{Strong-$(2,2,2)$-HRRCDR} }
  \label{L34}
{
In this section, we give a polynomial-time algorithm for
\textsc{Strong-$(2,2,2)$-HRRCDR}.
A high-level idea of the algorithm is as follows.
Given an instance $I$, we first divide it to subinstances.
Let a {\em $(2 \times 2)$-subinstance} be a subinstance of $I$
consisting of two residents $r_{1}, r_{2}$ and two hospitals $h_{1}, h_{2}$
such that each $r_{i}$ ($i=1, 2$) includes both $h_{1}$ and $h_{2}$ in the preference list,
each $h_{i}$ ($i=1, 2$) includes both $r_{1}$ and $r_{2}$ in the preference list,
and $h_{1}$ and $h_{2}$ are included in the same region.
Since regions are disjoint, a $(2 \times 2)$-subinstance has no interference with the rest of the instance.
Suppose that $I$ contains $p$ $(2 \times 2)$-subinstances $I_{1}, I_{2}, \ldots, I_{p}$.
Remove all of them from $I$ and let $I_{0}$ be the resultant subinstance.

We then handle these subinstances independently.
Each $(2 \times 2)$-subinstance is solved by a brute-force search.
Before solving the subinstance $I_{0}$, to simplify the analysis, we apply the {\em shrinking} operation described in \cref{L39}.
It modifies capacities of hospitals but does not change the set of strongly stable matchings.
Then, $I_{0}$ is solved by a sub-algorithm \cref{L40}, which will be provided in \cref{L41}.
It is guaranteed that $I_{0}$ admits a strongly stable matching, and \cref{L40} returns one, say $M_{0}$.
If there is a $(2 \times 2)$-subinstance that admits no strongly stable matching, we conclude that $I$ also admits none.
If all the $(2 \times 2)$-subinstances admit a stable matching, then their union plus $M_{0}$ is a strongly stable matching of $I$.
In \cref{L42}, we give the whole algorithm and show its correctness.
}

\subsubsection{Notation and Shrinking Operation for \textsc{HRRC}{}-instances}
\label{L39}
{

\newcommand{\hq}[1]{q\hifempty{#1}{}{(#1)}} %

\newcommand{\hc}[1]{c\hifempty{#1}{}{(#1)}} %

\newcommand{\hpreflist}[2]{
\begin{tabular}{rccccccccccccccccccccccc}
 #1
\end{tabular}
\hspace{1cm}
\begin{tabular}{rccccccccccccccccccccccc}
 #2
\end{tabular}
}
 To simplify the description,
 we introduce two symbols, $\funcdef{AC}{}$ and $\funcdef{CommonR}{}$.
}
\begin{definition}

\newcommand{\hq}[1]{q\hifempty{#1}{}{(#1)}} %

\newcommand{\hc}[1]{c\hifempty{#1}{}{(#1)}} %

\newcommand{\hpreflist}[2]{
\begin{tabular}{rccccccccccccccccccccccc}
 #1
\end{tabular}
\hspace{1cm}
\begin{tabular}{rccccccccccccccccccccccc}
 #2
\end{tabular}
}

 Let $I = (R, H, \succ_{}, \hq{}, \mathcal{E}, \hc{})$ be an instance of \textsc{HRRC}{}.
 For an agent $a$,
 we define
 $\funcdef{AC}{a}$
 to be the set of agents that are acceptable to $a$.
 For $H' \subset H$,
 we define
 $\funcdef{CommonR}{H'} := \bigcap_{h \in H'}\funcdef{AC}{h}$
 to be the set of residents that are acceptable to all hospitals in $H'$.
\end{definition}
We then introduce an operation called {\em shrinking}.
Given an instance of \textsc{HRRC}{},
it reduces redundant hospital capacity without changing the set of strongly stable matching{}s of the instance.
\begin{definition}
 \label{L43}

\newcommand{\hq}[1]{q\hifempty{#1}{}{(#1)}} %

\newcommand{\hc}[1]{c\hifempty{#1}{}{(#1)}} %

\newcommand{\hpreflist}[2]{
\begin{tabular}{rccccccccccccccccccccccc}
 #1
\end{tabular}
\hspace{1cm}
\begin{tabular}{rccccccccccccccccccccccc}
 #2
\end{tabular}
}
 Let $I = (R, H, \succ_{}, \hq{}, \mathcal{E}, \hc{})$ be an instance of \textsc{HRRC}{}.
 Let $q': H \to \mathbb{N}_0$ be a mapping defined by
 $q'(h) := \min\set{\hq{h}, |\funcdef{AC}{h}|}$.
 Then, an instance $I' := (R, H, \succ_{}, q', \mathcal{E}, \hc{})$ of \textsc{HRRC}{}
 is called a shrinked instance of $I$.
 This operation is denoted by $\funcdef{Shrink}{}$.
 That is, we write $I' = \funcdef{Shrink}{I}$.
\end{definition}
\begin{lemma}
 \label{L44}

\newcommand{\hq}[1]{q\hifempty{#1}{}{(#1)}} %

\newcommand{\hc}[1]{c\hifempty{#1}{}{(#1)}} %

\newcommand{\hpreflist}[2]{
\begin{tabular}{rccccccccccccccccccccccc}
 #1
\end{tabular}
\hspace{1cm}
\begin{tabular}{rccccccccccccccccccccccc}
 #2
\end{tabular}
}
 Given an \textsc{HRRC}{}-instance $I$,
 $M$ is a strongly stable matching{} of $I$
 if and only if
 $M$ is a strongly stable matching{} of $\funcdef{Shrink}{I}$.
\end{lemma}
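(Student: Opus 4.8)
The plan is to prove both directions by showing that $I$ and $I' := \funcdef{Shrink}{I}$ have exactly the same set of feasible matchings, and that a feasible matching admits an SBP{}{} in $I$ if and only if it admits one in $I'$. The only difference between $I$ and $I'$ is that each capacity $q(h)$ is replaced by $q'(h) = \min\{q(h), |\funcdef{AC}{h}|\}$; since $q'(h) \le q(h)$, any matching of $I'$ is trivially a matching of $I$. For the converse, I would observe that in any matching $M$ of $I$, the residents assigned to $h$ all lie in $\funcdef{AC}{h}$ (by acceptability, condition (i) of matching), so $|M(h)| \le |\funcdef{AC}{h}|$; combined with $|M(h)| \le q(h)$ this gives $|M(h)| \le q'(h)$, so $M$ is a matching of $I'$. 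Feasibility depends only on $\mathcal{E}$ and $\hc{}$, which are unchanged, so $M$ is feasible in $I$ iff feasible in $I'$.

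Next I would handle SBP{}{}s. Fix a feasible matching $M$ (of both instances). The notion of acceptable pair is the same in $I$ and $I'$, and condition (i) of Definition~\ref{L13} ($M \setminus \{(r,M(r))\} \cup \{(r,h)\}$ feasible) refers only to $\mathcal{E}, \hc{}$, hence is identical in the two instances. Condition (ii) of Definition~\ref{L13} ($\exists r' \in M(h)$ with $r \succ_h r'$) refers only to $\succ$ and $M$, also identical. So the only place a discrepancy could arise is in whether $(r,h)$ is a BP{}{} at all, specifically condition (iii) of Definition~\ref{L11}: ``$h$ is undersubscribed or $h$ prefers $r$ to some member of $M(h)$.'' The subtle case is when $h$ is undersubscribed in $I$ (i.e. $|M(h)| < q(h)$) but full in $I'$ (i.e. $|M(h)| = q'(h) < q(h)$). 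I would argue this case is harmless: $|M(h)| = q'(h) = |\funcdef{AC}{h}|$ forces $M(h) = \funcdef{AC}{h}$, so every resident acceptable to $h$ is already assigned to $h$; in particular, if $(r,h)$ is an acceptable pair then $r \in M(h)$, and then $(r,h)$ cannot be a BP{}{} for $M$ in $I$ either (since $r$ would be assigned to $h$, contradicting condition (ii) of Definition~\ref{L11} that $r$ be unassigned or prefer $h$ to $M(r)=h$). Hence any acceptable pair that is a BP{}{} in $I$ is also a BP{}{} in $I'$, and conversely (the BP{}{} conditions in $I'$ only become stricter on the capacity side since $q' \le q$, but are never more permissive). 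Therefore the sets of BP{}{}s coincide, and since the SBP{}{} refinement uses only the shared data, the sets of SBP{}{}s coincide.

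Putting these together: $M$ is a strongly stable matching of $I$ iff $M$ is feasible and SBP{}{}-free in $I$ iff $M$ is feasible and SBP{}{}-free in $I'$ iff $M$ is a strongly stable matching of $\funcdef{Shrink}{I}$. I expect the main obstacle — really the only non-routine point — to be the asymmetry in the BP{}{} definition between undersubscribed hospitals in $I$ versus $I'$: one must notice that shrinking only ever occurs ``down to $|\funcdef{AC}{h}|$,'' so a hospital that looks undersubscribed-but-full after shrinking has literally exhausted its acceptable residents, leaving no candidate $r$ with which it could block. Everything else is just unwinding the definitions and noting that $\mathcal{E}$, $\hc{}$, and $\succ$ are untouched by $\funcdef{Shrink}{}$.
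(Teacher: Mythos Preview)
Your proposal is correct and takes essentially the same approach as the paper. The paper argues the two directions separately (for each direction, showing that a BP{}{} in one instance is a BP{}{} in the other and then that the SBP{}{} conditions transfer), whereas you organize it more symmetrically by proving once that the feasible matchings, BP{}{}s, and SBP{}{}s coincide; the key step in both is exactly your observation that if $|M(h)| = q'(h) < q(h)$ then $q'(h) = |\funcdef{AC}{h}|$ forces $M(h) = \funcdef{AC}{h}$, so no acceptable resident outside $M(h)$ exists to block with $h$.
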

\begin{proof}
{

\newcommand{\hq}[1]{q\hifempty{#1}{}{(#1)}} %

\newcommand{\hc}[1]{c\hifempty{#1}{}{(#1)}} %

\newcommand{\hpreflist}[2]{
\begin{tabular}{rccccccccccccccccccccccc}
 #1
\end{tabular}
\hspace{1cm}
\begin{tabular}{rccccccccccccccccccccccc}
 #2
\end{tabular}
}

 Let $I := (R, H, \succ_{}, \hq{}, \mathcal{E}, \hc{})$ be an \textsc{HRRC}{}-instance and
 $I' := (R, H, \succ_{}, q', \mathcal{E}, \hc{})$ be a shrinked instance of $I$.
 First, suppose that $M$ is a strongly stable matching{} of $I$.
 Since $|M(h)| \le \hq{h}$ and $|M(h)| \le |\funcdef{AC}{h}|$ for any $h\in H$,
 by the definition of $q'$,
 $|M(h)|\le q'(h)$ for any $h\in H$.
 Thus, $M$ is a matching of $I'$.
 Further, since $I$ and $I'$ have the same regional caps
 and $M$ is feasible{}{} in $I$, %
 $M$ is also feasible{}{} in $I'$.
 Let $(r_*, h_*)$ be any BP{}{} for $M$ in $I'$ (if any).
 Since $I$ and $I'$ have the same preference lists
 and $q'(h)\le \hq{h}$ for any $h\in H$,
 $(r_*, h_*)$ is also a BP{}{} for $M$ in $I$.
 Since $M$ is strongly stable in $I$,
 $(r_*, h_*)$ satisfies
 neither condition (i) nor (ii) of \cref{L13} in $I$.
 This is also true in $I'$,
 since $I$ and $I'$ have the same preference lists and regional caps.
 Then, $(r_*, h_*)$ is not an SBP{}{} for $M$ in $I'$.
 Thus, $M$ is also a strongly stable matching{} of $I'$.

 Conversely,
 suppose that $M$ is a strongly stable matching{} of $I'$.
 Since the $\funcdef{Shrink}{}$ operation only reduces the capacity of hospitals,
 it is straightforward to verify that $M$ is also a feasible{}{} matching of $I$.
 Let $(r_*, h_*)$ be any BP{}{} for $M$ in $I$ (if any).
 Then, $r_*$ is unassigned or prefers $h_*$ to $M(r_*)$.
 There are two cases for $h_*$:
 $|M(h_*)|<\hq{h_*}$,
 or
 there exists $r'\in M(h_*)$ such that $r_* \succ_{h_*} r'$.
 In the former case,
 $|M(h_*)|<|\funcdef{AC}{h_*}|$ also holds
 since $(r_*, h_*)\not\in M$.
 By the definition of $q'$,
 we have $|M(h_*)|<q'(h_*)$,
 and hence $(r_*, h_*)$ blocks $M$ in $I'$.
 In the latter case,
 $r_* \succ_{h_*} r'$ also holds in $I'$
 since $I$ and $I'$ have the same preference lists.
 Therefore, $(r_*, h_*)$ blocks $M$ in $I'$.
 Thus, in either case, $(r_*, h_*)$ blocks $M$ in $I'$.
 Since $M$ is strongly stable in $I'$,
 $(r_*, h_*)$ satisfies
 neither condition (i) nor (ii) of \cref{L13} in $I'$.
 This is also true in $I$,
 since $I$ and $I'$ have the same preference lists and regional caps.
 Then, $(r_*, h_*)$ is not an SBP{}{} for $M$ in $I$.
 Thus, $M$ is also a strongly stable matching{} of $I$.
}
\end{proof}

\subsubsection{Algorithm for $(2 \times 2)$-free Subinstances}
\label{L41}
Here we propose an algorithm to solve $(2 \times 2)$-free subinstances.
Before describing the algorithm,
we introduce a property of \textsc{HR}{}, which will be
used in the correctness proof.
In \textsc{HR}{}, each hospital is assigned the same number of residents
in all the stable matchings \cite{Rot84,GS85,Rot86}.  %
The following corollary, which can be easily derived from the proof of Lemma 9 in \cite{HIM16},
states how the number of assigned residents changes if 
the capacity of a hospital is slightly decreased.
\begin{corollary}
 \label{L45}
 Let $I_0$ be an instance of \textsc{HR}{},
 $H$ be the set of hospitals,
 and $h\in H$ be any hospital with positive capacity.
 Let $I_1$ be equivalent to $I_0$ except that only the capacity
 of $h$ is decreased by $1$.
 For each $i \in \set{0,1}$, let $M_i$ be a stable matching of $I_i$.
 Then,
 (i) $ |M_1(h)| \ge |M_0(h)|-1$,
 and
 (ii) for any $h' \in H \setminus \set{h}$, $|M_1(h')| \ge |M_0(h')|$.
\end{corollary}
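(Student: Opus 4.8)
The plan is to reduce this to the structural theory of stable matchings in \textsc{HR}, specifically the Rural Hospitals theorem together with the standard correspondence between stable matchings under different capacities. First I would observe that, by the Rural Hospitals theorem, the number of residents assigned to each hospital is the same in every stable matching of a fixed instance; hence it suffices to prove the two inequalities for \emph{some} pair of stable matchings $M_0$ of $I_0$ and $M_1$ of $I_1$, and the statement then follows for all such pairs. So I would pick convenient witnesses: for instance, run the hospital-oriented Gale--Shapley algorithm on $I_0$ to get $M_0$, and then argue about what happens when the capacity of $h$ drops by one.

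The cleanest route is the classical ``cloning'' reduction that replaces a hospital of capacity $c$ by $c$ hospitals of capacity $1$ (with identical preference lists, inserted consecutively in every resident's list). Under this transformation stable matchings of \textsc{HR} correspond to stable matchings of the resulting one-to-one-on-that-hospital instance, and decreasing $q(h)$ by $1$ corresponds exactly to deleting one of the clones, say $h^{(c)}$. Deleting a single agent from one side of a stable marriage instance is well understood: by the Rural Hospitals theorem for the \textsc{SM} setting, the set of agents matched in a stable matching is invariant, and removing $h^{(c)}$ can only cause the residents formerly matched to $h^{(c)}$ (at most one) to become unmatched or to ``cascade'' — each displaced resident is picked up by proposing down its list, and along such an augmenting-type alternating path every other hospital's occupancy is non-decreasing while $h$'s own occupancy drops by at most one (it loses $h^{(c)}$ but may regain a resident at another clone). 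Translating back through the cloning correspondence gives exactly (i) $|M_1(h)| \ge |M_0(h)| - 1$ and (ii) $|M_1(h')| \ge |M_0(h')|$ for all $h' \ne h$.

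Concretely, the key steps in order are: (1) state the Rural Hospitals theorem and reduce to exhibiting one good pair $(M_0, M_1)$; (2) set up the cloning reduction and record that it preserves stability in both directions and that ``decrease $q(h)$ by $1$'' becomes ``delete one clone $h^{(c)}$''; (3) invoke the known monotonicity of stable matchings under single-agent deletion in \textsc{SM} (or, equivalently, cite that this is precisely what the proof of Lemma~9 in \cite{HIM16} establishes, since the corollary is advertised as being ``easily derived'' from it) to conclude occupancy is non-decreasing at every hospital other than $h$ and drops by at most one at $h$; (4) pull the inequalities back along the cloning map. Since the paper explicitly says \cref{L45} follows easily from the proof of Lemma~9 in \cite{HIM16}, the intended proof is presumably a short direct argument mirroring that proof rather than the full cloning detour; I would therefore also be prepared to give the argument inline by tracking the RGS-type execution: build a stable matching of $I_1$ from $M_0$ by having the (at most one) resident that $h$ must now reject re-propose, and show by the usual ``proposals only improve hospitals'' invariant that this process never decreases any $|M(h')|$ and decreases $|M(h)|$ by at most one.

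The main obstacle is making step (3) rigorous without re-deriving a chunk of stable-matching theory: one must be careful that after decreasing $q(h)$, the resident $h$ rejects is chosen as its \emph{worst} current assignee, that the ensuing chain of rejections is an alternating path that terminates, and that at termination each hospital other than $h$ either kept all its residents or strictly gained one — i.e.\ ruling out the scenario where the cascade leaves some $h'$ with fewer residents. This is exactly the content of the referenced Lemma~9 proof, so the honest and economical move is to cite it; the work here is just to phrase the corollary's conclusion in the capacity-decrement language and note that parts (i) and (ii) are immediate consequences.
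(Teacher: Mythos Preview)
Your proposal is correct and matches the paper's treatment: the paper gives no proof of \cref{L45} at all, merely stating that it ``can be easily derived from the proof of Lemma~9 in \cite{HIM16}'', which is exactly the citation you identify as the ``honest and economical move''. Your additional detail (the Rural Hospitals reduction to a single witness pair, the cloning-to-\textsc{SM} reduction, and the cascade/re-proposal argument) is sound and would constitute a self-contained proof, but it goes well beyond what the paper itself supplies.
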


{

\newcommand{\hq}[1]{q\hifempty{#1}{}{(#1)}} %

\newcommand{\hc}[1]{c\hifempty{#1}{}{(#1)}} %

\newcommand{\hpreflist}[2]{
\begin{tabular}{rccccccccccccccccccccccc}
 #1
\end{tabular}
\hspace{1cm}
\begin{tabular}{rccccccccccccccccccccccc}
 #2
\end{tabular}
}
 Now, the algorithm is shown in \cref{L40}.
 It accepts an \textsc{HRRC}{}-instance $(R, H, {\succ_{},} \hq{}, \mathcal{E}, \hc{})$ that satisfies all of the following properties:
 $\mathcal{E}$ is disjoint;
 for any $r\in R$, $|\funcdef{AC}{r}|\le2$;
 for any $h\in H$, $|\funcdef{AC}{h}|\le2$;
 for any $E\in \mathcal{E}$, $|E|\le2$;
 for any $E\in \mathcal{E}$ such that $|E|=2$, $|\funcdef{CommonR}{E}|\le1$;
 and
 for any $h\in H$, $\hq{h}\le2$.
 Note that a region such that $|E|=2$ and $|\funcdef{CommonR}{E}| = 2$ corresponds to a $(2 \times 2)$-subinstance in a given instance.
Hence if there is no $(2 \times 2)$-subinstance,
for any $E\in \mathcal{E}$ such that $|E|=2$,
$|\funcdef{CommonR}{E}|\le1$ is guaranteed.

\cref{L40} finds a stable matching by using RGS{} algorithm.
As long as the output stable matching is infeasible, it chooses one region whose regional cap is broken, chooses one hospital $h'$ in it, and reduces its capacity by one.
When a feasible matching is found, \cref{L40} outputs it. 
}
\begin{algorithm}[tb]

\newcommand{\hq}[1]{q\hifempty{#1}{}{(#1)}} %

\newcommand{\hc}[1]{c\hifempty{#1}{}{(#1)}} %

\newcommand{\hpreflist}[2]{
\begin{tabular}{rccccccccccccccccccccccc}
 #1
\end{tabular}
\hspace{1cm}
\begin{tabular}{rccccccccccccccccccccccc}
 #2
\end{tabular}
}

 \caption{Finding a strongly stable matching{} for $(2 \times 2)$-free shrinked subinstance.}
 \label{L40}
 \begin{calgorithmic}[1]
  \item[\textbf{Input:}] A $(2,2,2)$-\textsc{HRRCDR}{}-instance $I = (R, H, \succ_{}, \hq{}, \mathcal{E}, \hc{})$.
  \REQUIRE
  For any $E\in \mathcal{E}$ s.t. $|E|=2$, $|\funcdef{CommonR}{E}|\le1$;
  and
  for any $h\in H$, $\hq{h}\le2$.
  \item[\textbf{Output:}] A strongly stable matching{} $M$ of $I$.
  \STATE Apply RGS{} algorithm to $I$ and let $M$ be the obtained stable matching.
  \label{L46}
  \WHILE{there exists $E \in \mathcal{E}$ such that $|M(E)| > \hc{E}$%
            } \label{L47}
       \IF{$|E| = 1$} \label{L48}
         \STATE Let $h'$ be the only hospital in $E$. 
       \ELSIF[$|E|=2$ also holds]{$|\funcdef{CommonR}{E}| = 1$} \label{L49}
         \STATE Let $r$ be the only resident in $\funcdef{CommonR}{E}$.
         \STATE Let $h_+$ and $h_-$ be the two hospitals in $E$ such that $h_+ \succ_{r} h_-$.
         \STATE Let $h'$ be $h_-$ if $\hq{h_-} > 0$, and $h_+$ otherwise. \label{L50}
       \ELSE[i.e., when $|E|=2$ and $|\funcdef{CommonR}{E}|=0$] \label{L51}
         \STATE Let $h'$ be any hospital in $E$ such that $\hq{h'} > 0$. 
       \ENDIF
       \STATE Modify $I$ by updating $\hq{h'} := \hq{h'} - 1$. \label{L52}
       \STATE Apply RGS{} algorithm to $I$ and let $M$ be the obtained stable matching.
       \label{L53}
  \ENDWHILE
  \RETURN $M$ \label{L54}
 \end{calgorithmic}
\end{algorithm}
\begin{lemma}%
 \label{L55}

\newcommand{\hq}[1]{q\hifempty{#1}{}{(#1)}} %

\newcommand{\hc}[1]{c\hifempty{#1}{}{(#1)}} %

\newcommand{\hpreflist}[2]{
\begin{tabular}{rccccccccccccccccccccccc}
 #1
\end{tabular}
\hspace{1cm}
\begin{tabular}{rccccccccccccccccccccccc}
 #2
\end{tabular}
}
 Given an \textsc{HRRC}{}-instance $I = (R, H, \succ_{}, \hq{}, \mathcal{E}, \hc{})$
 as described above,
 \cref{L40} finds a
 strongly stable matching{} in $O( |H| (|H| + |R|) )$ time.
\end{lemma}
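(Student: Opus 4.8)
The plan is to establish, in turn, that \cref{L40} (1) keeps every capacity nonnegative and halts within the claimed time, (2) returns a feasible matching of the original instance $I$, and (3) returns a matching with no SBP in $I$. The unifying observation is that each execution of \cref{L52} lowers $\sum_{h\in H}q(h)$ by exactly one while keeping all capacities at least $0$: whenever a region $E$ is picked in \cref{L47} it is over its regional cap, so $|M(E)|\ge 1$ and some hospital of $E$ has positive capacity, and in each branch the selected $h'$ satisfies $q(h')>0$ (in the branch of \cref{L50}, if $q(h_-)=0$ then $M(h_-)=\emptyset$, so $M(h_+)\ne\emptyset$ and $q(h_+)>0$). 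Since $q(h)\le 2$ for all $h$, the \textbf{while}-loop runs at most $2|H|$ times. Each pass runs RGS{} once and scans $\mathcal{E}$ for a region over its cap; because every preference list has length at most two, RGS{} runs in $O(|R|+|H|)$ time, and because the regions are pairwise disjoint, $|\mathcal{E}|\le|H|$, so the scan also costs $O(|R|+|H|)$. Thus the total running time is $O(|H|(|H|+|R|))$; halting is then immediate, and since the loop exits only with a feasible matching and the algorithm only ever lowered capacities, the returned $M$ is feasible in $I$ as well.

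For (3), let $(r_*,h_*)$ be any BP for $M$ in $I$; note $(r_*,h_*)\notin M$. Since $M$ is the stable matching RGS{} produced for the final modified instance and preference lists were never touched, $(r_*,h_*)$ is not a BP for $M$ there: conditions (i) and (ii) of a blocking pair depend only on the preference lists and on $M$, so they still hold, hence condition (iii) must fail for the final instance. Combined with the fact that (iii) holds for $I$, this forces $|M(h_*)|=q_{\mathrm{fin}}(h_*)<q_{\mathrm{orig}}(h_*)$ together with $r'\succ_{h_*}r_*$ for every $r'\in M(h_*)$; the latter already rules out condition (ii) of \cref{L13}. Since $q(h_*)$ was decremented by the algorithm, $h_*$ lies in a (unique, by disjointness) region $E_*\in\mathcal{E}$ that was picked in \cref{L47} at least once. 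It then remains to rule out condition (i) of \cref{L13}, i.e.\ to show $M\setminus\{(r_*,M(r_*))\}\cup\{(r_*,h_*)\}$ is infeasible, for which I would use the key claim: \emph{every region $E$ ever picked in \cref{L47} has $|M(E)|=c(E)$ when the algorithm halts.} Given the claim, it suffices to check $M(r_*)\notin E_*$ (so $r_*\notin M(E_*)$ and redirecting $r_*$ to $h_*$ makes $|M(E_*)|=c(E_*)+1$); this is routine by cases on $|E_*|$ and $|\funcdef{CommonR}{E_*}|$, the only delicate case being $|E_*|=2$, $\funcdef{CommonR}{E_*}=\{r\}$, $E_*=\{h_+,h_-\}$ with $h_+\succ_r h_-$, where the sole way to have $M(r_*)\in E_*$ is $r_*=r$, $h_*=h_+$, $M(r_*)=h_-$ --- but \cref{L50} decrements $q(h_+)$ only after $q(h_-)$ has already reached $0$, so $q(h_*)=q(h_+)$ having been decremented forces $q_{\mathrm{fin}}(h_-)=0$, hence $M(h_-)=\emptyset$, contradicting $M(r_*)=h_-$.

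To prove the key claim, fix a region $E$ picked in \cref{L47} and consider the last iteration in which it is picked. Immediately before that iteration's decrement, $|M(E)|\ge c(E)+1$; writing $M'$ for the stable matching returned by the ensuing RGS{} call and $h'\in E$ for the decremented hospital, \cref{L45} gives $|M'(h')|\ge|M(h')|-1$ and $|M'(h'')|\ge|M(h'')|$ for the other hospital $h''\in E$ (if any), hence $|M'(E)|\ge c(E)$. If $|M'(E)|>c(E)$, then $E$ is over its cap in $M'$; every later decrement hits a hospital of some region other than $E$, which by disjointness lies outside $E$, so by \cref{L45}(ii), chained over the remaining iterations, $|M(E)|$ never decreases afterwards --- hence $E$ stays over its cap until the loop exits, contradicting that the loop exits only with a feasible matching (and that $E$ is never picked again). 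Therefore $|M'(E)|=c(E)$, and the same monotonicity keeps $|M(E)|$ equal to $c(E)$ through the end.

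I expect the key claim to be the main obstacle --- in particular the bookkeeping that a region's assigned count can only be lowered by picking that very region, so that \cref{L45}(ii) composes cleanly over the suffix of iterations that avoid $E$; once this is in hand, the case analysis in (3) together with the role of the rule in \cref{L50} is short, and the running-time and termination arguments are routine.
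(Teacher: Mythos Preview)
Your proof is correct and follows essentially the same approach as the paper: use \cref{L45} to show that the region $E_*$ containing $h_*$ is full at termination, then invoke the rule at \cref{L50} to reach a contradiction in the only case where $r_*$ could lie in $E_*$. The differences are cosmetic --- you argue directly that every BP fails to be an SBP while the paper argues by contradiction from a putative SBP, and you track the \emph{last pick of $E$} rather than the paper's \emph{last decrement of $h_*$} (your choice is slightly cleaner, since disjointness then guarantees all later decrements lie outside $E$, making the chaining of \cref{L45}(ii) immediate).
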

\begin{proof}

\newcommand{\hq}[1]{q\hifempty{#1}{}{(#1)}} %

\newcommand{\hc}[1]{c\hifempty{#1}{}{(#1)}} %

\newcommand{\hpreflist}[2]{
\begin{tabular}{rccccccccccccccccccccccc}
 #1
\end{tabular}
\hspace{1cm}
\begin{tabular}{rccccccccccccccccccccccc}
 #2
\end{tabular}
}

Let $I_0$ be $I$ at the beginning of the algorithm, and
 $I_1$ and $M_1$ be $I$ and $M$ at the termination of the algorithm, respectively.
 We show that $M_1$
 is a strongly stable matching{} of $I_0$.
 We first show the feasibility.
 By the condition of the \textbf{while}-loop at \cref{L47}, $M_1$ is a feasible{}{} matching of $I_1$.
 Note that the regional caps are same in $I_0$ and $I_1$,
 and each hospital's capacity in $I_{1}$ is at most that in $I_{0}$.
 Thus, $M_1$ is a feasible{}{} matching of $I_0$.

 We then show the stability.
 Suppose that $M_1$ is not strongly stable in $I_0$ and
 let $(r_*, h_*)$ be an SBP{}{}.
 We show that
 there exists a region $E_*$ such that $h_*\in E_*$ and $|M_1(E_*)| = \hc{E_*}$.
 Note that $(r_*,h_*)$ is a BP{}{} for $M_1$ in $I_0$ but not in $I_1$.
 Since the only difference between $I_0$ and $I_1$
 are the capacities of the hospitals,
 $h_*$'s capacity must be reduced at \cref{L52} during the course of the algorithm.
 Let $E_*$ be the region selected in this round, i.e., $h_*\in E_*$.
 Note that $E_*$ is uniquely determined since $\mathcal{E}$ is disjoint.
 Let $t$ be the last time when $h_*$'s capacity is reduced.
 Just before $t$, $|M(E_*)| > \hc{E_*}$ holds and just after $t$, $|M(E_*)| \ge \hc{E_*}$ holds by \cref{L45}(i).
 Also, by \cref{L45}(ii), $|M(E_*)| \ge \hc{E_*}$ holds till the end of the algorithm.
 Therefore, $|M_1(E_*)| \ge \hc{E_*}$ holds.
 On the other hand, since $M_1$ is feasible{}{} in $I_1$, we have that $|M_1(E_*)| \le \hc{E_*}$.
 Thus $|M_1(E_*)|=\hc{E_*}$.

 Next, we show that $M_1(r_*) \ne \emptyset$ and $E_* = \set{ h_*, M_1(r_*) }$.
Since $(r_*,h_*)$ is a BP{}{} for $M_1$ in $I_{0}$, 
 $r_*$ is unassigned or $h_* \succ_{r_*} M_1(r_*)$.
 But since $(r_*,h_*)$ is not a BP{}{} for $M_1$ in $I_{1}$, 
 for any $r\in M_1(h_*)$, $r\succ_{h_*}r_*$ holds.
 As $(r_*,h_*)$ is an SBP{}{} for $M_1$ in $I_0$,
 by \cref{L13},
 $M_1 \setminus \set{(r_*,M_1(r_*))}\cup \set{(r_*,h_*)}$ is feasible{}{} in $I_0$.
 This implies that $r_*$ is currently assigned to a hospital in $E_*$
 because $|M_1(E_*)|=\hc{E_*}$ but moving $r_*$ to $h_*$ does not violate
 the regional cap of $E_*$.
 That is, 
 we have $M_1(r_*)\ne \emptyset$ and $M_1(r_*) \in E_*$.
 Since $h_* \ne M_1(r_*)$ and $|E_*|\le2$, 
 we have that $E_* = \set{h_*, M_1(r_*)}$.

 Since $r_*$ is acceptable to both $h_*$ and $M_1(r_*)$,
 we have that $r_* \in \funcdef{CommonR}{E_*}$.
 But since $|\funcdef{CommonR}{E_*}|\le1$ by assumption,
 we have that $|\funcdef{CommonR}{E_*}|=1$.
 As mentioned above, $h_*$'s capacity was reduced,
 and when it happened $E_*$ was treated at \cref{L49}
 since $|E_*|=2$ and $|\funcdef{CommonR}{E_*}|=1$.
 Since $h_* \succ_{r_*} M_1(r_*)$ and $h_*$'s capacity was reduced,
 the condition of \cref{L50} implies that $\hq{ M_1(r_*) } = 0$ holds in $I_1$.
 This contradicts that 
 $M_1$ is a matching of $I_1$.
 Thus, $M_1$ is a strongly stable matching{} of $I_0$.

Finally, we show the time complexity.
 Using suitable data structures,
 RGS{} algorithm can be implemented to run in $O(m+|H|+|R|)$ time \cite{GI89},
 where $m$ is the number of acceptable pairs in an input.
 Since $|\funcdef{AC}{h}|\le2$ for any $h\in H$,
 $m \le 2|H|$ holds, and hence each of
 \cref{L46,L53} can be executed in $O(|H|+|R|)$ time.
 Each step in the \textbf{while}-loop, except for \cref{L53}, can be executed in constant time.
 \cref{L47} can be executed in $O(|H|)$ time since $|\mathcal{E}| \le |H|$.
 The number of iterations of the \textbf{while}-loop is
 at most $2|H|$,
 since
 the capacity of one hospital decreases by one per iteration
 and the total capacity of the hospitals is at most $2|H|$, thanks to the shrinking operation.
 Thus, \cref{L40} reaches
 \cref{L54} in $O( |H| (|H| + |R|) )$ time
 and outputs $M$.
\end{proof}

\subsubsection{Main Algorithm}
\label{L42}
Finally,
we construct the main algorithm using the previous sub-algorithm.

\begin{theorem}
 \label{L7}

\newcommand{\hq}[1]{q\hifempty{#1}{}{(#1)}} %

\newcommand{\hc}[1]{c\hifempty{#1}{}{(#1)}} %

\newcommand{\hpreflist}[2]{
\begin{tabular}{rccccccccccccccccccccccc}
 #1
\end{tabular}
\hspace{1cm}
\begin{tabular}{rccccccccccccccccccccccc}
 #2
\end{tabular}
}
 There exists an $O(|H| (|H|+|R|) )$-time algorithm to find a
 strongly stable matching{} or report that none exists,
 given a $(2,2,2)$-\textsc{HRRCDR}{}-instance $I = (R, H, \succ_{}, \hq{}, \mathcal{E}, \hc{})$.
\end{theorem}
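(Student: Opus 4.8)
The plan is to implement the algorithm outlined just before \cref{L39}. Given the input $I$, first identify all $(2\times2)$-subinstances: by the remark before \cref{L40} these are exactly the regions $E\in\mathcal{E}$ with $|E|=2$ and $|\funcdef{CommonR}{E}|=2$, each together with the two residents forming $\funcdef{CommonR}{E}$. Since $\mathcal{E}$ is disjoint and, as we check below, each such block is ``closed'', the $p$ blocks $I_1,\dots,I_p$ have pairwise disjoint agent sets. Delete their residents, hospitals and regions from $I$ to obtain the residual instance $I_0$. For each $k\in\{1,\dots,p\}$, decide by brute force whether $I_k$ admits a strongly stable matching; if some $I_k$ does not, report that $I$ has none. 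Otherwise fix a strongly stable matching $M_k$ of each $I_k$, compute $I_0':=\funcdef{Shrink}{I_0}$, run \cref{L40} on $I_0'$ to obtain a matching $M_0$, and output $M_0\cup M_1\cup\dots\cup M_p$.

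The structural heart is a decomposition lemma. First I would observe that a $(2\times2)$-subinstance on $\{r_1,r_2,h_1,h_2\}$ is \emph{closed}: since $r_1,r_2\in\funcdef{CommonR}{E}\subseteq\funcdef{AC}{h_1}\cap\funcdef{AC}{h_2}$ and $|\funcdef{AC}{h_i}|\le2$, we get $\funcdef{AC}{h_1}=\funcdef{AC}{h_2}=\{r_1,r_2\}$, and by mutual acceptability $\funcdef{AC}{r_1}=\funcdef{AC}{r_2}=\{h_1,h_2\}$, while disjointness of $\mathcal{E}$ means the region $E=\{h_1,h_2\}$ meets no other region. Hence $I$ is the disjoint union of $I_0,I_1,\dots,I_p$ in the strong sense that no acceptable pair and no region crosses a block boundary. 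I would then prove: for such a disjoint union, a matching $M$ is feasible (resp.\ strongly stable) for $I$ if and only if its restriction $M_k:=M\cap(R_k\times H_k)$ is feasible (resp.\ strongly stable) for every block $I_k$. Feasibility is immediate because each region lies in a single block. For stability, any BP $(r,h)$ of $M$ has $r,h$ in a common block $I_k$; condition~(ii) of \cref{L13} refers only to $M(h)$, hence is internal to that block; and for condition~(i), moving $r$ to $h$ perturbs only regions containing $h$ or $M(r)$, and if $M(r)\ne\emptyset$ then $M(r)$ lies in the same block as $r$, so feasibility of the perturbed matching in $I$ is equivalent to feasibility of the corresponding perturbation inside $I_k$. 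Thus $(r,h)$ is an SBP for $M$ in $I$ iff it is an SBP for $M_k$ in $I_k$.

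Next I would check that $I_0$ always admits a strongly stable matching and that \cref{L40} finds one. Deleting agents can only shrink $\funcdef{AC}{}$ and $\funcdef{CommonR}{}$ sets, so a size-$2$ region $E$ of $I_0$ with $|\funcdef{CommonR}{E}|=2$ would already be a $(2\times2)$-subinstance of $I$ and hence have been removed; therefore every size-$2$ region of $I_0$ satisfies $|\funcdef{CommonR}{E}|\le1$. Applying $\funcdef{Shrink}{}$ leaves the preference lists — and thus $\funcdef{AC}{}$ and $\funcdef{CommonR}{}$ — unchanged, while enforcing $q'(h)\le|\funcdef{AC}{h}|\le2$ for all $h$; hence $I_0'$ meets every precondition of \cref{L55}, so \cref{L40} returns a strongly stable matching $M_0$ of $I_0'$, which by \cref{L44} is strongly stable for $I_0$. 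Combining this with the decomposition lemma, $I$ admits a strongly stable matching if and only if every $I_k$ ($1\le k\le p$) does, and in that case $M_0\cup M_1\cup\dots\cup M_p$ is such a matching; so the algorithm is correct.

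For the running time: there are at most $|H|$ regions, and each region of size two is tested for $|\funcdef{CommonR}{E}|=2$ in $O(1)$ time because $|\funcdef{AC}{h}|\le2$, so the $(2\times2)$-subinstances are found in $O(|H|)$ time; building $I_0$ and applying $\funcdef{Shrink}{}$ costs $O(|H|+|R|)$; each $I_k$ has bounded size, so brute force costs $O(1)$ each and $O(p)=O(|H|)$ in total; \cref{L40} on $I_0'$ costs $O(|H|(|H|+|R|))$ by \cref{L55}; and merging the partial matchings costs $O(|H|+|R|)$. The overall bound is $O(|H|(|H|+|R|))$. I expect the main obstacle to be the decomposition lemma — in particular verifying that condition~(i) of an SBP is genuinely a per-block condition — together with confirming that $\funcdef{Shrink}{I_0}$ satisfies every hypothesis of \cref{L55}; everything else is bookkeeping.
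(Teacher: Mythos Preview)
Your proposal is correct and follows essentially the same approach as the paper: identify and peel off the $(2\times2)$-subinstances, brute-force each of them, apply \cref{L40} to $\funcdef{Shrink}{I_0}$, and take the union. The paper's proof is terser and leaves implicit both the decomposition lemma (it just says ``there is no acceptable pair between different subinstances'') and the verification that $\funcdef{Shrink}{I_0}$ meets the preconditions of \cref{L55}; your write-up makes these explicit, which is an improvement rather than a deviation.
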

\begin{proof}

\newcommand{\hq}[1]{q\hifempty{#1}{}{(#1)}} %

\newcommand{\hc}[1]{c\hifempty{#1}{}{(#1)}} %

\newcommand{\hpreflist}[2]{
\begin{tabular}{rccccccccccccccccccccccc}
 #1
\end{tabular}
\hspace{1cm}
\begin{tabular}{rccccccccccccccccccccccc}
 #2
\end{tabular}
}

\cref{L56}
 achieves our goal.
First suppose that \cref{L56} returns a matching $M$ at \cref{L57}.
Then each $M_{i}$ ($i\in[1,p]$) is a strongly stable matching for $I_i$.
 Also \cref{L55,L44}
 imply that $M_0$ is a strongly stable matching{} for $I_0$.
Since $I_{0}, I_{1}, \ldots, I_{p}$ are disjoint,
 there is no acceptable pair between different subinstances.
Also, there is no SBP within $M_{i}$, so $M$ is a strongly stable matching for $I$.
Next, suppose that \cref{L56} returns $\bot$ at \cref{L58}.
Then $M_i = \bot$ for some $i \in [1,p]$.
This implies that any feasible matching for $I$ has an SBP{}{} within $I_{i}$
and hence $I$ admits no strongly stable matching{}.
\begin{algorithm}[tb]
 \caption{An algorithm for \textsc{Strong-$(2,2,2)$-HRRCDR}.}
 \label{L56}
 \begin{calgorithmic}[1]
  \item[\textbf{Input:}] A $(2,2,2)$-\textsc{HRRCDR}{}-instance $I = (R, H, \succ_{}, \hq{}, \mathcal{E}, \hc{})$.
  \item[\textbf{Output:}] A strongly stable matching{} $M$ of $I$ if it exists, $\bot$ otherwise. %
  \STATE Find all the $(2 \times 2)$-subinstances in $I$ and let them be $I_i := (R_i, H_i, \succ_{}, \hq{}, \mathcal{E}_i, \hc{})$ for $i \in [1,p]$, where $\mathcal{E}_i := \set{ H_i }$.
  \label{L59}
  \STATE Remove agents and regions of all $I_1,\dots,I_p$ from $I$, and let 
         $I_0 := (R_0, H_0, \succ_{}, \hq{}, \mathcal{E}_0, \hc{})$ be the resultant instance.
  \label{L60}
  \STATE For each $i \in [1,p]$,
  find a strongly stable matching{} of $I_i$ by a brute-force search.
  Let $M_i$ be an obtained strongly stable matching{} if found; $M_i := \bot$ otherwise.
         \label{L61}
  \STATE Apply \cref{L40} to $\funcdef{Shrink}{I_0}$ and obtain an output $M_0$.
         \label{L62}
  \IF{for any $i \in [0,p]$, $M_i \ne \bot$}
  \label{L63}
  \RETURN $M=\bigcup_{i\in[0,p]} M_i$
  \label{L57}
  \ELSE
  \RETURN $\bot$
  \label{L58}
  \ENDIF
 \end{calgorithmic}
 \end{algorithm}

  We then show the time complexity.
 Since each region's size is at most two,
 \cref{L59} can be executed in $O(|\mathcal{E}|)$ time.
 Construction of $I_0$ in \cref{L60}
 can be done in $O(|H|+|R|)$ time.
 Since each $I_i$ ($i \in [1,p]$) is of constant size,
 each brute-force search in \cref{L61} runs in $O(1)$ time.
 Invocation of $\funcdef{Shrink}{}$ in \cref{L62}
 runs in $O(|H|)$ time.
 By \cref{L55},
 \cref{L40} can be executed
 in $O(|H| (|H| + |R|) )$ time.
 \cref{L63,L57}
 can be executed in $O(p)$ time and $O(|H|)$ time, respectively.
 Since 
 the number of $(2 \times 2)$-subinstances, $p$, is at most $|\mathcal{E}|$
 and $|\mathcal{E}| \le |H|$,
 the whole time complexity is $O(|H| (|H|+|R|) )$.
\end{proof}

  \subsection{\textsc{PPN-3-SAT}{}  }
  \label{L35}
To make our reductions for hardness results simpler,
we introduce a restricted variant of \textsc{SAT},
which we call \textsc{PPN-3-SAT}{},
and show its NP-completeness.
An instance of \textsc{PPN-3-SAT}{} consists of
a set $U$ of variables and a set $C$ of clauses over $U$
such that
for any $c\in C$, $c$ has two or three literals, and
for any $x\in U$, $x$ appears exactly twice
and $\bar{x}$ appears exactly once in $C$.
\textsc{PPN-3-SAT}{} asks if
there exists a satisfying assignment.
We will prove that \textsc{PPN-3-SAT}{} is NP-complete.
The proof is almost the same as
the proof of Theorem 2.1 in \cite{Tov84}.
\begin{lemma}
 \label{L64}
 \textsc{PPN-3-SAT}{} is NP-complete.
\end{lemma}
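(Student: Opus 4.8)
The plan is to reduce from \textsc{3-SAT}{}, mimicking Tovey's proof of Theorem~2.1 in \cite{Tov84}. The target is to produce, for any \textsc{3-SAT}{} instance, an equisatisfiable \textsc{PPN-3-SAT}{} instance in which every clause has two or three literals and every variable $x$ occurs exactly twice positively and exactly once negatively. First I would note membership in NP is trivial, since a satisfying assignment is a polynomial-size certificate checkable in polynomial time. For the hardness direction, I would start from an arbitrary \textsc{3-SAT}{} instance $(U, C)$. Tovey's argument first shows one may assume each variable occurs at most three times in total; I would invoke that normalization (or reprove it quickly): if a variable $y$ occurs $k \ge 4$ times, replace its occurrences by fresh variables $y_1, \dots, y_k$ and append the ``consistency ring'' clauses $(\bar y_1 \vee y_2), (\bar y_2 \vee y_3), \dots, (\bar y_k \vee y_1)$, which forces all $y_i$ to take the same value; each $y_i$ then occurs exactly three times. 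A variable occurring only once or twice can likewise be padded by introducing a copy and a pair of two-literal clauses equating them, so one reaches an instance in which every variable occurs exactly three times (counting both polarities) while all clauses have two or three literals.

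The remaining task is to control the \emph{polarity pattern}: we need exactly two positive and one negative occurrence of each variable. Given a variable $x$ occurring three times, there are four cases for its occurrence pattern: $(3,0)$, $(2,1)$, $(1,2)$, $(0,3)$ positive/negative. The $(2,1)$ case is already what we want. For the other three cases I would introduce a fresh variable and short clauses enforcing a relationship that rebalances the polarities. For instance, to fix a variable $x$ with pattern $(1,2)$ (one positive, two negative), introduce a fresh variable $x'$, add the two-literal clauses $(x \vee x')$ and $(\bar x \vee \bar x')$ forcing $x' = \bar x$, and then replace one of the two negative occurrences of $x$ (literal $\bar x$) by the positive literal $x'$; after this $x$ has pattern $(1,1)$... which is still wrong, so more care is needed. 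The cleaner route, and the one I expect to mirror Tovey, is: whenever a variable's pattern deviates from $(2,1)$, substitute fresh equated copies so that each copy can be given the desired pattern, using the equating gadget $x' = x$ (clauses $(\bar x \vee x'), (x \vee \bar x')$) or $x' = \bar x$ (clauses $(x \vee x'), (\bar x \vee \bar x')$) to flip polarity as needed, then distribute the original occurrences among the copies so every copy sees exactly two positives and one negative. I would work out the bookkeeping so that the total count balances: the equating gadgets themselves contribute occurrences, so the copies and the distribution must be chosen to absorb these exactly.

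The main obstacle I anticipate is precisely this balancing of occurrence counts: each auxiliary two-literal clause used to equate or negate a copy consumes occurrences of the variables involved, so I must design the gadgets so that the \emph{net} effect on every variable (original copies and all freshly introduced variables alike) is exactly two positive and one negative occurrence, with no clause exceeding three literals or dropping below two. I would handle this by giving, for each of the bad patterns $(3,0)$, $(1,2)$, $(0,3)$, an explicit local replacement that introduces a small fixed number of fresh variables together with two-literal equating/negating clauses, verifying by direct count that every variable in the gadget (new and old) ends up with pattern $(2,1)$; since the construction is local and linear in the instance size it runs in polynomial time. Finally I would argue equivalence: the forcing clauses make the copies logically determined by the originals, so a satisfying assignment of the \textsc{3-SAT}{} instance lifts to one of the \textsc{PPN-3-SAT}{} instance and conversely, completing the reduction and hence the proof that \textsc{PPN-3-SAT}{} is NP-complete.
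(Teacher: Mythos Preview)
Your proposal has a genuine gap: you never complete the polarity-balancing step that your two-phase approach requires. Your explicit attempt at the $(1,2)$ case admittedly fails (``which is still wrong''), and you defer the remaining cases to unspecified gadgets whose existence you promise but do not exhibit. This is the technical core of the argument on your chosen route, and designing gadgets that simultaneously fix the polarity pattern of the old variable \emph{and} give every freshly introduced variable exactly two positive and one negative occurrence is not immediate---each two-literal equating clause you add consumes occurrences of both variables involved, and getting the counts to close up for all four patterns requires real work you have not done. (Your first-phase padding also has a count problem: equating a variable with a fresh copy via two clauses adds two occurrences, so a variable with two occurrences ends up with four.)

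The paper sidesteps all of this with a single-pass construction that you almost have. As in Tovey, for a variable $x$ with $k$ occurrences it introduces copies $x_1,\dots,x_k$ and a cyclic ring of $k$ two-literal implication clauses. The twist is that the ring uses polarity-dependent literals: set $\ell_i = x_i$ if the $i$th occurrence of $x$ is positive and $\ell_i = \overline{x_i}$ if it is negative, append the clauses $\{\ell_i \vee \overline{\ell_{i+1}}\}$ cyclically, and replace the $i$th occurrence in its original clause by the \emph{positive} literal $x_i$ regardless of its original sign. Each $x_i$ then appears exactly three times---once positively in its original clause, once as $\ell_i$, and once as $\overline{\ell_i}$ in the ring---and whichever of $x_i$ or $\overline{x_i}$ the symbol $\ell_i$ denotes, this is exactly two positive and one negative occurrence. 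The ring forces all $\ell_i$ to a common truth value, which corresponds precisely to the two truth values of the original $x$, so equisatisfiability is immediate. No separate occurrence-count normalization, no polarity case analysis, no padding: one uniform step does everything.
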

  \begin{proof}
{

\newcommand{\hq}[1]{q\hifempty{#1}{}{(#1)}} %

\newcommand{\hc}[1]{c\hifempty{#1}{}{(#1)}} %

\newcommand{\hpreflist}[2]{
\begin{tabular}{rccccccccccccccccccccccc}
 #1
\end{tabular}
\hspace{1cm}
\begin{tabular}{rccccccccccccccccccccccc}
 #2
\end{tabular}
}

 Membership in NP is obvious.
 We will reduce from \textsc{3-SAT}{}.
 Given a \textsc{3-SAT}{}-instance,
 we construct a \textsc{PPN-3-SAT}{}-instance as follows.
 For each variable $x$
 in the original \textsc{3-SAT}{} instance,
 perform the following procedure:
 suppose $x$ appears $k$ ($k \ge 1$) times.
 Create $k$ new variables $x_1, \dots, x_k$ and
 for $i = 1, \dots, k$,
 replace
 the $i$th occurrence of $x$
 with $x_i$.
 Append the clause $\{ \ell_i \vee \overline{\ell_{i+1}} \}$ for
 $i = 1, \dots, k-1$ and the clause $\{ \ell_k \vee \overline{\ell_1} \}$,
 where
 $\ell_i = x_{i}$
 if the $i$th occurrence of $x$
 is positive and
 $\ell_i = \overline{x_{i}}$
 otherwise.

 In the new instance, the clause $\{ \ell_i \vee \overline{\ell_{i+1}} \}$ implies that
 if $\ell_i$ is false, $\ell_{i+1}$ must be false as well.
 The cyclic structure of the clauses therefore forces $\ell_i$ to be
 either all true or all false, so the new instance is satisfiable if and only if the
 original one is.
 Moreover the transformation requires polynomial time.
}
  \end{proof}

  \subsection{NP-completeness of \textsc{Strong-$(2,2,3)$-HRRCDR} }
  \label{L36}
\begin{theorem}%
 \label{L8}

 \textsc{Strong-$(2,2,3)$-HRRCDR} is NP-complete.
\end{theorem}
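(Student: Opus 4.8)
Membership in NP is immediate: given a matching one checks feasibility and, for every acceptable pair, whether it is a strong blocking pair, all in polynomial time. The work is the NP-hardness, and the plan is to reduce from \textsc{PPN-3-SAT}{} (\cref{L64}). Starting from \textsc{PPN-3-SAT}{} rather than plain \textsc{3-SAT}{} is what makes the $(2,2,3)$ budget affordable: since each variable occurs exactly twice positively and exactly once negatively, every literal occurrence touches only a bounded amount of structure, so we can keep every preference list of length at most two and every gadget (hence every region) small. It is also worth recalling that \textsc{Strong-$(2,2,2)$-HRRCDR}{} is in P (\cref{L7}), so the reduction is forced to use regions of size three in an essential way.

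The forcing device is the instance $G_2$ of \cref{L14}, which admits no strongly stable matching in isolation. The idea is to take a copy of $G_2$ with hospitals $h_1,h_2$, adjoin one external hospital $h$, and make $\{h_1,h_2,h\}$ a region of cap one. If $h$ is occupied by a resident from outside the copy, the region is already full, so $h_1$ and $h_2$ must stay empty; then every blocking pair inside the copy fails to be a strong blocking pair, because moving a resident into $h_1$ or $h_2$ is infeasible (condition~(i) of \cref{L13} fails) and, these hospitals being empty, there is no resident to displace (condition~(ii) fails). Hence the copy acquires a strongly stable matching. If instead $h$ is empty, the regional cap is vacuous and the copy behaves exactly like $G_2$, so it has no strongly stable matching. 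Thus a copy of $G_2$ is ``neutralized'' precisely when its attached hospital is occupied, and this on/off switch is used both to pin literal bits to prescribed values and to detect satisfaction of clauses.

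Concretely, for each variable $x$ I would build a variable gadget consisting of one hospital per occurrence of $x$ (two ``positive'' hospitals and one ``negative'' hospital), a few dedicated residents, and a single (disjoint) region, wired with $G_2$-type subgadgets so that a strongly stable matching is forced into one of exactly two local states: the positive hospitals occupied and the negative one empty (encoding $x=1$), or the mirror state (encoding $x=0$); any other configuration leaves some embedded $G_2$ un-neutralized. For each clause $c$ I would build a clause gadget consisting of a $G_2$-type subgadget together with a disjoint region of size at most three, arranged so that the subgadget is neutralized (equivalently, the gadget admits a strongly stable matching locally) if and only if at least one literal of $c$ is true under the states of the variable gadgets — the disjunction over the two or three literals of $c$ being carried by the size-three region and the hospital capacities, using that each literal occurrence feeds exactly one clause. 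Disjointness of the region system is maintained because every hospital lies in exactly one gadget and in exactly one region. Given a satisfying assignment of $I$, one reads off the variable-gadget states, occupies each true literal's hospital, puts every clause gadget in its neutralized configuration, and verifies directly that the resulting feasible matching has no strong blocking pair; conversely, a strongly stable matching of the constructed instance forces every variable gadget into one of its two consistent states (otherwise an embedded $G_2$ survives) and forces every clause gadget into its neutralized configuration (likewise), so the induced assignment satisfies every clause.

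The main obstacle is exactly the gadget engineering sketched above: realizing, simultaneously, the two-valued lock of each variable gadget and the (up to three-way) disjunction of each clause, all within preference lists of length two, regions of size three, and pairwise-disjoint regions, so that no hospital is ever shared between two regions and no unintended strong blocking pair appears — neither inside a gadget in either of its local states nor between two gadgets that share a region. Because condition~(i) of \cref{L13} concerns feasibility of the reassigned matching, gadgets sharing a region genuinely interact, so the correctness proof reduces to a careful case analysis ruling out a profitable reassignment for every acceptable pair in every configuration; this bookkeeping, rather than any single clever idea, is where the difficulty lies. The polynomial-time bound on the reduction is routine once the gadgets are fixed.
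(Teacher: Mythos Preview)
Your high-level plan matches the paper: reduce from \textsc{PPN-3-SAT}{}, use $G_2$ as the trigger, and neutralize a copy of $G_2$ by putting its two hospitals together with one extra hospital into a size-three region of cap one. That is precisely the paper's terminal-gadget mechanism, and your analysis of why the neutralization works (condition~(i) of \cref{L13} fails because the region is full, condition~(ii) fails because the $G_2$-hospitals are empty) is correct.

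But the proposal is an outline, not a proof; you explicitly defer the gadget construction, and that construction is the entire content of the argument. More to the point, the architecture you sketch does not match what actually works within the $(2,2,3)$ budget, in two places. First, the paper's variable gadget contains no $G_2$ and is not built on ``a single region''. It uses three regions per variable (two of size two for the positive occurrences, one of size three for the negative occurrence) and two auxiliary residents; these residents can fill either the two positive regions or the one negative region, and a literal-hospital $x'_{i,k}$ becomes safe against its clause-resident exactly when its region is full. The two-valued lock comes from this cap structure directly, not from embedded $G_2$'s; your suggestion would need a separate disjoint region per embedded $G_2$, and it is far from clear how to couple several such switches to the three occurrence-hospitals while keeping all lists of length two.

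Second, the size-three regions do not ``carry the disjunction'' in the clause gadgets. In the paper the clause gadgets for both 2- and 3-clauses use only size-two regions; a 3-clause is handled by cascading two binary OR-gates through an intermediate resident. The essential size-three regions sit in the variable gadgets (one per variable, for the negative occurrence) and in the terminal gadgets (the $G_2$-plus-one switch you describe). Placing a size-three region over the three literal-hospitals, as your wording suggests, is impossible under disjointness, since those hospitals already lie in variable-gadget regions.

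So: right source problem, right forcing device, right neutralization idea, but the sketched placement of $G_2$'s and of the size-three regions is off, and the actual gadget construction together with its case-by-case verification --- which is what makes this theorem nontrivial --- remains to be done.
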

\begin{proof}

\newcommand{\hq}[1]{q\hifempty{#1}{}{(#1)}} %

\newcommand{\hc}[1]{c\hifempty{#1}{}{(#1)}} %

\newcommand{\hpreflist}[2]{
\begin{tabular}{rccccccccccccccccccccccc}
 #1
\end{tabular}
\hspace{1cm}
\begin{tabular}{rccccccccccccccccccccccc}
 #2
\end{tabular}
}

 Membership in NP is obvious.
 We show a reduction from an NP-complete problem \textsc{PPN-3-SAT}{}.
 Let $I$ be an instance of \textsc{PPN-3-SAT}{} having
 $n$ variables $x_{i}$ ($i \in [1,n]$) and
 $m$ clauses $C_{j}$ ($j \in [1,m]$).
 For $k = 2, 3$, we call a clause containing $k$ literals a
 {\em $k$-clause}.
 Suppose that there are
 $m_2$ $2$-clauses
 and
 $m_3$ $3$-clauses
 (thus $m_2 + m_3 = m$),
 and assume without loss of generality that
 $C_{j}$ ($j \in [1,m_2]$) are $2$-clauses
 and
 $C_{j}$ ($j \in [m_2+1,m]$) are $3$-clauses.
 For each variable $x_{i}$, we construct a
 {\em variable gadget{}}.
 It consists of
 two residents $e'_{i,1}$ and $e'_{i,2}$,
 seven hospitals
 $b'_{i,1}$, $x'_{i,1}$,
 $b'_{i,2}$, $x'_{i,2}$,
 $b'_{i,3}$, $b'_{i,4}$, and $x'_{i,3}$,
 and
 three regions
 $\set{ b'_{i,1}, x'_{i,1} }$,
 $\set{ b'_{i,2}, x'_{i,2} }$, and
 $\set{ b'_{i,3}, b'_{i,4}, x'_{i,3} }$.
 A variable gadget{} corresponding to $x_{i}$ is called an {\em $x_{i}$-gadget}.
 For each clause $C_{j}$,
 we construct a {\em clause gadget{}}.
 If $C_{j}$ is a $2$-clause, we create
 two residents $c'_{j,1}$ and $c'_{j,2}$,
 two hospitals $a'_{j,1}$ and $y'_{j}$,
 and
 a region $\set{a'_{j,1}, y'_{j}}$.
 If $C_{j}$ is a $3$-clause, we create
 four residents $c'_{j,1}$, $c'_{j,2}$, $d'_{j}$, and $c'_{j,3}$,
 four hospitals $a'_{j,1}$, $a'_{j,2}$, $a'_{j,3}$, and $y'_{j}$,
 and
 two regions $\set{a'_{j,1}, a'_{j,2}}$ and $\set{a'_{j,3}, y'_{j}}$.
 A clause gadget{} corresponding to $C_{j}$ is called a {\em $C_{j}$-gadget}.
 For each clause $C_{j}$,
 we also construct a {\em terminal gadget{}}.
 We create
 three residents $g'_{j,1}$, $g'_{j,3}$, and $z'_{j}$,
 three hospitals $g'_{j,2}$, $g'_{j,4}$, and $t'_{j}$,
 and
 a region $\set{g'_{j,2}, g'_{j,4}, t'_{j}}$.
 A terminal gadget{} corresponding to $C_{j}$ is called a {\em $T_{j}$-gadget}.
 Thus, there are
 $2n + 2m_2 + 4m_3 + 3m$ residents,
 $7n + 2m_2 + 4m_3 + 3m$ hospitals, and
 $3n +  m_2 + 2m_3 +  m$ regions
 in the created \textsc{HRRCDR}{} instance, denoted $I'$.

 \begin{figure}[tb]
  \centering
  \hpreflist{
  $e'_{i,1}$: & $b'_{i,1}$ & $b'_{i,3}$ \\
  \\
  \\
  $e'_{i,2}$: & $b'_{i,2}$ & $b'_{i,4}$ \\
  \\
  \\
  \\
  \\
  \\
  }{
  $b'_{i,1}[1]$: & $e'_{i,1}$ \\
  $x'_{i,1}[1]$: & $\underline{ c'_{j_{i,1},\ell_{i,1}} }$ \\ %
  \\
  $b'_{i,2}[1]$: & $e'_{i,2}$ \\
  $x'_{i,2}[1]$: & $\underline{ c'_{j_{i,2},\ell_{i,2}} }$ \\ %
  \\
  $b'_{i,3}[1]$: & $e'_{i,1}$ \\
  $b'_{i,4}[1]$: & $e'_{i,2}$ \\
  $x'_{i,3}[1]$: & $\underline{ c'_{j_{i,3},\ell_{i,3}} }$ \\ %
  } \\
  \begin{tabular}{c}
   $\hc{ \set{ b'_{i,1}, x'_{i,1} } } := 1$,
   $\hc{ \set{ b'_{i,2}, x'_{i,2} } } := 1$,
   $\hc{ \set{ b'_{i,3}, b'_{i,4}, x'_{i,3} } } := 2$
  \end{tabular}
  \caption{Preference lists, capacities, and regional caps of $x_{i}$-gadget. Here and hereafter, for readability, agents defined outside the gadget are underlined. }
  \label{L65}
 \end{figure}
 Let us construct preference lists of variable gadgets.
 Suppose that $x_{i}$'s $k$th positive occurrence ($k = 1, 2$) is in the
 $j_{i,k}$th clause $C_{j_{i,k}}$ as
 the $\ell_{i,k}$th literal ($1 \le \ell_{i,k} \le 3$).
 Similarly, suppose that $x_{i}$'s negative occurrence is in the
 $j_{i,3}$th clause $C_{j_{i,3}}$ as the $\ell_{i,3}$th literal ($1 \le \ell_{i,3} \le 3$).
 Then, preference lists, capacities, and regional caps of the $x_{i}$-gadget are
 constructed as shown in
 \cref{L65}.%
 \begin{figure}[tb]
  \centering
  \hpreflist{
  $c'_{j,1}$: & $\underline{ x'_{i_{j,1},k_{j,1}} }$ & $a'_{j,1}$ \\
  $c'_{j,2}$: & $\underline{ x'_{i_{j,2},k_{j,2}} }$ & $a'_{j,1}$ \\
  }{
  $a'_{j,1}[1]$: & $c'_{j,1}$ & $c'_{j,2}$ \\
  $y'_{j}[1]$: & $\underline{ z'_{j} }$ \\
  } \\
  \begin{tabular}{c}
   $\hc{ \set{a'_{j,1}, y'_{j}} } := 1$
  \end{tabular}
  \caption{Preference lists, capacities, and regional caps of $C_{j}$-gadget ($j \in [1,m_2]$). }
  \label{L66}
 \end{figure}
 \begin{figure}[tb]
  \centering
  \hpreflist{
  $c'_{j,1}$: & $\underline{ x'_{i_{j,1},k_{j,1}} }$ & $a'_{j,1}$ \\
  $c'_{j,2}$: & $\underline{ x'_{i_{j,2},k_{j,2}} }$ & $a'_{j,1}$ \\
  \\
  $d'_{j}$: & $a'_{j,2}$ & $a'_{j,3}$ \\
  $c'_{j,3}$: & $\underline{ x'_{i_{j,3},k_{j,3}} }$ & $a'_{j,3}$ \\
  }{
  $a'_{j,1}[1]$: & $c'_{j,1}$ & $c'_{j,2}$ \\
  $a'_{j,2}[1]$: & $d'_{j}$ \\
  \\
  $a'_{j,3}[1]$: & $d'_{j}$ & $c'_{j,3}$ \\
  $y'_{j}[1]$: & $\underline{ z'_{j} }$ \\
  } \\
  \begin{tabular}{c}
  $\hc{ \set{a'_{j,1}, a'_{j,2}} } := 1$,
  $\hc{ \set{a'_{j,3}, y'_{j}} } := 1$
  \end{tabular}
  \caption{Preference lists, capacities, and regional caps of $C_{j}$-gadget ($j \in [m_2+1,m]$). }
  \label{L67}
 \end{figure}
 We then construct preference lists of clause gadget{}s.
 Consider a clause $C_{j}$, and suppose that its $\ell$th literal is of a
 variable $x_{i_{j,\ell}}$.
 Define $k_{j,\ell}$ as
 \[
 k_{j,\ell} := \pcase{ll}{
 1 & \text{if this is the first positive occurrence of $x_{i_{j,\ell}}$,} \\
 2 & \text{if this is the second positive occurrence of $x_{i_{j,\ell}}$, and} \\
 3 & \text{if this is the negative occurrence of $x_{i_{j,\ell}}$.}
 }
 \]
 If $C_{j}$ is a $2$-clause (respectively, $3$-clause),
 then the preference lists, capacities, and regional caps of the $C_{j}$-gadget are shown in
 \cref{L66} (respectively, \cref{L67}).%
\begin{figure}[tb]
  \centering
  \hpreflist{
  $g'_{j,1}$: & $g'_{j,2}$ & $g'_{j,4}$ \\
  $g'_{j,3}$: & $g'_{j,4}$ & $g'_{j,2}$ \\
  $z'_{j}$: & $\underline{ y'_{j} }$ & $t'_{j}$ \\
  }{
  $g'_{j,2}[1]$: & $g'_{j,3}$ & $g'_{j,1}$ \\
  $g'_{j,4}[1]$: & $g'_{j,1}$ & $g'_{j,3}$ \\
  $t'_{j}[1]$: & $z'_{j}$ \\
  } \\
  \begin{tabular}{c}
  $\hc{ \set{g'_{j,2}, g'_{j,4}, t'_{j}} } := 1$
  \end{tabular}
  \caption{Preference lists, capacities, and regional caps of $T_{j}$-gadget. }
  \label{L68}
\end{figure}%
Finally, we construct preference lists of terminal gadget{}s.
 The preference lists, capacities, and regional caps of
 the $T_{j}$-gadget are shown in \cref{L68}.
 Now the reduction is completed.
 It is not hard to see that the reduction
 can be performed in polynomial time and
 $I'$ is an instance of $(2,2,3)$-\textsc{HRRCDR}{}.

 It might be helpful to roughly explain intuition behind the construction of $I'$.
 Consider the $x_{i}$-gadget in \cref{L65} and let $E_{i,1}=\set{ b'_{i,1}, x'_{i,1} }$,
 $E_{i,2}=\set{ b'_{i,2}, x'_{i,2} }$, and $E_{i,3}=\set{ b'_{i,3}, b'_{i,4}, x'_{i,3} }$ be the regions.
 Three hospitals $x'_{i,1}$, $x'_{i,2}$, and $x'_{i,3}$ respectively correspond to the first positive occurrence,
 the second positive occurrence, and the negative occurrence of $x_{i}$.
 There are two ``proper'' assignments for an $x_{i}$-gadget,
 $M_{i,0}=\set{ (e'_{i,1}, b'_{i,3}), (e'_{i,2}, b'_{i,4}) }$ and $M_{i,1}=\set{ (e'_{i,1}, b'_{i,1}), (e'_{i,2}, b'_{i,2}) }$.
 We associate an assignment $x_{i}=0 $ with $M_{i,0}$ and $x_{i}=1$ with $M_{i,1}$.
 If $M_{i,0}$ is chosen, the region $E_{i,3}$ becomes full{}, so the hospital $x'_{i,3}$,
 corresponding to $\overline{x_{i}}$, is ``safe'' in the sense that $x'_{i,3}$ will not create an SBP{}{} even if it is left empty.
 Similarly, if $M_{i,1}$ is chosen, the regions $E_{i,1}$ and $E_{i,2}$ become full{},
 so both hospitals $x'_{i,1}$ and $x'_{i,2}$, corresponding to $x_{i}$, are safe.
 Note that a matching other than these two are not beneficial.
 For example, adopting a matching $\set{ (e'_{i,1}, b'_{i,1}), (e'_{i,2}, b'_{i,4}) }$ leaves $E_{i,2}$ and $E_{i,3}$ deficient{},
 so only $x'_{i,1}$ is safe, which is strictly worse than choosing $M_{i,1}$.

 Next, consider a $C_{j}$-gadget for a $2$-clause $C_{j}$ and see \cref{L66}.
 The residents $c'_{j,1}$ and $c'_{j,2}$ correspond to the first and the second literals of $C_{j}$, respectively.
 If the $\ell$th literal ($\ell=1, 2$) takes the value 0 by an assignment,
 then $c'_{j,\ell}$ must be assigned to the first-choice hospital (from a variable gadget)
 to avoid forming an SBP{}{} because such a hospital is ``unsafe'' as we saw above.
 Therefore, if $C_{j}$ is unsatisfied by an assignment, the hospital $a'_{j,1}$ must remain empty.
 Conversely, if $C_{j}$ is satisfied, $a'_{j,1}$ can be assigned a resident corresponding to a literal taking the value 1.
 The argument for a $3$-clause is a bit more complicated, but the same conclusion holds:
 $C_{j}$ is satisfied if and only if we can construct an assignment such that $a'_{j,3}$ is assigned a resident.
 Now let $E_{j}$ be the region $\set{a'_{j,1}, y'_{j} \}$ if $C_{j}$ is a $2$-clause and $\{a'_{j,3}, y'_{j} }$ if $C_{j}$ is a $3$-clause.
 Then $E_{j}$ is deficient{} if and only if $C_{j}$ is unsatisfied.

 Note that the resident $z'_{j}$ in the $T_{j}$-gadget and the hospital $y'_{j}$ in the $C_{j}$-gadget are mutually first-choice.
 Therefore, if $E_{j}$ is deficient{}, we have to assign $z'_{j}$ to $y'_{j}$, as otherwise they form an SBP{}{}.
 In this case, $T_{j}$-gadget reduces to the instance $G_{2}$ in \cref{L14}, so at least one SBP{}{} yields from this gadget.
 On the other hand, if $E_{j}$ is full, we can assign $z'_{j}$ to $t'_{j}$, so the $T_{j}$-gadget will not create an SBP{}{}.
 In this way, we can associate a satisfying assignment of $I$ with a strongly stable matching of $I'$. 

 Now we start a formal correctness proof.
 First, suppose that $I$ is satisfiable and
 let $A$ be a satisfying assignment.
 We construct a strongly stable matching{} $M$ of
 $I'$ from $A$ as follows.
 We say that a $2$-clause $C$ is
 {\em $(v_1,v_2)$-satisfied} by $A$ ($v_i\in \set{0,1}$)
 if the first and the second literals of $C$ take the values $v_1$ and $v_2$,
 respectively, in $A$.
 A $3$-clause is defined to be {\em $(v_1,v_2,v_3)$-satisfied} analogously.
 For an $x_{i}$-gadget,
 we consider the following two cases.
 \begin{itemize}
  \item If $x_{i} = 0$ by $A$, then add $(e'_{i,1}, b'_{i,3})$ and $(e'_{i,2}, b'_{i,4})$ to $M$.
  \item If $x_{i} = 1$ by $A$, then add $(e'_{i,1}, b'_{i,1})$ and $(e'_{i,2}, b'_{i,2})$ to $M$.
 \end{itemize}
 For a $C_{j}$-gadget ($j \in [1,m_2]$), we consider the following three cases.
 \begin{itemize}

  \item If $C_{j}$ is $(0,1)$-satisfied by $A$, then add $(c'_{j,1}, x'_{i_{j,1},k_{j,1}})$ and $(c'_{j,2},a'_{j,1})$ to $M$.
  \item If $C_{j}$ is $(1,0)$-satisfied by $A$, then add $(c'_{j,1},a'_{j,1})$ and $(c'_{j,2}, x'_{i_{j,2},k_{j,2}})$ to $M$.
  \item If $C_{j}$ is $(1,1)$-satisfied by $A$, then add $(c'_{j,1},a'_{j,1})$ to $M$.
 \end{itemize}
 For a $C_{j}$-gadget ($j \in [m_2+1,m]$), we consider the following seven cases.
 \begin{itemize}

  \item If $C_{j}$ is $(0,0,1)$-satisfied by $A$, then add $(c'_{j,1}, x'_{i_{j,1},k_{j,1}})$, $(c'_{j,2}, x'_{i_{j,2},k_{j,2}})$, $(d'_{j}, a'_{j,2})$, and $(c'_{j,3}, a'_{j,3})$ to $M$.
  \item If $C_{j}$ is $(0,1,0)$-satisfied by $A$, then add $(c'_{j,1}, x'_{i_{j,1},k_{j,1}})$, $(c'_{j,2}, a'_{j,1})$, $(d'_{j}, a'_{j,3})$, and $(c'_{j,3}, x'_{i_{j,3},k_{j,3}})$ to $M$.
  \item If $C_{j}$ is $(1,0,0)$-satisfied by $A$, then add $(c'_{j,1}, a'_{j,1})$, $(c'_{j,2}, x'_{i_{j,2},k_{j,2}})$, $(d'_{j}, a'_{j,3})$, and $(c'_{j,3}, x'_{i_{j,3},k_{j,3}})$ to $M$.
  \item If $C_{j}$ is $(0,1,1)$-satisfied by $A$, then add $(c'_{j,1}, x'_{i_{j,1},k_{j,1}})$, $(c'_{j,2}, a'_{j,1})$, and $(d'_{j}, a'_{j,3})$ to $M$.
  \item If $C_{j}$ is $(1,0,1)$-satisfied by $A$, then add $(c'_{j,1}, a'_{j,1})$, $(c'_{j,2}, x'_{i_{j,2},k_{j,2}})$, and $(d'_{j}, a'_{j,3})$ to $M$.
  \item If $C_{j}$ is $(1,1,0)$-satisfied by $A$, then add $(c'_{j,1}, a'_{j,1})$, $(d'_{j}, a'_{j,3})$, and $(c'_{j,3}, x'_{i_{j,3},k_{j,3}})$ to $M$.
  \item If $C_{j}$ is $(1,1,1)$-satisfied by $A$, then add $(c'_{j,1}, a'_{j,1})$ and $(d'_{j}, a'_{j,3})$ to $M$.
 \end{itemize}
 For a $T_{j}$-gadget, add $(z'_{j},t'_{j})$ to $M$.
 It is easy to see that $M$ satisfies all the capacity constraints
 of hospitals and regions, and hence is a feasible{}{} matching.

 Next, we prove that $M$ is strongly stable in $I'$
 by showing that no hospital in $I'$ can be a part of an SBP{}{}.
For hospitals in $x_{i}$-gadget,
 we consider two cases.
 \begin{description}
  \item[{\boldmath $x_{i} = 0$.}]
	     For each $h \in \set{b'_{i,1},b'_{i,2},x'_{i,3}}$, $h$ is empty, but the region $h$ belongs to is full and no acceptable resident to $h$ is assigned to a hospital in the same region. %
	     For each $h \in \set{x'_{i,1},x'_{i,2},b'_{i,3},b'_{i,4}}$, $h$ is assigned the first-choice resident. %
  \item[{\boldmath $x_{i} = 1$}.]
	     For each $h \in \set{b'_{i,1},b'_{i,2},x'_{i,3}}$, $h$ is assigned the first-choice resident. %
	     For each $h \in \set{x'_{i,1},x'_{i,2}}$, $h$ is empty, but the region $h$ belongs to is full and no acceptable resident to $h$ is assigned to a hospital in the same region. %
	     For each $h \in \set{b'_{i,3},b'_{i,4}}$, $h$ is empty, but each acceptable resident to $h$ is assigned to a better hospital than $h$. %
 \end{description}
For hospitals in $C_{j}$-gadget ($j \in [1,m_2]$), we consider three cases.
       \begin{description}

\item[{\boldmath $C_{j}$ is $(0,1)$-satisfied.}]
					    Hospital $a'_{j,1}$ is assigned the second-choice resident $c'_{j,2}$, and the first-choice resident $c'_{j,1}$ is assigned to a better hospital $x'_{i_{j,1},k_{j,1}}$ than $a'_{j,1}$. %
					    Hospital $y'_{j}$ is empty, the region $y'_{j}$ belongs to is full, and $z'_{j}$ is not assigned to a hospital in the same region. %
\item[{\boldmath $C_{j}$ is $(1,0)$-satisfied.}]
					    Hospital $a'_{j,1}$ is assigned the first-choice resident $c'_{j,1}$. %
					    Hospital $y'_{j}$ is empty, the region $y'_{j}$ belongs to is full, and $z'_{j}$ is not assigned to a hospital in the same region. %
\item[{\boldmath $C_{j}$ is $(1,1)$-satisfied.}]
					    Hospital $a'_{j,1}$ is assigned the first-choice resident $c'_{j,1}$. %
					    Hospital $y'_{j}$ is empty, the region $y'_{j}$ belongs to is full, and $z'_{j}$ is not assigned to a hospital in the same region. %
       \end{description}
For hospitals in $C_{j}$-gadget ($j \in [m_2+1,m]$), we consider seven cases.
 \begin{description}

\item[{\boldmath $C_{j}$ is $(0,0,1)$-satisfied.}]
				      For each $h \in \set{a'_{j,1},y'_{j}}$, $h$ is empty, but the region $h$ belongs to is full and no acceptable resident to $h$ is assigned to a hospital in the same region. %
				      Hospital $a'_{j,2}$ is assigned the first-choice resident $d'_{j}$. %
				      Hospital $ a'_{j,3} $ is assigned the second-choice resident $ c'_{j,3} $, and the first-choice resident $ d'_{j} $ is assigned to a better hospital $ a'_{j,2} $ than $ a'_{j,3} $. %
\item[{\boldmath $C_{j}$ is $(0,1,0)$-satisfied.}]
				      Hospital $ a'_{j,1} $ is assigned the second-choice resident $ c'_{j,2} $, and the first-choice resident $ c'_{j,1} $ is assigned to a better hospital $ x'_{i_{j,1},k_{j,1}} $ than $ a'_{j,1} $. %
				      For each $h \in \set{a'_{j,2},y'_{j}}$, $h$ is empty, but the region $h$ belongs to is full and no acceptable resident to $h$ is assigned to a hospital in the same region. %
				      Hospital $a'_{j,3}$ is assigned the first-choice resident $d'_{j}$. %

\item[{\boldmath $C_{j}$ is $(1,0,0)$-satisfied.}]
				      For each $h \in \set{a'_{j,1},a'_{j,3}}$, $h$ is assigned the first-choice resident. %
				      For each $h \in \set{a'_{j,2},y'_{j}}$, $h$ is empty, but the region $h$ belongs to is full and no acceptable resident to $h$ is assigned to a hospital in the same region. %
\item[{\boldmath $C_{j}$ is $(0,1,1)$-satisfied.}]
				      Hospital $ a'_{j,1} $ is assigned the second-choice resident $ c'_{j,2} $, and the first-choice resident $ c'_{j,1} $ is assigned to a better hospital $ x'_{i_{j,1},k_{j,1}} $ than $ a'_{j,1} $. %
				      For each $h \in \set{a'_{j,2},y'_{j}}$, $h$ is empty, but the region $h$ belongs to is full and no acceptable resident to $h$ is assigned to a hospital in the same region. %
				      Hospital $a'_{j,3}$ is assigned the first-choice resident $d'_{j}$. %

\item[{\boldmath $C_{j}$ is $(1,0,1)$-satisfied.}]
				      For each $h \in \set{a'_{j,1},a'_{j,3}}$, $h$ is assigned the first-choice resident. %
				      For each $h \in \set{a'_{j,2},y'_{j}}$, $h$ is empty, but the region $h$ belongs to is full and no acceptable resident to $h$ is assigned to a hospital in the same region. %
\item[{\boldmath $C_{j}$ is $(1,1,0)$-satisfied.}]
				      For each $h \in \set{a'_{j,1},a'_{j,3}}$, $h$ is assigned the first-choice resident. %
				      For each $h \in \set{a'_{j,2},y'_{j}}$, $h$ is empty, but the region $h$ belongs to is full and no acceptable resident to $h$ is assigned to a hospital in the same region. %
\item[{\boldmath $C_{j}$ is $(1,1,1)$-satisfied.}]
				      For each $h \in \set{a'_{j,1},a'_{j,3}}$, $h$ is assigned the first-choice resident. %
				      For each $h \in \set{a'_{j,2},y'_{j}}$, $h$ is empty, but the region $h$ belongs to is full and no acceptable resident to $h$ is assigned to a hospital in the same region. %
 \end{description}
Consider the hospitals in $T_{j}$-gadget.
       For each $h \in \set{g'_{j,2},g'_{j,4}}$, $h$ is empty, but the region $h$ belongs to is full and no acceptable resident to $h$ is assigned to a hospital in the same region.
       Hospital $t'_{j}$ is assigned the first-choice resident $z'_{j}$.
 Thus, no hospital in $I'$ can be a part of an SBP{}{},
 and hence
 $M$ is a strongly stable matching{} of $I'$.

 Conversely, suppose that $I'$ admits a strongly stable matching{} $M$.
 Let $A$ be an assignment of $I$
 constructed as follows:
 if $(c'_{j_{i,3},\ell_{i,3}},x'_{i,3}) \in M$, then set $x_{i} = 1$;
 otherwise, set $x_{i} = 0$.
 We show that $A$ is a satisfying assignment of $I$.

 We first see some properties of $M$.
 For a $T_{j}$-gadget,
 $(z'_{j},y'_{j}) \not\in M$ holds.
 For, if $(z'_{j},y'_{j}) \in M$,
 then $g'_{j,1}$, $g'_{j,2}$, $g'_{j,3}$, and $g'_{j,4}$
 form the same structure as $G_2$ in \cref{L14}.
 Hence $M$ has an SBP{}{}, a contradiction.
 For a $C_{j}$-gadget ($j \in [1,m_2]$),
 we show that
 (a) either $(c'_{j,1},x'_{i_{j,1},k_{j,1}}) \not\in M$ or $(c'_{j,2},x'_{i_{j,2},k_{j,2}}) \not\in M$ holds.
 Suppose not.
 Then, both $(c'_{j,1},x'_{i_{j,1},k_{j,1}}) \in M$ and $(c'_{j,2},x'_{i_{j,2},k_{j,2}}) \in M$ hold.
 Recall that $(z'_{j},y'_{j}) \not\in M$.
 Since both $a'_{j,1}$ and $y'_{j}$ are empty
 and $y'_{j}$ is the first-choice hospital of $z'_{j}$,
 $(z'_{j}, y'_{j})$ is an SBP{}{}, a contradiction.
 For a $C_{j}$-gadget ($j \in [m_2+1,m]$),
 we show that
 (b) either $(c'_{j,1},x'_{i_{j,1},k_{j,1}}) \not\in M$, $(c'_{j,2},x'_{i_{j,2},k_{j,2}}) \not\in M$, or $(c'_{j,3},x'_{i_{j,3},k_{j,3}}) \not\in M$ holds.
 Suppose not.
 Then, all of $(c'_{j,1},x'_{i_{j,1},k_{j,1}}) \in M$, $(c'_{j,2},x'_{i_{j,2},k_{j,2}}) \in M$, and $(c'_{j,3},x'_{i_{j,3},k_{j,3}}) \in M$ hold.
 Since $a'_{j,1}$ is empty and $a'_{j,2}$ is the first-choice hospital of $d'_{j}$,
 $a'_{j,2}$ must be assigned $d'_{j}$ in $M$ to avoid forming an SBP{}{}.
 Then, both $(d'_{j},a'_{j,2}) \in M$ and $(c'_{j,3},x'_{i_{j,3},k_{j,3}}) \in M$ hold.
 Recall that $(z'_{j},y'_{j}) \not\in M$.
 Since $a'_{j,3}$ is empty and $y'_{j}$ is the first-choice hospital of $z'_{j}$,
 $(z'_{j}, y'_{j})$ is an SBP{}{}, a contradiction.
 For an $x_{i}$-gadget,
 we show that
 (c) either (i) $(c'_{j_{i,3},\ell_{i,3}},x'_{i,3}) \in M$ or (ii) $(c'_{j_{i,1},\ell_{i,1}},x'_{i,1}) \in M$ and $(c'_{j_{i,2},\ell_{i,2}},x'_{i,2}) \in M$ holds.
 Suppose that
 $(c'_{j_{i,3},\ell_{i,3}},x'_{i,3}) \not\in M$.
 Then since
 $x'_{i,3}$ is the first-choice of $c'_{j_{i,3},\ell_{i,3}}$,
 both $(e'_{i,1}, b'_{i,3}) \in M $ and $(e'_{i,2}, b'_{i,4}) \in M$ hold
 (otherwise, $( c'_{j_{i,3},\ell_{i,3}}, x'_{i,3} )$ is an SBP{}{} for $M$, a contradiction).
 Then,
 $b'_{i,1}$ and $b'_{i,2}$ are empty.
 If $(c'_{j_{i,1},\ell_{i,1}},x'_{i,1}) \not\in M$,
 then $( c'_{j_{i,1},\ell_{i,1}}, x'_{i,1} )$ is an SBP{}{} for $M$,
 since $c'_{j_{i,1},\ell_{i,1}}$ is the first-choice of $c'_{j_{i,1},\ell_{i,1}}$. %
 Similarly, if $(c'_{j_{i,2},\ell_{i,2}},x'_{i,2}) \not\in M$,
 then $( c'_{j_{i,2},\ell_{i,2}}, x'_{i,2} )$ is an SBP{}{} for $M$.
 Therefore, we have $(c'_{j_{i,1},\ell_{i,1}},x'_{i,1}) \in M$ and $(c'_{j_{i,2},\ell_{i,2}},x'_{i,2}) \in M$.

{

 By the construction of $A$, we have
 (d1) if $x_{i} = 1$, then $(c'_{j_{i,3},\ell_{i,3}},x'_{i,3}) \in M$; and
 (d2) if $x_{i} = 0$, then $(c'_{j_{i,3},\ell_{i,3}},x'_{i,3}) \not\in M$.
 By (d2) and (c), we have
 (d3) if $x_{i} = 0$, then $(c'_{j_{i,1},\ell_{i,1}},x'_{i,1}) \in M$ and $(c'_{j_{i,2},\ell_{i,2}},x'_{i,2}) \in M$.

 Now suppose that $A$ is not a satisfying assignment.
 Let $C_{j}$ be an unsatisfied clause
 and consider its $\ell$th literal ($\ell = 1, 2, 3$).
 Recall that this literal is of variable $x_{i_{j,\ell}}$.
 Consider three cases depending on whether $k_{j,\ell}$ is 1, 2, or 3.
 If $k_{j,\ell} = 1$,
 then the $\ell$th literal of $C_{j}$ is the first positive occurrence of $x_{i_{j,\ell}}$.
 Since $C_{j}$ is unsatisfied, $x_{i_{j,\ell}} = 0$ holds and by (d3),
 $(c'_{j,\ell},x'_{i_{j,\ell},1}) \in M$.
 If $k_{j,\ell} = 2$,
 then the $\ell$th literal of $C_{j}$ is the second positive occurrence of $x_{i_{j,\ell}}$,
 and by a similar argument, we have that
 $(c'_{j,\ell},x'_{i_{j,\ell},2}) \in M$.
 If $k_{j,\ell} = 3$,
 then the $\ell$th literal of $C_{j}$ is the (unique) negative occurrence of $x_{i_{j,\ell}}$.
 Since $C_{j}$ is unsatisfied, $x_{i_{j,\ell}} = 1$ holds and by (d1),
 $(c'_{j,\ell},x'_{i_{j,\ell},3}) \in M$.
 Thus, in any case, we have $(c'_{j,\ell},x'_{i_{j,\ell},k_{j,\ell}}) \in M$.
 When $C_{j}$ is a $2$-clause,
 we have both $(c'_{j,1},x'_{i_{j,1},k_{j,1}}) \in M$ and $(c'_{j,2},x'_{i_{j,2},k_{j,2}}) \in M$,
 but this contradicts (a).
 When $C_{j}$ is a $3$-clause,
 we have that $(c'_{j,1},x'_{i_{j,1},k_{j,1}}) \in M$, $(c'_{j,2},x'_{i_{j,2},k_{j,2}}) \in M$ and $(c'_{j,3},x'_{i_{j,3},k_{j,3}}) \in M$,
 but this contradicts (b).
 Thus, $A$ is a satisfying assignment of $I$, and the proof is completed.
}
\end{proof}

  \subsection{NP-completeness of \textsc{Strong-$(2,3,2)$-HRRCDR} }
  \label{L38}
\begin{theorem}
 \label{L9}

 \textsc{Strong-$(2,3,2)$-HRRCDR} is NP-complete.
\end{theorem}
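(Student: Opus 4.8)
Membership in NP is clear, so I describe the reduction and outline why it works; the details are close to those in the proof of \cref{L8}. The plan is to reduce from \textsc{PPN-3-SAT}{} (\cref{L64}) and to recycle, almost verbatim, the gadget architecture of that reduction: one variable gadget per variable, one $2$- or $3$-clause gadget per clause, and one terminal gadget per clause that hides a copy of the instance $G_{2}$ in \cref{L14}, which becomes ``active'' --- and hence supplies an SBP{}{} --- exactly when the corresponding clause is left unsatisfied. The only features of that construction incompatible with region size $2$ are the two size-three regions it uses, namely $\set{b'_{i,3}, b'_{i,4}, x'_{i,3}}$ inside each variable gadget and $\set{g'_{j,2}, g'_{j,4}, t'_{j}}$ inside each terminal gadget; the whole task is to redesign these two gadgets using only size-two regions, paying for it with a single hospital preference list of length three.

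For the terminal gadget the plan is to discard the hospital $t'_{j}$ and the resident $z'_{j}$, keep $g'_{j,1}, g'_{j,3}, g'_{j,2}, g'_{j,4}$ with their $G_{2}$-style preferences, and impose the size-two region $\set{g'_{j,2}, g'_{j,4}}$ of cap $1$. The resident $w_{j}$ that feeds this gadget (taking over the former role of $z'_{j}$) keeps $y'_{j}$ --- still a hospital of the clause gadget, still having $w_{j}$ as its only acceptable resident --- as first choice and takes $g'_{j,2}$ as second choice, while $g'_{j,2}$ is given the length-three list $w_{j}, g'_{j,3}, g'_{j,1}$. The intended behaviour: when the clause is satisfied the region containing $y'_{j}$ is full, $w_{j}$ cannot sit at $y'_{j}$, and is forced onto $g'_{j,2}$; then $g'_{j,2}$ holds its top choice, $\set{g'_{j,2}, g'_{j,4}}$ is full, and a short check shows the gadget is SBP{}{}-free. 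When the clause is unsatisfied the region of $y'_{j}$ is deficient{}, so $(w_{j}, y'_{j})$ is an SBP{}{} unless $w_{j}$ goes to $y'_{j}$; but then $g'_{j,2}$ effectively reduces to the list $g'_{j,3}, g'_{j,1}$, and $g'_{j,1}, g'_{j,3}, g'_{j,2}, g'_{j,4}$ together with the cap-$1$ region are precisely $G_{2}$, which admits an SBP{}{}.

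For the variable gadget the plan is to delete $b'_{i,4}$, raise the capacity of $b'_{i,3}$ to $2$ and give it the list $e'_{i,1}, e'_{i,2}$, set $e'_{i,1}\colon b'_{i,1}, b'_{i,3}$ and $e'_{i,2}\colon b'_{i,2}, b'_{i,3}$, and replace the size-three region by the size-two region $\set{b'_{i,3}, x'_{i,3}}$ of cap $2$ (the cap-$1$ regions $\set{b'_{i,1}, x'_{i,1}}$ and $\set{b'_{i,2}, x'_{i,2}}$ are kept unchanged). Now $x'_{i,3}$ can be left empty without blocking only if the region $\set{b'_{i,3}, x'_{i,3}}$ is full, which --- as $x'_{i,3}$ is empty --- forces $b'_{i,3}$ to carry two residents, hence forces both $e'_{i,1}$ and $e'_{i,2}$ to miss their first choices. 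This reproduces, with size-two regions only, the mechanism that the cap-$2$ size-three region provided in the proof of \cref{L8}, and should re-establish the analogue of property~(c) there: in any strongly stable matching of the produced instance, either $x'_{i,3}$ is assigned its external clause-resident, or both $x'_{i,1}$ and $x'_{i,2}$ are. All remaining gadgets --- the $2$-clause and $3$-clause gadgets --- are copied without change. Then every resident list has length at most $2$, every hospital list has length at most $3$ (only $g'_{j,2}$ attaining $3$), and every region has size at most $2$, so the produced instance is a $(2,3,2)$-\textsc{HRRCDR}{} instance; the reduction is plainly polynomial.

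Correctness would then be argued exactly as for \cref{L8}. From a satisfying assignment $A$ one builds a feasible{}{} matching $M$ by using $\set{(e'_{i,1}, b'_{i,1}), (e'_{i,2}, b'_{i,2})}$ when $x_{i}=1$ and $\set{(e'_{i,1}, b'_{i,3}), (e'_{i,2}, b'_{i,3})}$ when $x_{i}=0$ in each variable gadget, the same clause-gadget assignment as in \cref{L8}, and $(w_{j}, g'_{j,2})$ with $g'_{j,1}, g'_{j,3}$ unassigned in each terminal gadget, and checking gadget by gadget that $M$ admits no SBP{}{}. Conversely, from a strongly stable $M$ one first observes that $(w_{j}, y'_{j})\notin M$ for every $j$ (otherwise the $G_{2}$ embedded in the terminal gadget yields an SBP{}{}), then runs the clause-gadget deductions of \cref{L8} (whose properties~(a) and~(b) carry over verbatim), then appeals to the variable-gadget property~(c) above to define $A$ by $x_{i}=1 \iff (c'_{j_{i,3},\ell_{i,3}}, x'_{i,3})\in M$, and concludes, as in \cref{L8}, that $A$ satisfies every clause. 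The main obstacle --- the only place needing genuinely new reasoning --- is re-proving property~(c) for the redesigned variable gadget (this is exactly why $b'_{i,3}$ needs capacity $2$ rather than $1$) and re-checking that the two redesigned gadgets, and their interface through $y'_{j}$, introduce no SBP{}{}; the rest is bookkeeping inherited from the proof of \cref{L8}.
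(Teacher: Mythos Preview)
Your proposal is correct and follows the same overall architecture as the paper: reduce from \textsc{PPN-3-SAT}{}, keep the clause gadgets of \cref{L8} verbatim, and redesign only the variable and terminal gadgets so that no region has size three while one hospital list is allowed length three. Your terminal gadget is in fact identical to the paper's (up to renaming $z'_{j}$ to $w_{j}$): the paper also drops $t'_{j}$, keeps the region $\set{g'_{j,2},g'_{j,4}}$ of cap $1$, and gives $g'_{j,2}$ the length-three list $z'_{j},g'_{j,3},g'_{j,1}$.

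The one genuine difference is the variable gadget. You merge $b'_{i,3}$ and $b'_{i,4}$ into a single capacity-$2$ hospital $b'_{i,3}$ with list $e'_{i,1},e'_{i,2}$, keep $x'_{i,3}$ unchanged from \cref{L8}, and add a size-two region $\set{b'_{i,3},x'_{i,3}}$ of cap $2$; this works (your derivation of property~(c) goes through exactly as you sketch). The paper instead dispenses with $b'_{i,3}$ and $b'_{i,4}$ altogether and absorbs their role into $x'_{i,3}$ itself: it sets $q(x'_{i,3})=2$, gives $x'_{i,3}$ the length-three list $e'_{i,1},e'_{i,2},c'_{j_{i,3},\ell_{i,3}}$, routes $e'_{i,1}$ and $e'_{i,2}$ to $x'_{i,3}$ as their second choices, and uses no region around $x'_{i,3}$ at all. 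Property~(c) then follows because, if $x'_{i,3}$ is not full with $e'_{i,1}$ and $e'_{i,2}$, the pair $(c'_{j_{i,3},\ell_{i,3}},x'_{i,3})$ is an ordinary (hence strong) blocking pair, no regional-cap argument needed. The paper's design saves one hospital and one region per variable at the cost of spending the length-$3$ budget twice (at $x'_{i,3}$ and at $g'_{j,2}$), whereas yours spends it only once (at $g'_{j,2}$); both are well within the $(2,3,2)$ bounds, and the remaining correctness argument is identical.
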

\begin{proof}
{

\newcommand{\hq}[1]{q\hifempty{#1}{}{(#1)}} %

\newcommand{\hc}[1]{c\hifempty{#1}{}{(#1)}} %

\newcommand{\hpreflist}[2]{
\begin{tabular}{rccccccccccccccccccccccc}
 #1
\end{tabular}
\hspace{1cm}
\begin{tabular}{rccccccccccccccccccccccc}
 #2
\end{tabular}
}

 Membership in NP is obvious.
 We show a reduction from an NP-complete problem \textsc{PPN-3-SAT}{}.
 Let $I$ be an instance of \textsc{PPN-3-SAT}{} having
 $n$ variables $x_{i}$ ($i \in [1,n]$) and
 $m$ clauses $C_{j}$ ($j \in [1,m]$).
 For $k = 2, 3$, we call a clause containing $k$ literals a
 {\em $k$-clause}.
 Suppose that there are
 $m_2$ $2$-clauses
 and
 $m_3$ $3$-clauses
 (thus $m_2 + m_3 = m$),
 and assume without loss of generality that
 $C_{j}$ ($j \in [1,m_2]$) are $2$-clauses
 and
 $C_{j}$ ($j \in [m_2+1,m]$) are $3$-clauses.
 For each variable $x_{i}$, we construct a
 {\em variable gadget{}}.
 It consists of
 two residents $e'_{i,1}$ and $e'_{i,2}$,
 five hospitals
 $b'_{i,1}$, $x'_{i,1}$,
 $b'_{i,2}$, $x'_{i,2}$, and
 $x'_{i,3}$,
 and
 two regions
 $\set{b'_{i,1}, x'_{i,1}}$ and
 $\set{b'_{i,2}, x'_{i,2}}$.
 A variable gadget{} corresponding to $x_{i}$ is called an {\em $x_{i}$-gadget}.
 For each clause $C_{j}$,
 we construct a {\em clause gadget{}}.
 If $C_{j}$ is a $2$-clause, we create
 two residents $c'_{j,1}$ and $c'_{j,2}$,
 two hospitals $a'_{j,1}$ and $y'_{j}$,
 and
 a region $\set{a'_{j,1}, y'_{j}}$.
 If $C_{j}$ is a $3$-clause, we create
 four residents $c'_{j,1}$, $c'_{j,2}$, $d'_{j}$, and $c'_{j,3}$,
 four hospitals $a'_{j,1}$, $a'_{j,2}$, $a'_{j,3}$, and $y'_{j}$,
 and
 two regions $\set{a'_{j,1}, a'_{j,2}}$ and $\set{a'_{j,3}, y'_{j}}$.
 A clause gadget{} corresponding to $C_{j}$ is called a {\em $C_{j}$-gadget}.
 For each clause $C_{j}$,
 we also construct a {\em terminal gadget{}}.
 We create
 three residents $g'_{j,1}$, $g'_{j,3}$, and $z'_{j}$,
 two hospitals $g'_{j,2}$ and $g'_{j,4}$,
 and
 a region
 $\set{g'_{j,2}, g'_{j,4}}$.
 A terminal gadget{} corresponding to $C_{j}$ is called a {\em $T_{j}$-gadget}.
 Thus, there are
 $2n + 2m_2 + 4m_3 + 3m$
 residents,
 $5n + 2m_2 + 4m_3 + 2m$
 hospitals, and
 $2n +  m_2 + 2m_3 +  m$
 regions
 in the created \textsc{HRRC}{} instance, denoted $I'$.

 \begin{figure}[tbp]
  \centering
  \hpreflist{
  $e'_{i,1}$: & $b'_{i,1}$ & $x'_{i,3}$ \\
  \\
  \\
  $e'_{i,2}$: & $b'_{i,2}$ & $x'_{i,3}$ \\
  \\
  \\
  \\
  }{
  $b'_{i,1}[1]$: & $e'_{i,1}$ \\
  $x'_{i,1}[1]$: & $\underline{ c'_{j_{i,1},\ell_{i,1}} }$ \\ %
  \\
  $b'_{i,2}[1]$: & $e'_{i,2}$ \\
  $x'_{i,2}[1]$: & $\underline{ c'_{j_{i,2},\ell_{i,2}} }$ \\ %
  \\
  $x'_{i,3}[2]$: & $e'_{i,1}$ & $e'_{i,2}$ & $\underline{ c'_{j_{i,3},\ell_{i,3}} }$ \\ %
  } \\
  \begin{tabular}{c}
  $\hc{ \set{b'_{i,1}, x'_{i,1}} } := 1$,
  $\hc{ \set{b'_{i,2}, x'_{i,2}} } := 1$
  \end{tabular}
  \caption{Preference lists, capacities, and regional caps of $x_{i}$-gadget. }
  \label{L69}
 \end{figure}
 Let us construct preference lists of variable gadgets.
 Suppose that $x_{i}$'s $k$th positive occurrence ($k = 1, 2$) is in the
 $j_{i,k}$th clause $C_{j_{i,k}}$ as
 the $\ell_{i,k}$th literal ($1 \le \ell_{i,k} \le 3$).
 Similarly, suppose that $x_{i}$'s negative occurrence is in the
 $j_{i,3}$th clause $C_{j_{i,3}}$ as the $\ell_{i,3}$th literal ($1 \le \ell_{i,3} \le 3$).
 Then, preference lists, capacities, and regional caps of the $x_{i}$-gadget are
 constructed as shown in \cref{L69}.
 The clause gadget{}s are exactly the same as that
in the proof of \cref{L8}.
 Define $k_{j,\ell}$ as
 \[
 k_{j,\ell} := \pcase{ll}{
 1 & \text{if this is the first positive occurrence of $x_{i_{j,\ell}}$,} \\
 2 & \text{if this is the second positive occurrence of $x_{i_{j,\ell}}$,} \\
 3 & \text{if this is the negative occurrence of $x_{i_{j,\ell}}$.}
 }
 \]
The preference lists, capacities, and regional caps of the $C_{j}$-gadget
for a $2$-clause (respectively, a $3$-clause) are given in \cref{L66} (respectively, \cref{L67}).
 \begin{figure}[tbp]
  \centering
  \hpreflist{
  $g'_{j,1}$: & $g'_{j,2}$ & $g'_{j,4}$ \\
  $g'_{j,3}$: & $g'_{j,4}$ & $g'_{j,2}$ \\
  $z'_{j}$: & $\underline{ y'_{j} }$ & $g'_{j,2}$ \\
  }{
  $g'_{j,2}[1]$: & $z'_{j}$ & $g'_{j,3}$ & $g'_{j,1}$ \\
  $g'_{j,4}[1]$: &          & $g'_{j,1}$ & $g'_{j,3}$ \\
  \\
  } \\
  \begin{tabular}{c}
  $\hc{ \set{g'_{j,2}, g'_{j,4}} } := 1$
  \end{tabular}
  \caption{Preference lists, capacities, and regional caps of $T_{j}$-gadget. }
  \label{L70}
 \end{figure}
 Finally, we construct preference lists of terminal gadget{}s.
 The preference lists, capacities, and regional caps of
 the $T_{j}$-gadget are shown in \cref{L70}.
 Now the reduction is completed.
 It is not hard to see that the reduction
 can be performed in polynomial time and
 $I'$ is an instance of $(2,3,2)$-\textsc{HRRCDR}{}.

 Now we start a correctness proof.
 First, suppose that $I$ is satisfiable and
 let $A$ be a satisfying assignment.
 We construct a strongly stable matching{} $M$ of
 $I'$ from $A$ as follows.
 For an $x_{i}$-gadget,
 we consider the following two cases.
 \begin{itemize}
  \item If $x_{i} = 0$ by $A$, then add $(e'_{i,1}, x'_{i,3})$ and $(e'_{i,2}, x'_{i,3})$ to $M$.
  \item If $x_{i} = 1$ by $A$, then add $(e'_{i,1}, b'_{i,1})$ and $(e'_{i,2}, b'_{i,2})$ to $M$.
 \end{itemize}
 For a $C_{j}$-gadget ($j \in [1,m_2]$),
 we use the same construction as in the proof of \cref{L8}.
 For a $C_{j}$-gadget ($j \in [m_2+1,m]$),
 we again use the same construction as in the proof of \cref{L8}.
 For a $T_{j}$-gadget, we add $(z'_{j}, g'_{j,2})$ to $M$.
 It is easy to see that $M$ satisfies all the capacity constraints
 of hospitals and regions, and hence is a feasible{}{} matching.

 Next, we prove that $M$ is strongly stable in $I'$
 by showing that no hospital in $I'$ can be a part of an SBP{}{}.
For hospitals in $x_{i}$-gadget,
 we consider two cases.
 \begin{description}
  \item[{\boldmath $x_{i} = 0$.}]
	     For each $h \in \set{b'_{i,1},b'_{i,2}}$, $h$ is empty, but the region $h$ belongs to is full and no acceptable resident to $h$ is assigned to a hospital in the same region.
	     For each $h \in \set{x'_{i,1},x'_{i,2}}$, $h$ is assigned the first-choice resident.
	     Hospital $x'_{i,3}$,
	     whose capacity is two,
	     is full and assigned its
	     first-choice resident $e'_{i,1}$ and
	     second-choice resident $e'_{i,2}$
  \item[{\boldmath $x_{i} = 1$}.]
	     For each $h \in \set{b'_{i,1},b'_{i,2}}$, $h$ is assigned the first-choice resident.
	     For each $h \in \set{x'_{i,1},x'_{i,2}}$, $h$ is empty, but the region $h$ belongs to is full and no acceptable resident to $h$ is assigned to a hospital in the same region.
	     Hospital $x'_{i,3}$
	     is undersubscribed,
	     but
	     $e'_{i,1}$ is assigned to
	     a better hospital than $x'_{i,3}$
	     and
	     $e'_{i,2}$ is also assigned to
	     a better hospital than $x'_{i,3}$.
 \end{description}
No hospital in $C_{j}$-gadget can be a part of an SBP{}{}
 as shown in \cref{L8}.
 Consider the hospitals in $T_{j}$-gadget.
 Hospital $g'_{j,2}$ is assigned the first-choice resident $z'_{j}$.
 Hospital $g'_{j,4}$ is empty, but the region $g'_{j,4}$ belongs to is full and no acceptable resident to $g'_{j,4}$ is assigned to a hospital in the same region.

 Conversely, suppose that $I'$ admits a strongly stable matching{} $M$.
 Let $A$ be an assignment of $I$
 constructed as follows:
 if $(c'_{j_{i,3},\ell_{i,3}},x'_{i,3}) \in M$, then set $x_{i} = 1$;
 otherwise, set $x_{i} = 0$.
 We show that $A$ is a satisfying assignment of $I$.

 We first see some properties of $M$.
 For a $T_{j}$-gadget,
 $(z'_{j},y'_{j}) \not\in M$ holds.
 For, if $(z'_{j},y'_{j}) \in M$,
 then $g'_{j,1}$, $g'_{j,2}$, $g'_{j,3}$, and $g'_{j,4}$
 form the same structure as $G_2$ in \cref{L14}.
 Hence $M$ has an SBP{}{}, a contradiction.
 For a $C_{j}$-gadget ($j \in [1,m_2]$),
 as shown in \cref{L8},
 we have that
 (a) either $(c'_{j,1},x'_{i_{j,1},k_{j,1}}) \not\in M$ or $(c'_{j,2},x'_{i_{j,2},k_{j,2}}) \not\in M$ holds.
 For a $C_{j}$-gadget ($j \in [m_2+1,m]$),
 as shown in \cref{L8},
 we have that
 (b) either $(c'_{j,1},x'_{i_{j,1},k_{j,1}}) \not\in M$, $(c'_{j,2},x'_{i_{j,2},k_{j,2}}) \not\in M$, or $(c'_{j,3},x'_{i_{j,3},k_{j,3}}) \not\in M$ holds.
 For an $x_{i}$-gadget,
 we show that
 (c) either (i) $(c'_{j_{i,3},\ell_{i,3}},x'_{i,3}) \in M$ or (ii) $(c'_{j_{i,1},\ell_{i,1}},x'_{i,1}) \in M$ and $(c'_{j_{i,2},\ell_{i,2}},x'_{i,2}) \in M$ holds.
 Suppose that
 $(c'_{j_{i,3},\ell_{i,3}},x'_{i,3}) \not\in M$.
 Then since
 $x'_{i,3}$ is the first-choice hospital of $c'_{j_{i,3},\ell_{i,3}}$.
 $x'_{i,3}$ must be assigned
 both $e'_{i,1}$ and $e'_{i,2}$ to avoid forming an SBP{}{}.
 Since $e'_{i,1}$ prefers $b'_{i,1}$ to $x'_{i,3}$
 and $e'_{i,1}$ is the first-choice resident of $b'_{i,1}$,
 the region $\set{b'_{i,1},x'_{i,1}}$ must be full{}
 to avoid $(e'_{i,1},b'_{i,1})$ being an SBP{}{}.
 To make $\set{b'_{i,1},x'_{i,1}}$ full{},
 $x'_{i,1}$ must be assigned to $c'_{j_{i,1},\ell_{i,1}}$.
 By the same argument,
 $x'_{i,2}$ must also be assigned to $c'_{j_{i,2},\ell_{i,2}}$.
 Therefore, $(c'_{j_{i,1},\ell_{i,1}},x'_{i,1}) \in M$ and $(c'_{j_{i,2},\ell_{i,2}},x'_{i,2}) \in M$ hold.

{

 By the construction of $A$, we have
 (d1) if $x_{i} = 1$, then $(c'_{j_{i,3},\ell_{i,3}},x'_{i,3}) \in M$; and
 (d2) if $x_{i} = 0$, then $(c'_{j_{i,3},\ell_{i,3}},x'_{i,3}) \not\in M$.
 By (d2) and (c), we have
 (d3) if $x_{i} = 0$, then $(c'_{j_{i,1},\ell_{i,1}},x'_{i,1}) \in M$ and $(c'_{j_{i,2},\ell_{i,2}},x'_{i,2}) \in M$.

 Now suppose that $A$ is not a satisfying assignment.
 Let $C_{j}$ be an unsatisfied clause
 and consider its $\ell$th literal ($\ell = 1, 2, 3$).
 Recall that this literal is of variable $x_{i_{j,\ell}}$.
 Consider three cases depending on whether $k_{j,\ell}$ is 1, 2, or 3.
 If $k_{j,\ell} = 1$,
 then the $\ell$th literal of $C_{j}$ is the first positive occurrence of $x_{i_{j,\ell}}$.
 Since $C_{j}$ is unsatisfied, $x_{i_{j,\ell}} = 0$ holds and by (d3),
 $(c'_{j,\ell},x'_{i_{j,\ell},1}) \in M$.
 If $k_{j,\ell} = 2$,
 then the $\ell$th literal of $C_{j}$ is the second positive occurrence of $x_{i_{j,\ell}}$,
 and by a similar argument, we have that
 $(c'_{j,\ell},x'_{i_{j,\ell},2}) \in M$.
 If $k_{j,\ell} = 3$,
 then the $\ell$th literal of $C_{j}$ is the (unique) negative occurrence of $x_{i_{j,\ell}}$.
 Since $C_{j}$ is unsatisfied, $x_{i_{j,\ell}} = 1$ holds and by (d1),
 $(c'_{j,\ell},x'_{i_{j,\ell},3}) \in M$.
 Thus, in any case, we have $(c'_{j,\ell},x'_{i_{j,\ell},k_{j,\ell}}) \in M$.
 When $C_{j}$ is a $2$-clause,
 we have both $(c'_{j,1},x'_{i_{j,1},k_{j,1}}) \in M$ and $(c'_{j,2},x'_{i_{j,2},k_{j,2}}) \in M$,
 but this contradicts (a).
 When $C_{j}$ is a $3$-clause,
 we have that $(c'_{j,1},x'_{i_{j,1},k_{j,1}}) \in M$, $(c'_{j,2},x'_{i_{j,2},k_{j,2}}) \in M$ and $(c'_{j,3},x'_{i_{j,3},k_{j,3}}) \in M$,
 but this contradicts (b).
 Thus, $A$ is a satisfying assignment of $I$, and the proof is completed.
}
}
\end{proof}

  \subsection{NP-completeness of \textsc{Strong-$(3,2,2)$-HRRCDR} }
  \label{L37}
\begin{theorem}%
 \label{L10}

 \textsc{Strong-$(3,2,2)$-HRRCDR} is NP-complete.

\end{theorem}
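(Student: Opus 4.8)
The plan is a polynomial-time reduction from \textsc{PPN-3-SAT}{} (NP-complete by \cref{L64}); membership of \textsc{Strong-$(3,2,2)$-HRRCDR} in NP is immediate. The architecture will follow the proof of \cref{L8}: from a \textsc{PPN-3-SAT}{}-instance $I$ with $n$ variables and $m$ clauses of size $2$ or $3$, I build an \textsc{HRRCDR}-instance $I'$ consisting of a variable gadget per $x_i$, a clause gadget per $C_j$, and a terminal gadget per $C_j$ built around the unsolvable instance $G_2$ of \cref{L14}. As in \cref{L8}, the variable gadget will have two ``proper'' configurations encoding $x_i=0$ and $x_i=1$, under which (respectively) the occurrence-hospital $x'_{i,3}$ is safe or the occurrence-hospitals $x'_{i,1},x'_{i,2}$ are safe; the clause gadget will relay ``$C_j$ satisfied'' into ``the clause region $E_j$ full''; and the terminal gadget must produce a forced SBP exactly when $C_j$ is unsatisfied. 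The clause gadgets of \cref{L8} (\cref{L66,L67}) already meet all three bounds, so they can be reused essentially unchanged.

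The heart of the argument is replacing the two places where \cref{L8} uses a size-$3$ region --- the variable-gadget region $\set{b'_{i,3},b'_{i,4},x'_{i,3}}$ and the terminal-gadget region $\set{g'_{j,2},g'_{j,4},t'_j}$ --- because here regions must have size $\le 2$; and since hospital lists must also have length $\le 2$, the \cref{L9} trick of a length-$3$ hospital list is unavailable too, so the \emph{only} slack is that residents may now have lists of length $3$. I would therefore move the ``switch'' of each $G_2$-type gadget from the region side to a resident. Concretely, keep the size-$2$, cap-$1$ region $\set{g'_{j,2},g'_{j,4}}$ that forces the $G_2$ behaviour, but give one of its residents, say $g'_{j,1}$, a length-$3$ list whose \emph{top} choice is an ``escape'' hospital $p_j$, and wire $p_j$ (through an auxiliary resident and the clause gadget, imitating the $z'_j$--$y'_j$--$t'_j$ chain of \cref{L8}) so that $p_j$ is filled by a resident it prefers to $g'_{j,1}$ precisely when $C_j$ is unsatisfied. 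If $C_j$ is satisfied, $g'_{j,1}$ takes $p_j$, then $g'_{j,3}\to g'_{j,4}$ fills the region and nothing in the terminal gadget blocks; if $C_j$ is unsatisfied, $p_j$ is unavailable to $g'_{j,1}$, whose effective list becomes $g'_{j,2}\succ g'_{j,4}$, so the terminal gadget is literally $G_2$ and has no strongly stable matching. It is essential that $p_j$ be $g'_{j,1}$'s \emph{top} choice: since $G_2$ already saturates both the size-$2$ region bound and the length-$2$ hospital-list bound, $g'_{j,1}$ must never wish to leave $p_j$, for otherwise the hospital $g'_{j,4}$, which prefers $g'_{j,1}$ to $g'_{j,3}$, would block with $g'_{j,1}$ via condition~(ii) of \cref{L13}. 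The variable-gadget region of size $3$ would be re-engineered in the same spirit: split into size-$2$ regions, with one $e'$-resident given a length-$3$ list to retain exactly the two proper configurations.

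With the gadgets fixed, both directions follow the template of \cref{L8}. Forward: from a satisfying assignment $A$, set each variable gadget by $A$, route each clause gadget's $c'$-residents so that $a'_{j,1}$ (or $a'_{j,3}$) holds a resident of a true literal, send the auxiliary residents and each $g'_{j,1}$ to their escape hospitals, fill each terminal region via $g'_{j,3}$, and verify hospital by hospital --- by the familiar case analysis on truth patterns --- that no hospital lies in an SBP. Converse: from a strongly stable $M$, derive the forcing facts ($z'_j$ cannot be placed to expose a bare $G_2$; a clause's $c'$-residents cannot all be pushed onto unsafe occurrence-hospitals unless $a'_{j,1}$ or the escape hospital is occupied; tracing the chain back, $p_j$ free forces $C_j$ to have a true literal), then set $x_i:=1$ iff $(c'_{j_{i,3},\ell_{i,3}},x'_{i,3})\in M$ and check $A$ satisfies $I$.

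The step I expect to be hardest is pinning down the escape hospital $p_j$ and its wiring so that all three parameter bounds hold simultaneously and the biconditional ``$p_j$ occupied $\iff$ $C_j$ unsatisfied'' is tight in both directions: on the forward side one must ensure $g'_{j,1}$'s escape violates neither $p_j$'s capacity nor any region cap, and on the converse side one must exclude a spurious escape --- i.e.\ show that whenever $C_j$ is unsatisfied the resident outranking $g'_{j,1}$ genuinely sits in $p_j$. Once the terminal and variable gadgets are in place, the remaining verification is long but routine, in the style of \cref{L8,L9}.
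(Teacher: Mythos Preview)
Your skeleton is correct and your terminal-gadget insight --- give one of the $G_2$ residents a length-$3$ list with an escape hospital at the \emph{top} --- is exactly what the paper does: its $z'_j$ has list $y'_j \succ g'_{j,2} \succ g'_{j,4}$, and $z'_j,g'_{j,3},g'_{j,2},g'_{j,4}$ degenerate to $G_2$ precisely when $y'_j$ is unavailable. The gap is in the wiring you defer. Reusing the \cref{L8} clause gadgets unchanged produces the wrong polarity for that mechanism: those gadgets signal ``$C_j$ satisfied'' by making the clause region \emph{full}, whereas your escape-at-top terminal needs a hospital $p_j$ that is \emph{occupied by a preferred resident} exactly when $C_j$ is \emph{un}satisfied. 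Any auxiliary resident $w$ you hang off $y'_j$ is forced \emph{into} $y'_j$ when the region is deficient (unsatisfied) and forced \emph{out} when it is full (satisfied) --- so $w$ then lands in $p_j$ when $C_j$ is satisfied, blocking $g'_{j,1}$ in the very case you need the escape open. Chaining another such step, or wrapping $p_j$ in a fresh size-$2$ region, preserves rather than inverts this polarity. This is precisely the ``hardest step'' you flag, and it does not close with the \cref{L8} clause gadget in place.

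The paper resolves it by inverting the semantics at the source rather than at the sink. In its variable gadget the occurrence hospital $x'_{i,k}$ is \emph{occupied by an $e'$-resident} when the corresponding literal is false and \emph{available} when it is true --- the reverse of \cref{L8}'s safe-via-full-region convention (and achieved with a single size-$2$ region $\{b'_{i,1},b'_{i,2}\}$ of cap $2$ and capacity-$2$ hospitals, not with a length-$3$ $e'$-list). With that inversion the clause gadget is rebuilt \emph{region-free}: each $c'_{j,\ell}$ falls to its own $a'_{j,\ell}$ when its literal is false, and a new resident $u'_{j,1}$ with the length-$3$ list $a'_{j,1}\succ a'_{j,2}\succ y'_j$ (for $3$-clauses, a second $u'_{j,2}$ and an intermediate stage $w'_j,d'_j,a'_{j,4},a'_{j,3}$ repeat the trick) cascades to $y'_j$ exactly when every literal of $C_j$ is false. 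Thus $y'_j$ is occupied by its first choice iff $C_j$ is unsatisfied, giving the escape hospital the right polarity directly. Accordingly the converse rule also flips: the paper sets $x_i:=1$ iff $(c'_{j_{i,3},\ell_{i,3}},x'_{i,3})\notin M$, not $\in M$ as in your plan.
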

\begin{proof}
{

\newcommand{\hq}[1]{q\hifempty{#1}{}{(#1)}} %

\newcommand{\hc}[1]{c\hifempty{#1}{}{(#1)}} %

\newcommand{\hpreflist}[2]{
\begin{tabular}{rccccccccccccccccccccccc}
 #1
\end{tabular}
\hspace{1cm}
\begin{tabular}{rccccccccccccccccccccccc}
 #2
\end{tabular}
}

 Membership in NP is obvious.
 We show a reduction from an NP-complete problem \textsc{PPN-3-SAT}{}.
 Let $I$ be an instance of \textsc{PPN-3-SAT}{} having
 $n$ variables $x_{i}$ ($i \in [1,n]$) and
 $m$ clauses $C_{j}$ ($j \in [1,m]$).
 For $k = 2, 3$, we call a clause containing $k$ literals a
 {\em $k$-clause}.
 Suppose that there are
 $m_2$ $2$-clauses
 and
 $m_3$ $3$-clauses
 (thus $m_2 + m_3 = m$),
 and assume without loss of generality that
 $C_{j}$ ($j \in [1,m_2]$) are $2$-clauses
 and
 $C_{j}$ ($j \in [m_2+1,m]$) are $3$-clauses.
 For each variable $x_{i}$, we construct a
 {\em variable gadget{}}.
 It consists of
 four residents $e'_{i,1}$, $e'_{i,2}$, $e'_{i,3}$, and $e'_{i,4}$,
 five hospitals $x'_{i,1}$, $x'_{i,2}$, $x'_{i,3}$, $b'_{i,1}$, and $b'_{i,2}$, and
 a region $\set{b'_{i,1}, b'_{i,2}}$.
 A variable gadget{} corresponding to $x_{i}$ is called an {\em $x_{i}$-gadget}.
 For each clause $C_{j}$,
 we construct a {\em clause gadget{}}.
 If $C_{j}$ is a $2$-clause, we create
 three residents $c'_{j,1}$, $c'_{j,2}$, and $u'_{j,1}$,
 and
 three hospitals $a'_{j,1}$, $a'_{j,2}$, and $y'_{j}$.
 If $C_{j}$ is a $3$-clause, we create
 six residents $c'_{j,1}$, $c'_{j,2}$, $u'_{j,1}$, $d'_{j}$, $c'_{j,3}$, and $u'_{j,2}$,
 and
 six hospitals $a'_{j,1}$, $a'_{j,2}$, $w'_{j}$, $a'_{j,4}$, $a'_{j,3}$, and $y'_{j}$.
 A clause gadget{} corresponding to $C_{j}$ is called a {\em $C_{j}$-gadget}.
 For each clause $C_{j}$,
 we also construct a {\em terminal gadget{}}.
 We create
 two residents $z'_{j}$ and $g'_{j,3}$,
 two hospitals $g'_{j,2}$ and $g'_{j,4}$, and
 a region $\set{g'_{j,2}, g'_{j,4}}$.
 A terminal gadget{} corresponding to $C_{j}$ is called a {\em $T_{j}$-gadget}.
 Thus, there are
 $4n + 3m_2 + 6m_3 + 2m$ residents, 
 $5n + 3m_2 + 6m_3 + 2m$ hospitals, and
 $n + m$ regions
 in the created \textsc{HRRC}{} instance, denoted $I'$.

 \begin{figure}[tbp]
  \centering
  \hpreflist{
  $e'_{i,1}$: & $b'_{i,1}$ & $x'_{i,1}$ \\
  \\
  $e'_{i,2}$: & $b'_{i,1}$ & $x'_{i,2}$ \\
  \\
  $e'_{i,3}$: & $b'_{i,2}$ & $x'_{i,3}$ \\
  \\
  \\
  $e'_{i,4}$: & $b'_{i,2}$ \\
  }{
  $x'_{i,1}[1]$: & $e'_{i,1}$ & $\underline{ c'_{j_{i,1},\ell_{i,1}} }$ \\
  \\
  $x'_{i,2}[1]$: & $e'_{i,2}$ & $\underline{ c'_{j_{i,2},\ell_{i,2}} }$ \\
  \\
  $x'_{i,3}[1]$: & $e'_{i,3}$ & $\underline{ c'_{j_{i,3},\ell_{i,3}} }$ \\
  \\
  $b'_{i,1}[2]$: & $e'_{i,2}$ & $e'_{i,1}$ \\
  $b'_{i,2}[2]$: & $e'_{i,4}$ & $e'_{i,3}$ \\
  } \\
  \begin{tabular}{c}
  $\hc{ \set{b'_{i,1}, b'_{i,2}} } := 2$
  \end{tabular}
  \caption{Preference lists, capacities, and regional caps of $x_{i}$-gadget. }
  \label{L71}
 \end{figure}
 Let us construct preference lists of variable gadgets.
 Suppose that $x_{i}$'s $k$th positive occurrence ($k = 1, 2$) is in the
 $j_{i,k}$th clause $C_{j_{i,k}}$ as
 the $\ell_{i,k}$th literal ($1 \le \ell_{i,k} \le 3$).
 Similarly, suppose that $x_{i}$'s negative occurrence is in the
 $j_{i,3}$th clause $C_{j_{i,3}}$ as the $\ell_{i,3}$th literal ($1 \le \ell_{i,3} \le 3$).
 Then, preference lists, capacities, and regional caps of the $x_{i}$-gadget are
 constructed as shown in \cref{L71}.
 \begin{figure}[tbp]
  \centering
  \hpreflist{
  $c'_{j,1}$: & $\underline{ x'_{i_{j,1},k_{j,1}} }$ & $a'_{j,1}$ \\
  $c'_{j,2}$: & $\underline{ x'_{i_{j,2},k_{j,2}} }$ & $a'_{j,2}$ \\
  $u'_{j,1}$: & $a'_{j,1}$ & $a'_{j,2}$ & $y'_{j}$ \\
  }{
  $a'_{j,1}[1]$: & $c'_{j,1}$ & $u'_{j,1}$ \\
  $a'_{j,2}[1]$: & $c'_{j,2}$ & $u'_{j,1}$ \\
  $y'_{j}[1]$: & $u'_{j,1}$ & $\underline{ z'_{j} }$ \\
  } \\
  \begin{tabular}{c}
   (No regional cap is defined.)
  \end{tabular}
  \caption{Preference lists, capacities, and regional caps of $C_{j}$-gadget ($j \in [1,m_2]$). }
  \label{L72}
 \end{figure}
 \begin{figure}[tbp]
  \centering
  \hpreflist{
  $c'_{j,1}$: & $\underline{ x'_{i_{j,1},k_{j,1}} }$ & $a'_{j,1}$ \\
  $c'_{j,2}$: & $\underline{ x'_{i_{j,2},k_{j,2}} }$ & $a'_{j,2}$ \\
  $u'_{j,1}$: & $a'_{j,1}$ & $a'_{j,2}$ & $w'_{j}$ \\
  \\
  $d'_{j}$:   & $w'_{j}$    & $a'_{j,4}$ \\
  $c'_{j,3}$: & $\underline{ x'_{i_{j,3},k_{j,3}} }$ & $a'_{j,3}$ \\
  $u'_{j,2}$: & $a'_{j,4}$ & $a'_{j,3}$ & $y'_{j}$ \\
  }{
  $a'_{j,1}[1]$: & $c'_{j,1}$ & $u'_{j,1}$ \\
  $a'_{j,2}[1]$: & $c'_{j,2}$ & $u'_{j,1}$ \\
  $w'_{j}[1]$: & $u'_{j,1}$ & $d'_{j}$ \\
  \\
  $a'_{j,4}[1]$: & $d'_{j}$   & $u'_{j,2}$ \\
  $a'_{j,3}[1]$: & $c'_{j,3}$ & $u'_{j,2}$ \\
  $y'_{j}[1]$: & $u'_{j,2}$ & $\underline{ z'_{j} }$ \\
  } \\
  \begin{tabular}{c}
   (No regional cap is defined.)
  \end{tabular}
  \caption{Preference lists, capacities, and regional caps of $C_{j}$-gadget ($j \in [m_2+1,m]$). }
  \label{L73}
 \end{figure}
 We then construct preference lists of clause gadget{}s.
 Consider a clause $C_{j}$, and suppose that its $\ell$th literal is of a
 variable $x_{i_{j,\ell}}$.
 Define $k_{j,\ell}$ as
 \[
 k_{j,\ell} := \pcase{ll}{
 1 & \text{if this is the first positive occurrence of $x_{i_{j,\ell}}$,} \\
 2 & \text{if this is the second positive occurrence of $x_{i_{j,\ell}}$,} \\
 3 & \text{if this is the negative occurrence of $x_{i_{j,\ell}}$.}
 }
 \]
 If $C_{j}$ is a $2$-clause (respectively, $3$-clause),
 then the preference lists, capacities, and regional caps of the $C_{j}$-gadget are shown in
 \cref{L72} (respectively, \cref{L73}).
 \begin{figure}[tbp]
  \centering
  \hpreflist{
  $z'_{j}$:   & $\underline{ y'_{j} }$ & $g'_{j,2}$ & $g'_{j,4}$ \\
  $g'_{j,3}$: &          & $g'_{j,4}$ & $g'_{j,2}$ \\
  }{
  $g'_{j,2}[1]$: & $g'_{j,3}$ & $z'_{j}$ \\
  $g'_{j,4}[1]$: & $z'_{j}$ & $g'_{j,3}$ \\
  } \\
  \begin{tabular}{c}
   $\hc{ \set{g'_{j,2}, g'_{j,4}} } := 1$
  \end{tabular}
  \caption{Preference lists, capacities, and regional caps of $T_{j}$-gadget. }
  \label{L74}
 \end{figure}
 Finally, we construct preference lists of terminal gadget{}s.
 The preference lists, capacities, and regional caps of
 the $T_{j}$-gadget are shown in \cref{L74}.
 Now the reduction is completed.
 It is not hard to see that the reduction
 can be performed in polynomial time and
 $I'$ is an instance of $(3,2,2)$-\textsc{HRRCDR}{}.

 It might be helpful to roughly explain intuition behind the construction of $I'$.
 Consider the $x_{i}$-gadget in \cref{L71}.
 Three hospitals $x'_{i,1}$, $x'_{i,2}$, and $x'_{i,3}$ respectively
 correspond to the first positive occurrence,
 the second positive occurrence, and the negative occurrence of $x_{i}$.
 There are two ``proper'' assignment for an $x_{i}$-gadget,
 $M_{i,0}=\set{ (e'_{i,1}, x'_{i,1}), (e'_{i,2}, x'_{i,2}), (e'_{i,3}, b'_{i,2}), (e'_{i,4}, b'_{i,2}) }$
 and
 $M_{i,1}=\set{ (e'_{i,1}, b'_{i,1}), (e'_{i,2}, b'_{i,1}), (e'_{i,3}, x'_{i,3}) }$.
 We associate an assignment $x_{i}=0 $ with $M_{i,0}$ and $x_{i}=1$ with $M_{i,1}$.
 If $M_{i,0}$ is chosen, the hospital $x'_{i,3}$,
 corresponding to $\overline{x_{i}}$,
 is ``available'' in the sense that
 $x'_{i,3}$ is currently unassigned
 and can be assigned to $c'_{j_{i,3},\ell_{i,3}}$ later.
 Similarly, if $M_{i,1}$ is chosen,
 the hospitals $x'_{i,1}$ and $x'_{i,2}$,
 corresponding to $x_{i}$,
 are available.
 Note that a matching other than these two are not beneficial.
 For example, adopting a matching
 $\set{ (e'_{i,1}, x'_{i,1}), (e'_{i,2}, b'_{i,1}), (e'_{i,3}, x'_{i,3}), (e'_{i,4}, b'_{i,2}) }$
 makes $x'_{i,1}$ and $x'_{i,3}$ assigned,
 so only $x'_{i,2}$ is available, which is strictly worse than choosing $M_{i,1}$.

 Next, consider a $C_{j}$-gadget for a $2$-clause $C_{j}$ and see \cref{L72}.
 The residents $c'_{j,1}$ and $c'_{j,2}$
 correspond to the first and the second literals of $C_{j}$, respectively.
 If the $\ell$th literal ($\ell=1, 2$) takes the value 0 by an assignment,
 then $c'_{j,\ell}$ must be assigned to the second-choice hospital $a'_{j,\ell}$
 to avoid forming an SBP{}{} because
 the first-choice hospital $x'_{i_{j,\ell},k_{j,\ell}}$ is ``unavailable'' as we saw above
 and $c'_{j,\ell}$ is the first-choice hospital of $a'_{j,\ell}$.
 Therefore, if $C_{j}$ is unsatisfied by an assignment,
 both hospitals $a'_{j,1}$ and $a'_{j,2}$ are assigned
 their first-choice residents,
 $u'_{j,1}$ cannot be assigned to neither $a'_{j,1}$ nor $a'_{j,2}$.
 This forces $u'_{j,1}$ to be assigned to $y'_{j}$,
 since $u'_{j,1}$ is the first-choice resident of $y'_{j}$.
 Conversely, if $C_{j}$ is satisfied,
 one of $c'_{j,1}$ and $c'_{j,2}$ can be assigned to
 the hospital corresponding to a literal taking the value 1.
 Then, $u'_{j,1}$ can be assigned to at least one of $a'_{j,1}$ and $a'_{j,2}$.
 The argument for a $3$-clause is a bit more complicated,
 but the same conclusion holds:
 $C_{j}$ is satisfied if and only if
 $y'_{j}$ is not assigned its first-choice hospital.
 Then $y'_{j}$ is available if and only if $C_{j}$ is satisfied.

 Note that the hospital $y'_{j}$ in the $C_{j}$-gadget
 is the first-choice hospital of the resident $z'_{j}$ in the $T_{j}$-gadget.
 Therefore, if $y'_{j}$ is unavailable,
 we cannot assign $z'_{j}$ to $y'_{j}$.
 In this case, the $T_{j}$-gadget reduces to the instance $G_{2}$ in \cref{L14},
 so at least one SBP{}{} yields from this gadget.
 On the other hand, if $y'_{j}$ is available,
 we can assign $z'_{j}$ to $y'_{j}$,
 so the $T_{j}$-gadget will not create an SBP{}{}.
 In this way, we can associate a satisfying assignment of $I$
 with a strongly stable matching{} of $I'$. 

 Now we start a formal correctness proof.
 First, suppose that $I$ is satisfiable and
 let $A$ be a satisfying assignment.
 We construct a strongly stable matching{} $M$ of
 $I'$ from $A$ as follows.
 We say that a $2$-clause $C$ is
 {\em $(v_1,v_2)$-satisfied} by $A$ ($v_i\in \set{0,1}$)
 if the first and the second literals of $C$ take the values $v_1$ and $v_2$,
 respectively, in $A$.
 A $3$-clause is defined to be $(v_1,v_2,v_3)$-satisfied analogously.
 For an $x_{i}$-gadget,
 we consider the following two cases.
 \begin{itemize}
  \item If $x_{i} = 0$ by $A$, then add
	$(e'_{i,1}, x'_{i,1})$,
	$(e'_{i,2}, x'_{i,2})$,
	$(e'_{i,3}, b'_{i,2})$, and
	$(e'_{i,4}, b'_{i,2})$
	to $M$.
  \item If $x_{i} = 1$ by $A$, then add
	$(e'_{i,1}, b'_{i,1})$,
	$(e'_{i,2}, b'_{i,1})$, and
	$(e'_{i,3}, x'_{i,3})$
	to $M$.
 \end{itemize}
 For a $C_{j}$-gadget ($j \in [1,m_2]$), we consider the following three cases.
 \begin{itemize}

  \item If $C_{j}$ is $(0,1)$-satisfied by $A$, then add $(c'_{j,1}, a'_{j,1})$, $(c'_{j,2},x'_{i_{j,2},k_{j,2}})$, and $(u'_{j,1}, a'_{j,2})$ to $M$.
  \item If $C_{j}$ is $(1,0)$-satisfied by $A$, then add $(c'_{j,1},x'_{i_{j,1},k_{j,1}})$, $(c'_{j,2}, a'_{j,2})$, and $(u'_{j,1}, a'_{j,1})$ to $M$.
  \item If $C_{j}$ is $(1,1)$-satisfied by $A$, then add $(c'_{j,1},x'_{i_{j,1},k_{j,1}})$, $(c'_{j,2},x'_{i_{j,2},k_{j,2}})$, and $(u'_{j,1},a'_{j,1})$ to $M$.
 \end{itemize}
 For a $C_{j}$-gadget ($j \in [m_2+1,m]$), we consider the following seven cases.
 \begin{itemize}

  \item If $C_{j}$ is $(0,0,1)$-satisfied by $A$, then add $(c'_{j,1}, a'_{j,1})$, $(c'_{j,2}, a'_{j,2})$, $(u'_{j,1}, w'_{j})$, $(d'_{j}, a'_{j,4})$, $(c'_{j,3}, x'_{i_{j,3},k_{j,3}})$, $(u'_{j,2}, a'_{j,3})$ to $M$.
  \item If $C_{j}$ is $(0,1,0)$-satisfied by $A$, then add $(c'_{j,1}, a'_{j,1})$, $(c'_{j,2}, x'_{i_{j,2},k_{j,2}})$, $(u'_{j,1}, a'_{j,2})$, $(d'_{j,}, w'_{j})$, $(c'_{j,3}, a'_{j,3})$,$(u'_{j,2}, a'_{j,4})$ to $M$.
  \item If $C_{j}$ is $(1,0,0)$-satisfied by $A$, then add $(c'_{j,1}, x'_{i_{j,1},k_{j,1}})$, $(c'_{j,2}, a'_{j,2})$, $(u'_{j,1}, a'_{j,1})$, $(d'_{j,}, w'_{j})$, $(c'_{j,3}, a'_{j,3})$, $(u'_{j,2}, a'_{j,4})$ to $M$.
  \item If $C_{j}$ is $(0,1,1)$-satisfied by $A$, then add $(c'_{j,1}, a'_{j,1})$, $(c'_{j,2}, x'_{i_{j,2},k_{j,2}})$, $(u'_{j,1}, a'_{j,2})$, $(d'_{j,}, w'_{j})$, $(c'_{j,3}, x'_{i_{j,3},k_{j,3}})$, $(u'_{j,2}, a'_{j,4})$ to $M$.
  \item If $C_{j}$ is $(1,0,1)$-satisfied by $A$, then add $(c'_{j,1}, x'_{i_{j,1},k_{j,1}})$, $(c'_{j,2}, a'_{j,2})$, $(u'_{j,1}, a'_{j,1})$, $(d'_{j,}, w'_{j})$, $(c'_{j,3}, x'_{i_{j,3},k_{j,3}})$, $(u'_{j,2}, a'_{j,4})$ to $M$.
  \item If $C_{j}$ is $(1,1,0)$-satisfied by $A$, then add $(c'_{j,1}, x'_{i_{j,1},k_{j,1}})$, $(c'_{j,2}, x'_{i_{j,2},k_{j,2}})$, $(u'_{j,1}, a'_{j,1})$, $(d'_{j,}, w'_{j})$, $(c'_{j,3}, a'_{j,3})$, $(u'_{j,2}, a'_{j,4})$ to $M$.
  \item If $C_{j}$ is $(1,1,1)$-satisfied by $A$, then add $(c'_{j,1}, x'_{i_{j,1},k_{j,1}})$, $(c'_{j,2}, x'_{i_{j,2},k_{j,2}})$, $(u'_{j,1}, a'_{j,1})$, $(d'_{j,}, w'_{j})$, $(c'_{j,3}, x'_{i_{j,3},k_{j,3}})$, $(u'_{j,2}, a'_{j,4})$ to $M$.
 \end{itemize}
 For a $T_{j}$-gadget, add $(z'_{j},y'_{j})$ and $(g'_{3},g'_{4})$ to $M$.
 It is easy to see that $M$ satisfies all the capacity constraints
 of hospitals and regions, and hence is a feasible{}{} matching.

 Next, we prove that $M$ is strongly stable in $I'$
 by showing that no hospital in $I'$ can be a part of an SBP{}{}.
For hospitals in $x_{i}$-gadget,
 we consider two cases.
 \begin{description}
  \item[{\boldmath $x_{i} = 0$.}]
	     For each $h \in \set{x'_{i,1},x'_{i,2}}$, $h$ is assigned the first-choice resident. %
	     Hospital $x'_{i,3}$ is assigned the second-choice resident $c'_{j_{i,3},\ell_{i,3}}$, and the first-choice resident $e'_{i,3}$ is assigned to a better hospital $b'_{i,2}$ than $x'_{i,3}$. %
	     Hospital $b'_{i,1}$ is empty, but the region $b'_{i,1}$ belongs to is full and no acceptable resident to $b'_{i,1}$ is assigned to a hospital in the same region. %
	     Hospital $b'_{i,2}$ is assigned all acceptable residents. %
  \item[{\boldmath $x_{i} = 1$}.]
	     For each $h \in \set{x'_{i,1},x'_{i,2}}$, $h$ is assigned the second-choice resident, and the first-choice resident is assigned to a better hospital than $h$. %
	     Hospital $x'_{i,3}$ is assigned the first-choice resident $e'_{i,3}$. %
	     Hospital $b'_{i,1}$ is assigned all acceptable residents. %
	     Hospital $b'_{i,2}$ is empty, but the region $b'_{i,2}$ belongs to is full and no acceptable resident to $b'_{i,2}$ is assigned to a hospital in the same region. %
 \end{description}
For hospitals in $C_{j}$-gadget ($j \in [1,m_2]$), we consider three cases.
 \begin{description}

\item[{\boldmath $C_{j}$ is $(0,1)$-satisfied.}]
				    Hospital $a'_{j,1}$ is assigned the first-choice resident $c'_{j,1}$. %
				    For each $h \in \set{a'_{j,2},y'_{j}}$, $h$ is assigned the second-choice resident, and the first-choice resident is assigned to a better hospital than $h$. %
\item[{\boldmath $C_{j}$ is $(1,0)$-satisfied.}]
				    For each $h \in \set{a'_{j,1},y'_{j}}$, $h$ is assigned the second-choice resident, and the first-choice resident is assigned to a better hospital than $h$. %
				    Hospital $a'_{j,2}$ is assigned the first-choice resident $c'_{j,2}$. %
\item[{\boldmath $C_{j}$ is $(1,1)$-satisfied.}]
				    For each $h \in \set{a'_{j,1},y'_{j}}$, $h$ is assigned the second-choice resident, and the first-choice resident is assigned to a better hospital than $h$. %
				    Hospital $a'_{j,2}$ is empty, but each acceptable resident to $a'_{j,2}$ is assigned to a better hospital than $a'_{j,2}$. %
 \end{description}
For hospitals in $C_{j}$-gadget ($j \in [m_2+1,m]$), we consider seven cases.

 \begin{description}

\item[{\boldmath $C_{j}$ is $(0,0,1)$-satisfied.}]
				    For each $h \in \set{a'_{j,1},a'_{j,2},w'_{j},a'_{j,4}}$, $h$ is assigned the first-choice resident. %
				    For each $h \in \set{a'_{j,3},y'_{j}}$, $h$ is assigned the second-choice resident, and the first-choice resident is assigned to a better hospital than $h$. %
\item[{\boldmath $C_{j}$ is $(0,1,0)$-satisfied.}]
				    For each $h \in \set{a'_{j,1},a'_{j,3}}$, $h$ is assigned the first-choice resident. %
				    For each $h \in \set{a'_{j,2},w'_{j},a'_{j,4},y'_{j}}$, $h$ is assigned the second-choice resident, and the first-choice resident is assigned to a better hospital than $h$. %
\item[{\boldmath $C_{j}$ is $(1,0,0)$-satisfied.}]
				    For each $h \in \set{a'_{j,1},w'_{j},a'_{j,4},y'_{j}}$, $h$ is assigned the second-choice resident, and the first-choice resident is assigned to a better hospital than $h$. %
				    For each $h \in \set{a'_{j,2},a'_{j,3}}$, $h$ is assigned the first-choice resident. %
\item[{\boldmath $C_{j}$ is $(0,1,1)$-satisfied.}]
				    Hospital $a'_{j,1}$ is assigned the first-choice resident $c'_{j,1}$. %
				    For each $h \in \set{a'_{j,2},w'_{j},a'_{j,4},y'_{j}}$, $h$ is assigned the second-choice resident, and the first-choice resident is assigned to a better hospital than $h$. %
				    Hospital $a'_{j,3}$ is empty, but each acceptable resident to $a'_{j,3}$ is assigned to a better hospital than $a'_{j,3}$. %
\item[{\boldmath $C_{j}$ is $(1,0,1)$-satisfied.}]
				    For each $h \in \set{a'_{j,1},w'_{j},a'_{j,4},y'_{j}}$, $h$ is assigned the second-choice resident, and the first-choice resident is assigned to a better hospital than $h$. %
				    Hospital $a'_{j,2}$ is assigned the first-choice resident $c'_{j,2}$. %
				    Hospital $a'_{j,3}$ is empty, but each acceptable resident to $a'_{j,3}$ is assigned to a better hospital than $a'_{j,3}$. %
\item[{\boldmath $C_{j}$ is $(1,1,0)$-satisfied.}]
				    For each $h \in \set{a'_{j,1},w'_{j},a'_{j,4},y'_{j}}$, $h$ is assigned the second-choice resident, and the first-choice resident is assigned to a better hospital than $h$. %
				    Hospital $a'_{j,2}$ is empty, but each acceptable resident to $a'_{j,2}$ is assigned to a better hospital than $a'_{j,2}$. %
				    Hospital $a'_{j,3}$ is assigned the first-choice resident $c'_{j,3}$. %
\item[{\boldmath $C_{j}$ is $(1,1,1)$-satisfied.}]
				    For each $h \in \set{a'_{j,1},w'_{j},a'_{j,4},y'_{j}}$, $h$ is assigned the second-choice resident, and the first-choice resident is assigned to a better hospital than $h$. %
				    For each $h \in \set{a'_{j,2},a'_{j,3}}$, $h$ is empty, but each acceptable resident to $h$ is assigned to a better hospital than $h$. %
 \end{description}
Consider the hospitals in $T_{j}$-gadget.
 Hospital $g'_{j,2}$ is empty, but each acceptable resident to $g'_{j,2}$ is assigned to a better hospital than $g'_{j,2}$. %
 Hospital $g'_{j,4}$ is assigned the second-choice resident $g'_{j,3}$, and the first-choice resident $z'_{j}$ is assigned to a better hospital $y'_{j}$ than $g'_{j,4}$. %

 Conversely, suppose that $I'$ admits a strongly stable matching{} $M$.
 Let $A$ be an assignment of $I$
 constructed as follows:
 if $(c'_{j_{i,3},\ell_{i,3}},x'_{i,3}) \not\in M$, then set $x_{i} = 1$;
 otherwise, set $x_{i} = 0$.
 We show that $A$ is a satisfying assignment of $I$.

 We first see some properties of $M$.
 For a $T_{j}$-gadget,
 $(z'_{j},y'_{j}) \in M$ holds.
 For, if $(z'_{j},y'_{j}) \not\in M$,
 then $z'_{j}$, $g'_{j,2}$, $g'_{j,3}$, and $g'_{j,4}$
 form the same structure as $G_2$ in \cref{L14}.
 Hence $M$ has an SBP{}{}, a contradiction.
 For a $C_{j}$-gadget ($j \in [1,m_2]$),
 we show that
 (a) either $(c'_{j,1},x'_{i_{j,1},k_{j,1}}) \in M$ or $(c'_{j,2},x'_{i_{j,2},k_{j,2}}) \in M$ holds,
 by using $(z'_{j},y'_{j}) \in M$.
 Suppose not.
 Then, both $(c'_{j,1},x'_{i_{j,1},k_{j,1}}) \not\in M$ and $(c'_{j,2},x'_{i_{j,2},k_{j,2}}) \not\in M$ hold.
 Since $c'_{j,1}$ is unassigned and $c'_{j,1}$ is the first-choice resident of $a'_{j,1}$,
 $c'_{j,1}$ must be assigned to $a'_{j,1}$ in $M$ to avoid forming an SBP{}{}.
 Similarly,
 since $c'_{j,2}$ is unassigned and $c'_{j,2}$ is the first-choice resident of $a'_{j,2}$,
 $c'_{j,2}$ must be assigned to $a'_{j,2}$. %
 Thus, $u'_{j,1}$ is not assigned to $a'_{j,1}$ nor $a'_{j,2}$.
 Recall that $(z'_{j},y'_{j}) \in M$.
 Since $u'_{j,1}$ is unassigned and $u'_{j,1}$ is the first-choice hospital of $y'_{j}$,
 $(u'_{j,1},y'_{j})$ is an SBP{}{}, a contradiction.
 For a $C_{j}$-gadget ($j \in [m_2+1,m]$),
 we show that
 (b) either $(c'_{j,1},x'_{i_{j,1},k_{j,1}}) \in M$, $(c'_{j,2},x'_{i_{j,2},k_{j,2}}) \in M$, or $(c'_{j,3},x'_{i_{j,3},k_{j,3}}) \in M$ holds.
 Suppose not.
 Then, all of $(c'_{j,1},x'_{i_{j,1},k_{j,1}}) \not\in M$, $(c'_{j,2},x'_{i_{j,2},k_{j,2}}) \not\in M$, and $(c'_{j,3},x'_{i_{j,3},k_{j,3}}) \not\in M$ hold.
 For contradiction, suppose $(d'_{j},w'_{j}) \in M$.
 Since
 $c'_{j,1}$, $c'_{j,2}$, $u'_{j,1}$, $a'_{j,1}$, $a'_{j,2}$, and $w'_{j}$
 have the same structure as
 $c'_{j',1}$, $c'_{j',2}$, $u'_{j',1}$, $a'_{j',1}$, $a'_{j',2}$, and $y'_{j'}$
 in $C_{j'}$-gadget ($j' \in [1,m_2]$)
 and
 we have
 $(c'_{j,1},x'_{i_{j,1},k_{j,1}}) \not\in M$,
 $(c'_{j,2},x'_{i_{j,2},k_{j,2}}) \not\in M$, and
 $(d'_{j},w'_{j}) \in M$,
 we can derive a contradiction by the same argument as above.
 Thus, we have $(d'_{j},w'_{j}) \not\in M$.
 Since
 $d'_{j}$, $c'_{j,3}$, $u'_{j,2}$, $a'_{j,4}$, $a'_{j,3}$, and $y'_{j}$
 have the same structure as
 $c'_{j',1}$, $c'_{j',2}$, $u'_{j',1}$, $a'_{j',1}$, $a'_{j',2}$, and $y'_{j'}$
 in $C_{j'}$-gadget ($j' \in [1,m_2]$),
 and
 we have 
 $(d'_{j},w'_{j}) \not\in M$,
 $(c'_{j,3},x'_{i_{j,3},k_{j,3}}) \not\in M$, and
 $(z'_{j},y'_{j}) \in M$,
 we can derive a contradiction again by the same argument as above.
 Thus, we have (b).
 For an $x_{i}$-gadget,
 we show that 
 (c) either (i) $(c'_{j_{i,3},\ell_{i,3}},x'_{i,3}) \not\in M$ or (ii) $(c'_{j_{i,1},\ell_{i,1}},x'_{i,1}) \not\in M$ and $(c'_{j_{i,2},\ell_{i,2}},x'_{i,2}) \not\in M$ holds.
 Suppose that
 $(c'_{j_{i,3},\ell_{i,3}},x'_{i,3}) \in M$.
 Then since
 $e'_{i,3}$ is the first-choice resident of $x'_{i,3}$,
 $e'_{i,3}$ must be assigned to $b'_{i,2}$ in $M$
 to avoid forming an SBP{}{}.
 Since 
 $b'_{i,2}$ prefers $e'_{i,4}$ to $e'_{i,3}$,
 $e'_{i,4}$ must also be assigned to $b'_{i,2}$ in $M$
 to avoid forming an SBP{}{}.
 Then since region $\set{ b'_{i,1}, b'_{i,2} }$ is full{},
 we have
 $(e'_{i,1}, b'_{i,1}) \not\in M$ and
 $(e'_{i,2}, b'_{i,2}) \not\in M$.
 Since $e'_{i,1}$ is unassigned and $e'_{i,1}$ is the first-choice resident of $x'_{i,1}$,
 $x'_{i,1}$ must be assigned $e'_{i,1}$ to avoid forming an SBP{}{}.
 Similarly, $x'_{i,2}$ must also be assigned $e'_{i,2}$.
 Thus, (ii) holds.

{

 By the construction of $A$, we have
 (d1) if $x_{i} = 1$, then $(c'_{j_{i,3},\ell_{i,3}},x'_{i,3}) \not\in M$; and
 (d2) if $x_{i} = 0$, then $(c'_{j_{i,3},\ell_{i,3}},x'_{i,3}) \in M$.
 By (d2) and (c), we have
 (d3) if $x_{i} = 0$, then $(c'_{j_{i,1},\ell_{i,1}},x'_{i,1}) \not\in M$ and $(c'_{j_{i,2},\ell_{i,2}},x'_{i,2}) \not\in M$.

 Now suppose that $A$ is not a satisfying assignment.
 Let $C_{j}$ be an unsatisfied clause
 and consider its $\ell$th literal ($\ell = 1, 2, 3$).
 Recall that this literal is of variable $x_{i_{j,\ell}}$.
 Consider three cases depending on whether $k_{j,\ell}$ is 1, 2, or 3.
 If $k_{j,\ell} = 1$,
 then the $\ell$th literal of $C_{j}$ is the first positive occurrence of $x_{i_{j,\ell}}$.
 Since $C_{j}$ is unsatisfied, $x_{i_{j,\ell}} = 0$ holds and by (d3),
 $(c'_{j,\ell},x'_{i_{j,\ell},1}) \not\in M$.
 If $k_{j,\ell} = 2$,
 then the $\ell$th literal of $C_{j}$ is the second positive occurrence of $x_{i_{j,\ell}}$,
 and by a similar argument, we have that
 $(c'_{j,\ell},x'_{i_{j,\ell},2}) \not\in M$.
 If $k_{j,\ell} = 3$,
 then the $\ell$th literal of $C_{j}$ is the (unique) negative occurrence of $x_{i_{j,\ell}}$.
 Since $C_{j}$ is unsatisfied, $x_{i_{j,\ell}} = 1$ holds and by (d1),
 $(c'_{j,\ell},x'_{i_{j,\ell},3}) \not\in M$.
 Thus, in any case, we have $(c'_{j,\ell},x'_{i_{j,\ell},k_{j,\ell}}) \not\in M$.
 When $C_{j}$ is a $2$-clause,
 we have both $(c'_{j,1},x'_{i_{j,1},k_{j,1}}) \not\in M$ and $(c'_{j,2},x'_{i_{j,2},k_{j,2}}) \not\in M$,
 but this contradicts (a).
 When $C_{j}$ is a $3$-clause,
 we have that $(c'_{j,1},x'_{i_{j,1},k_{j,1}}) \not\in M$, $(c'_{j,2},x'_{i_{j,2},k_{j,2}}) \not\in M$ and $(c'_{j,3},x'_{i_{j,3},k_{j,3}}) \not\in M$,
 but this contradicts (b).
 Thus, $A$ is a satisfying assignment of $I$, and the proof is completed.
}
}
\end{proof}

\bibliography{sm}

\newcommand{\etalchar}[1]{$^{#1}$}
\begin{thebibliography}{APRS05}

\bibitem[ABB20]{ABB20}
Haris Aziz, Anton Baychkov, and P\'{e}ter Bir\'{o}.
\newblock Summer internship matching with funding constraints.
\newblock In {\em Proceedings of the 19th International Conference on
  Autonomous Agents and MultiAgent Systems}, pages 97--104. International
  Foundation for Autonomous Agents and Multiagent Systems, May 2020.

\bibitem[APR05]{APR05}
Atila Abdulkadiro{\u{g}}lu, Parag~A. Pathak, and Alvin~E. Roth.
\newblock The {New} {York} city high school match.
\newblock {\em American Economic Review}, 95(2):364--367, May 2005.

\bibitem[APRS05]{APRS05}
Atila Abdulkadiro{\u{g}}lu, Parag~A. Pathak, Alvin~E. Roth, and Tayfun
  S{\"o}nmez.
\newblock The {Boston} public school match.
\newblock {\em American Economic Review}, 95(2):368--371, May 2005.

\bibitem[BFIM10]{BFIM10}
P{\'{e}}ter Bir{\'{o}}, Tam{\'{a}}s Fleiner, Robert~W. Irving, and David~F.
  Manlove.
\newblock The college admissions problem with lower and common quotas.
\newblock {\em Theoretical Computer Science}, 411(34-36):3136--3153, Jul 2010.

\bibitem[Coo71]{Coo71}
Stephen~A. Cook.
\newblock The complexity of theorem-proving procedures.
\newblock In {\em Proceedings of the 3rd Annual {ACM} Symposium on Theory of
  Computing}, pages 151--158. {ACM}, May 1971.

\bibitem[GI89]{GI89}
Dan Gusfield and Robert~W. Irving.
\newblock {\em The Stable Marriage Problem: Structure and Algorithms}.
\newblock Foundations of computing series. {MIT} Press, 1989.

\bibitem[GIK{\etalchar{+}}16]{GIK+16}
Masahiro Goto, Atsushi Iwasaki, Yujiro Kawasaki, Ryoji Kurata, Yosuke Yasuda,
  and Makoto Yokoo.
\newblock Strategyproof matching with regional minimum and maximum quotas.
\newblock {\em Artificial Intelligence}, 235:40--57, Jun 2016.

\bibitem[GS62]{GS62}
David Gale and Lloyd~S. Shapley.
\newblock College admissions and the stability of marriage.
\newblock {\em The American Mathematical Monthly}, 69(1):9--15, 1962.

\bibitem[GS85]{GS85}
David Gale and Marilda Sotomayor.
\newblock Some remarks on the stable matching problem.
\newblock {\em Discrete Applied Mathematics}, 11(3):223--232, Jul 1985.

\bibitem[HIM16]{HIM16}
Koki Hamada, Kazuo Iwama, and Shuichi Miyazaki.
\newblock The {Hospitals/Residents} problem with lower quotas.
\newblock {\em Algorithmica}, 74(1):440--465, Jan 2016.

\bibitem[KK10]{KK10}
Yuichiro Kamada and Fuhito Kojima.
\newblock Improving efficiency in matching markets with regional caps: The case
  of the {Japan Residency Matching Program}.
\newblock {\em Discussion Papers, Stanford Institute for Economic Policy
  Research}, 1, 2010.

\bibitem[KK12]{KK12}
Yuichiro Kamada and Fuhito Kojima.
\newblock Stability and strategy-proofness for matching with constraints: A
  problem in the {Japanese} medical match and its solution.
\newblock {\em American Economic Review}, 102(3):366--70, May 2012.

\bibitem[KK15]{KK15}
Yuichiro Kamada and Fuhito Kojima.
\newblock Efficient matching under distributional constraints: Theory and
  applications.
\newblock {\em American Economic Review}, 105(1):67--99, Jan 2015.

\bibitem[KK17]{KK17}
Yuichiro Kamada and Fuhito Kojima.
\newblock Stability concepts in matching under distributional constraints.
\newblock {\em Journal of Economic Theory}, 168:107--142, Mar 2017.

\bibitem[KK18]{KK18}
Yuichiro Kamada and Fuhito Kojima.
\newblock Stability and strategy-proofness for matching with constraints: A
  necessary and sufficient condition.
\newblock {\em Theoretical Economics}, 13(2):761--793, May 2018.

\bibitem[Rot84]{Rot84}
Alvin~E. Roth.
\newblock The evolution of the labor market for medical interns and residents:
  A case study in game theory.
\newblock {\em Journal of political Economy}, 92(6):991--1016, Dec 1984.

\bibitem[Rot86]{Rot86}
Alvin~E. Roth.
\newblock On the allocation of residents to rural hospitals: A general property
  of two-sided matching markets.
\newblock {\em Econometrica}, 54(2):425--427, Mar 1986.

\bibitem[Sch78]{Sch78}
Thomas~J. Schaefer.
\newblock The complexity of satisfiability problems.
\newblock In {\em Proceedings of the 10th Annual {ACM} Symposium on Theory of
  Computing}, pages 216--226. {ACM}, May 1978.

\bibitem[Tov84]{Tov84}
Craig~A. Tovey.
\newblock A simplified {NP}-complete satisfiability problem.
\newblock {\em Discrete Applied Mathematics}, 8(1):85--89, Apr 1984.

\end{thebibliography}

\end{document}